\newcommand{\C}{{\mathbb C}}
\newcommand{\R}{{\mathbb R}}
\newcommand{\bbT}{{\mathbb T}}
\newcommand{\one}{\mathbbm{1}}
\newcommand{\cH}{{\mathcal H}}
\newcommand{\SU}{\mathrm{SU}}
\newcommand{\SL}{\mathrm{SL}}
\newcommand{\be}{\begin{equation}}
\newcommand{\ee}{\end{equation}}
\newcommand{\beq}{\begin{eqnarray}}
\newcommand{\eeq}{\end{eqnarray}}
\newcommand{\bea}{\begin{eqnarray}}
\newcommand{\eea}{\end{eqnarray}}
\newcommand{\nn}{\nonumber}
\newcommand{\bra}{\langle}
\newcommand{\ket}{\rangle}
\newcommand{\la}{\langle}
\newcommand{\ra}{\rangle}
\newcommand{\tr}{{\mathrm Tr}}
\newcommand{\rd}{\mathrm{d}}
\newcommand{\bpm}{\begin{pmatrix}}
\newcommand{\epm}{\end{pmatrix}}
\newcommand{\bvm}{\begin{vmatrix}}
\newcommand{\evm}{\end{vmatrix}}
\def\nn{\nonumber}
\newcommand{\corner}{\resizebox{15pt}{!}{\mbox{\huge{$\lrcorner$}}}}
\newtheorem{theorem}{Theorem}[section]
\newtheorem{lemma}[theorem]{Lemma}
\newtheorem{proposition}[theorem]{Proposition}
\newcommand{\sixj}[6]{
\left\{
\begin{array}{ccc}
#1 & #2 & #3 \\
#4 & #5 & #6
\end{array}
\right\}
}
\newcommand{\threej}[6]{
\left(
\begin{array}{ccc}
#1 & #2 & #3 \\
#4 & #5 & #6
\end{array}
\right)
}
\begin{document}

\title{A Discrete and Coherent Basis of Intertwiners}

\author{{\bf Laurent Freidel}\email{lfreidel@perimeterinsititute.ca}}
\affiliation{ Perimeter Institute for Theoretical Physics,
Waterloo, Ontario, Canada.}

\author{{\bf Jeff Hnybida}\email{jhnybida@perimeterinsititute.ca}}
\affiliation{ Perimeter Institute for Theoretical Physics,
Waterloo, Ontario, Canada.}
\affiliation{Department of Physics, University of Waterloo, Waterloo, Ontario, Canada}

\begin{abstract}
We construct a new discrete basis of $4$-valent SU(2) intertwiners.
This basis possesses both the advantage of being discrete, while at the same time representing accurately the classical degrees of freedom; hence it is coherent. The closed spin network amplitude obtained from these intertwiners depends on twenty spins and can be evaluated by a generalization of the Racah formula for an arbitrary graph. The asymptotic limit of these amplitudes is found. We give, for the first time, the asymptotics of 15j symbols in the real basis. Remarkably it gives a generalization of the Regge action to twisted geometries.
%
%
\end{abstract}

\maketitle


\section{Introduction}

It has been observed long ago that the compositions of quantum states of angular momentum are related to geometrical objects \cite{Schwinger, wigner, Bargmann}.  The simplest example is the Clebsch-Gordan coefficients which vanish unless the spins satisfy the so called triangle relations.  A less trivial example is the Wigner 6j symbol which vanishes unless the spins can represent the edge lengths of a tetrahedron.  This insight was one of the motivations which led Ponzano and Regge to use the 6j symbol as the building block for a theory of quantum gravity in three dimensions \cite{Regge:1961px,PR}, together with the fact that the asymptotic limit of the 6j symbol is related to the discretized version of the Einstein-Hilbert action.  In higher dimensions this line of thought led to the idea of  spin foam models which are a generalization of   Ponzano and Regge's idea to a four dimensional model of quantum General Relativity.  For a review see \cite{Perez:2012wv}.  

Following a canonical approach, Loop Quantum Gravity came to the same conclusion: Geometrical quantities such area and volume are quantized \cite{Rovelli:1994ge}.  In fact there are many remarkable similarities between LQG and spin foam models.  For instance, the building blocks of both models are SU(2) intertwiners which represent quanta of space \cite{Barbieri:1997ks,Baez:1999tk, Rovelli:2006fw}.  An intertwiner is simply an invariant tensor on the group.  In other words if we denote by $V^{j}$ the $2j+1$ dimensional vector space representing spin $j$ then an intertwiner is an element of the SU(2) invariant subspace of this tensor product which we will denote
\be \label{eqn_inter_space}
  \cH_{j_1,...,j_n} \equiv \text{Inv}_{\text{SU}(2)}\left[V^{j_1} \otimes \cdots \otimes V^{j_n} \right].
\ee

The vectors in this Hilbert space will be referred to as $n$-valent intertwiners since they are represented graphically by an $n$-valent node.  The legs of this node carry the spins $j_i$ which can be interpreted as the areas of the faces of a polyhedron \cite{OH,LFetera1,LFetera2,Bianchi:2010gc} which is a consequence of the celebrated Guillemin-Sternberg theorem \cite{GS}.  In this paper we will be focused on 4-valent intertwiners but many of the methods developed here can be extended to the $n$-valent case.

Two types of basis for the space (\ref{eqn_inter_space}) are usually considered. The most common one is a discrete basis 
which diagonalizes a commuting set of  operators invariant under adjoint action.  Each of these operators is associated with a decomposition of an $n$-valent vertex as a contraction of 3-valent ones and the basis elements are constructed by the contraction of 3-valent intertwiners along the corresponding channels.
In the 4-valent case the different bases  correspond to the $S,T, U$ channels and are labelled by one extra spin.
 Bases constructed in this way are orthonormal, but lack sufficient data to describe a classical geometry.  For example a tetrahedron is uniquely determined by six quantities, such as the four areas and two of the dihedral angles between faces.  Therefore the orthonormal basis of 4-valent intertwiners, having only five labels, is not suitable for the specification of a fixed tetrahedral geometry.

Another basis usually considered is the coherent state basis (introduced in Loop gravity by Livine Speziale  \cite{coh1})  which 
overcomes this difficulty by introducing an overcomplete basis for ${\cal H}_{j_{1}\cdots j_{n}}$ which is labelled by a set of normal vectors.  In the semiclassical limit these normal vectors satisfy the closure relation  and can be identified with the vectors normal to the faces of a tetrahedron whose areas are proportional to the spins $j_{i}$ \cite{Conrady:2009px,Freidel:2009nu}.  These insights have since led to a reformulation of LQG in terms of twisted geometries \cite{twisted} obtained by matching the normal vectors corresponding to faces of classical polyhedra, but such that the shapes of the glued faces do not necessarily match.
One of the drawbacks of this basis, however, is that it introduces a continuum of states to represent a simple finite dimensional Hilbert space.  Moreover, in this representation the link with the real basis and its simplicity is lost.

What we want to investigate is whether there exists a basis which captures the essence of both constructions by being at the same time discrete and coherent. 
We provide here such a construction for the Hilbert spaces ${\cal H}_{j_{1}\cdots j_{n}}$ and focus our description to the 4-valent case, which exemplifies the main features of this basis.
We will also show that it leads to a simple generalization of the Racah formula when we use these intertwiners in the spin-network evaluation of a general graph. 
This basis arises naturally in the expansion of generating functionals for spin networks which were provided  in \cite{FH_exact}.  Finally we present a new $\{20j\}$ symbol corresponding to the coherent spin network amplitude on a 4-simplex which is a simple generalization of the $\{15j\}$ symbol.

We find that this new basis of intertwiners is overcomplete, but is labelled by a set of discrete labels.  Further, we show that the discrete basis generates the different orthonormal bases by simply summing over the extra labels.  In this way many of the different variations of recoupling coefficients in the orthonormal bases can be generated from these more fundamental amplitudes.  For example the different versions of the $\{15j\}$ symbol can be obtained by summing over five spins of the $\{20j\}$ symbol.

Finally we show that the semiclassical limit of the discrete basis corresponds to a classical framed tetrahedron in the same way as the coherent intertwiners.  A framed tetrahedron is a tetrahedron with a unit vector in each of the faces representing a 2d frame.  This frame vector is determined by the phase of the spinor and encodes the extrinsic geometry of the triangulation.  The asymptotics of the $\{20j\}$ symbol imply a classical action which, under certain geometricity constraints, is found to agree with the Regge action, which agrees with the analysis in \cite{Dittrich:2008ar} and \cite{Dittrich:2010ey}.  When these constraints are not imposed the geometry is twisted in the sense of \cite{twisted}.
 
The paper is organized as follows.  First we define the new basis of $n$-valent intertwiners in the holomorphic representation and we describe the 3-valent case.  Next we investigate the 4-valent case and construct the resolution of identity.  We then compute the action of the invariant operators and show how the new basis relates to the orthonormal bases.  Using generating functional techniques we compute the scalar product in the new basis, and use the relations with the orthogonal basis to generate the various other possible scalar products.  We then discuss the utility of this basis in representing and computing spin network amplitudes and introduce the $\{20j\}$ symbol in the 4-valent case.  Finally we study the semi-classical behaviour of the new states and we show that they correspond uniquely to classical framed tetrahedra.  Moreover, the asymptotics of the $\{20j\}$ symbol is shown to admit an interpretation as a generalization of the Regge action to twisted geometry.

\section{The New Basis}

One particularly useful representation of SU(2) is the so called Bargmann-Fock or holomorphic representation \cite{Bargmann,Schwinger}.  This space consists of holomorphic functions on spinor space $\C^2$ endowed with the Hermitian inner product
\be \label{barg_in_prod}
  \bra f | g \ket = \int_{\C^2} \overline{f(z)} g(z) \rd\mu(z)
\ee
where $\rd\mu(z) = \pi^{-2} e^{-\bra z | z \ket}  \rd^{4}z$ and $\rd^{4}z$ is the Lebesgue measure on $\C^2$
and we use the bra-ket notation for the scalar product of the two states $|f\ket, |g\ket$ defined by
$ f(z) \equiv (z|f\ket$.  The group SU(2) acts irreducibly on representations of spin $j$ given by the 
$2j+1$ dimensional subspaces of holomorphic functions homogeneous of degree $2j$.  
The standard orthonormal basis with respect to this inner product is given by
\be
   e^{j}_{m}(z) = \frac{\alpha^{j+m}\beta^{j-m}}{\sqrt{(j+m)!(j-m)!}},
\ee
which are simply the holomorphic represention of the SU(2) basis elements which diagonalize the operator $J_3$.  Here $(\alpha, \beta)\in \C^{2}$ represents the components of the spinor $|z\ket$.

In the following we will heavily use the fact that there are  two SU(2) invariant products on spinor space, only one of which is holomorphic:\footnote{In fact because $[ z_i | z_j \ket$ is holomoprhic it is automatically $\SL(2,\C)$ invariant.  The other SU(2) invariant is $\bra z_i | z_j \ket = \bar{\alpha}_i \beta_j + \alpha_j \bar{\beta}_{i}$ }:
\be
  [z_i|z_j\ket = \alpha_{i}\beta_{j} - \alpha_{j}\beta_{i}
\ee
where we use the notation
 $$|z\ket \equiv (\alpha, \beta )^t, \qquad |z] \equiv ( -\overline{\beta}, \overline{\alpha} )^t.$$
 We will also use the notation $\check{z}$ to denote the conjugate spinor $|\check{z}\ket \equiv |z]$.

In the spin $j$ representation we define coherent states $|j,z\ket$ to be the holomorphic functionals
\be
(w|j,z\ket \equiv \frac{[w|z\ket^{2j}}{(2j)!}.
\ee
These states possess the characteristic property that their scalar product with any spin $j$ state $|f\ket$ reproduces the functional $f(z)$, that is
\be
\bra j, \check{z} | f\ket = f(z).
\ee
This follows from a  direct computation which shows that
$
(w|j,z\ket =\bra j,\check{w}|j,z\ket.
$
This property implies that we can identify the label $(z|$ of $(z|f\ket$ with the state $\bra j,z|$ when 
evaluated on a spin $j$ functional. In the following we will use interchangeably the notation $(z|=\bra j,\check{z}|$ for the labels.

\subsection{Intertwiner Bases}

In this holomorphic representation there are two natural and straightforward bases of  $n$-valent intertwiners, i.e. functions of the spinors $z_{1},\cdots ,z_{n}$ which are SU(2) invariant and homogeneous of degree $2j_{i}$ in $z_{i}$.
The first one is the Livine-Speziale coherent intertwiner basis \cite{coh1}, and 
the second one is the discrete basis which is the new basis we want to study here.

The Livine-Speziale coherent intertwiners \cite{coh1} are defined by group averaging as the following holomorphic functionals:
\be
 (w_{i} \|j_i,z_i\ket \equiv \int \rd g \prod_{i=1}^{n} \frac{ [w_{i}| g |z_i\ket^{ 2j_i}}{(2j_i)!}.
\ee
Note that the normalization of these states is different from \cite{coh1} to better suit the Bargmann scalar product (\ref{barg_in_prod}).  These states are coherent in the sense that their scalar product  reproduces the
holomorphic functional, they are labelled by the continuous set of data $\{z_i\}$ and  they
resolve  the identity:
\be
  \bra j_{i}, w_{i} \|j_i,z_i\ket=(\check{w}_{i} \|j_i,z_i\ket,\qquad \one_{j_i} = \int \prod_i \rd\mu(z_i) \|j_i,z_i \ket \bra j_i,z_i\|. \label{coherent}
\ee
This is shown by using the identity $\int \rd\mu(w) = \bra a|w\ket^{2j}\bra w|b\ket^{2j} = (2j)! \bra a | b \ket^{2j}$, which itself is proven by summing over $j$ and performing the Gaussian integration.

We will now show how to construct a new basis which is also coherent, resolves the identity, but is labelled by a discrete set.
Since the product $[z|w\ket$ is holomorphic and SU(2) invariant it can be used to construct a  complete basis of the intertwiner space ${\cal H}_{n}\equiv \oplus_{j_{i}}{\cal H}_{j_{1}\cdots j_{n}}$  by
\be\label{C}
( z_{i} | k_{ij} \ket  \equiv \prod_{i<j}\frac{ [z_{i}|z_{j}\ket^{k_{ij}}}{k_{ij}!}.
\ee
This basis  is labelled by  $n(n-1)/2$ non-negative integers $[k]\equiv (k_{ij})_{i\neq j = 1,\cdots, n}$ with $k_{ij}=k_{ji}$.  
Note that we are free to choose a phase convention and for simplicity we will choose it to be unity for now.\footnote{Later we will see that the asymptotic limit of the intertwiners will imply a canonical phase.}

For a basis representing the subspace ${\cal H}_{j_{1}\cdots j_{n}}$ with fixed  spins $j_i$, we have $n$ homogeneity conditions which require the integers $[k]$ to satisfy
\be\label{kj}
\sum_{j\neq i} k_{ij} =2j_{i}.
\ee
The sum of spins at the vertex is defined by $J = \sum_i j_i = \sum_{i<j} k_{ij}$ and is required to be a positive integer.  
From the relation $[\check{w}|\check{z}\ket = \overline{[w|z\ket}$ we see that these states satisfy the reality condition 
\be\label{real}
\overline{( z_{i} | k_{ij} \ket} = ( \check{z}_{i} | k_{ij} \ket.
\ee
Furthermore, from the coherency property (\ref{coherent}) we can easily compute the overlap of these states with the coherent intertwiners:
\be
\bra j_{i}, \check{z}_{i} || k_{ij} \ket = ( z_{i} | k_{ij} \ket = \bra k_{ij} || j, z_{i} \ket .
\ee
where the last equality follows from 
the reality condition (\ref{real}) and the fact that $(-z_{i}|k_{ij}\ket = (z_{i}||k_{ij}\ket$.

In \cite{FH_exact} it is shown that the scalar product of 
coherent intertwiners can be expressed in terms of the coefficients of the discrete basis as 
\be\label{fundrelation}
\bra j_{i}, \check{w}_{i} || j_{i}, z_{i}\ket = \sum_{[k]\in K_{j}} \frac{( {w}_{i} | k_{ij} \ket {( z_{i} | k_{ij} \ket} }{||[k]||^{2}},\quad \mathrm{with}\quad ||[k]||^{2} =\frac{ (J+1)!}{\prod_{i<j}k_{ij}!}.
\ee
where $K_{j}$ denotes all the $k_{ij}$ solution of (\ref{kj}).
This result in turn implies that 
\be
|| j_{i}, z_{i}\ket = \sum_{[k]\in K_{j}} \frac{| k_{ij} \ket \bra k_{ij} \| j, z_{i} \ket }{||[k]||^{2}},
\ee
which expresses the coherent states in terms of the discrete basis.

\subsection{3-valent Intertwiners}

In the case $n=3$ there is only one intertwiner.  Indeed, given $[k]=(k_{12},k_{23},k_{31})$ the  homogeneity restriction requires $2j_{1}=k_{12} + k_{13}$ which can be easily solved by
\be \label{eqn_3_k}
k_{12} = j_1 + j_2 - j_3,\qquad k_{13} = j_1 - j_2 + j_3, \qquad k_{23} = -j_1 + j_2 + j_3.
\ee
In this case the fact that homogeneous functions of different degree are orthogonal implies that $|{k_{12},k_{23},k_{31}}\ket$ form an orthogonal basis
\footnote{One can also arrive at this basis by considering the respresentation space of symmetrized spinors.  For details see appendix A of \cite{Rovelli:2004tv}.  The two approaches are essentially the same, however in the holomorphic representation we have the advantage of tools like generating functionals and Gaussian integration.} of (\ref{eqn_inter_space}).

Since there is only one holomorphic  function $( z_{i}|{[k]}\ket $ it must be proportional to the Wigner 3j symbol
\be \label{C3}
 ( z_{i}| k_{12},k_{23},k_{31} \ket= \Delta(j_1 j_2 j_3) \sum_{m_1 m_2 m_3} \threej{j_1}{j_2}{j_3}{m_1}{m_2}{m_3} e^{j_1}_{m_1}(z_1) e^{j_2}_{m_2}(z_2) e^{j_3}_{m_3}(z_3)
\ee
where the triangle coefficients can be found to be 
\be \label{eqn_tri_coeff}
\Delta^{2}(j_{1}j_{2}j_{3}) \equiv 
 \frac{(j_{1}+j_{2}+j_{3}+1)!}{(j_{1}-j_{2}+j_{3})! (j_{2}-j_{1}+j_{3})!(j_{1}+j_{2}-j_{3})!}.
\ee 
Note that we could divide $| k_{12},k_{23},k_{31}\ket$ by $\Delta(j_1 j_2 j_3)$ to normalize this basis, but it will be simpler to instead work with these unnormalized states.

\subsection{Counting}

For $n>3$ there are more basis elements $|{k_{ij}}\ket$ than the dimension of the intertwined space so the basis is no longer orthogonal.  Indeed, since we have $n(n-1)/2$ $k_{ij}$'s satisfying $n$ relations (\ref{kj}) these intertwiners are labelled by 
$n(n-3)/2$ integers. But this is clearly more that the dimension of the Hilbert space of $n$-valent intertwiners, which is known to be labelled by $n-3$ integers, i.e. by contracting only 3-valent nodes.
This means that the basis given above is { \it overcomplete}.

Another way to understand this counting is to recall that the algebra of gauge invariant operators acting on ${\cal H}_{j_{1},\cdots, j_{n}}$ is given by $J_{ij} \equiv J_{i}\cdot J_{j}$ for $i\neq j$ where $J_{i}$ denotes the angular momentum operator action in the $i$ direction.
These operators satisfy the closure relation $\sum_{i} J_{i} =0$ and the action of $J_{i}^{2}$ is given by multiplication by $j_{i}(j_{i}+1)$.
These relations mean that we can express any instance of $J_{n}$ say, by a summation of operators depending on $J_{i}$ for $i<n$. Thus a good basis of operator is for instance $J_{ij}$ for $i\neq j $ and $i,j<n$.
There are $(n-1)(n-2)/2$ such operators. They satisfy one relation that stems from the closure relation which is\be
\sum_{i\neq j <n} J_{ij} = j_{n}(j_{n}+1) - \sum_{i<n}j_{i}(j_{i}+1).
\ee
This makes it clear that if we want to maximally represent these operators we need $ n(n-3)/2$ labels.
These operators do not commute, therefore these labels represent an overcomplete basis.  A maximal commuting subalgebra  is of dimension $n-3$.

For example, in the case $n=4$ the basis is  labelled by $2$ integers while we need only one, and
for $n=5$  it is  labelled by $5$ integers while we need only two.  Despite this overcompleteness we will be able to determine all of the necessary properties of these states and we will discover some interesting relations between the orthogonal bases on the one hand and coherent intertwiners on the other.


\begin{figure} 
  \centering
    \includegraphics[width=0.8\textwidth]{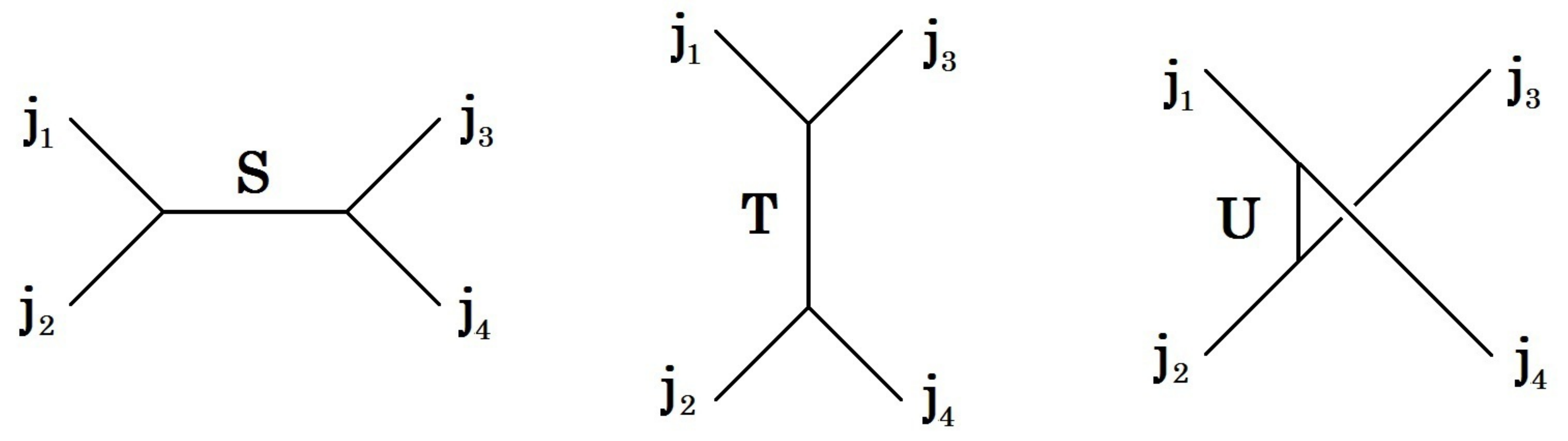}
    \caption{The three channels of a 4-valent vertex.}  \label{fig_STU}
\end{figure}

\section{The 4-valent case}
\label{4-val}

We now focus on the case $n=4$. A very convenient labelling of the basis $|{[k]}\ket$ is done in terms of three spins $S$, $T$, $U$ 
which refer to the three channels in which a 4-valent vertex can be split into two three valent ones.
The relationship between these labels and the $k$ labels is given by
\be\label{int1}
S\equiv j_{1}+j_{2} -k_{12},\quad T\equiv j_{1}+j_{3}  -k_{13},\quad U\equiv j_{1}+j_{4}-k_{14}.
\ee
where $S$, $T$, and $U$ are such that the $k_{ij}$ are non-negative integers.  The constraints in (\ref{kj}) imply that 
$ j_{1}+j_{2} -(j_{3}+j_{4}) = k_{12}-k_{34}$, thus we also have 
\be\label{int2}
S= j_{3}+j_{4}-k_{34},\quad T= j_{2}+j_{4}-k_{24},\quad U= j_{2}+j_{3}-k_{23}.
\ee
Summing over all $k_{ij}$ shows that $S$, $T$, and $U$ are not independent but satisfy the relation
 \be
 S+T+U=J.
 \ee 
We can therefore label the $4$-valent intertwiner basis by the four spins $j_i$ and two extra spins $S,T$ and we will henceforth denote by 
$k_{ij}(j_{i},S,T)$ the corresponding integers in (\ref{int1}, \ref{int2}). 
These integers cannot take arbitrary values, since  $k_{ij}$ are restricted by
$0\leq  k_{ij} \leq  \mathrm{max}(2j_{i},2j_{j})$, this restriction\footnote{It is given by
\bea
\mathrm{max}(|j_{1}-j_{2}|, |j_{3}-j_{4}|) \leq &S& \leq \mathrm{min}(j_{1}+j_{2}, j_{3}+j_{4}), \\
\mathrm{max}(|j_{1}-j_{3}|, |j_{3}-j_{4}|) \leq &T& \leq \mathrm{min}(j_{1}+j_{3}, j_{3}+j_{4}),\\
\mathrm{max}(j_{1}+j_{4}, j_{2}+j_{3}) \leq S &+&T \leq  J - \mathrm{max}(|j_{1}-j_{4}|, |j_{2}-j_{3}|).
\eea 
} is denoted by $(S,T)\in {\cal N}_{j_{i}}$.
In the case all spins are equal to $N/2$ this is simply $0\leq S,T\leq N$, $N\leq S+T\leq 2N$.

 We will  denote the corresponding basis by
  $|S,T\ket_{j_{i}}$ where
 \be \label{eqn_ST_notation}
 |S,T\ket_{j_{i}} \equiv  |[k](j_{i},S,T)\ket.
\ee
In the following we will omit the subscript $j_{i}$  and use the shorthand $|S,T\ket \equiv \left| S, T \right\ket_{j_{i}}$ for notational simplicity when the context is clear and the external spins are fixed.

\subsection{Overcompletness and Identity Decomposition}

As discussed above, the $|S,T\ket$ basis has one extra label and is thus overcomplete.  We will now investigate the nature of the relations among these states which is summarized by the following theorem:
\begin{theorem}
The $|S,T\ket$ states are  not linearly independent; all the relations among them are generated by the fundamental relation
\be \label{fund_rel}
 (k_{12}+1)(k_{34}+1) \left| S-1,T \right\ket - (k_{13}+1)(k_{24}+1) \left|  S,T-1 \right\ket +k_{14}k_{23} \left| S,T \right\ket =0 
\ee
where $k_{ij}$ stands for $k_{ij}(j_{i},S,T)$.
\end{theorem}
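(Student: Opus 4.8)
The plan is to verify the fundamental relation (\ref{fund_rel}) by a direct substitution that reduces it to the Schouten identity for two‑component spinors, and then to prove that it generates all relations by a ``straightening plus dimension count'' argument.

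\emph{Step 1 (the identity itself).} By (\ref{int1})--(\ref{int2}), lowering $S$ by one raises $k_{12},k_{34}$ by one and --- since $U=J-S-T$ then rises by one --- lowers $k_{14},k_{23}$ by one, leaving $k_{13},k_{24}$ fixed; lowering $T$ by one raises $k_{13},k_{24}$, lowers $k_{14},k_{23}$, and leaves $k_{12},k_{34}$ fixed. Hence, realising $|S-1,T\ket$, $|S,T-1\ket$, $|S,T\ket$ through their holomorphic representatives (\ref{C}), all three contain the common monomial
\be
M(z)=\frac{[z_1|z_2\ket^{k_{12}}\,[z_3|z_4\ket^{k_{34}}\,[z_1|z_3\ket^{k_{13}}\,[z_2|z_4\ket^{k_{24}}\,[z_1|z_4\ket^{k_{14}-1}\,[z_2|z_3\ket^{k_{23}-1}}{k_{12}!\,k_{34}!\,k_{13}!\,k_{24}!\,(k_{14}-1)!\,(k_{23}-1)!}\,,
\ee
and, using $(k+1)/(k+1)!=1/k!$ and $k/k!=1/(k-1)!$, the left side of (\ref{fund_rel}) paired with $(z_i|$ collapses to $M(z)\,\big([z_1|z_2\ket[z_3|z_4\ket-[z_1|z_3\ket[z_2|z_4\ket+[z_1|z_4\ket[z_2|z_3\ket\big)$. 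The bracket is the $\SL(2,\C)$ Schouten (Pl\"ucker) identity $\epsilon_{AB}\epsilon_{CD}-\epsilon_{AC}\epsilon_{BD}+\epsilon_{AD}\epsilon_{BC}=0$ contracted with $z_1,\dots,z_4$ --- a one‑line cancellation in the components $(\alpha_i,\beta_i)$ --- so it vanishes identically. Since the holomorphic pairing $(z_i|\,\cdot$ is faithful on each fixed‑homogeneity sector (the coherency property (\ref{coherent})), this functional identity is the operator identity (\ref{fund_rel}).

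\emph{Step 2 (setting up the converse).} The monomial $M$ exists, and all three kets genuinely lie in $\cN_{j_i}$, exactly when $k_{14}(S,T)\ge 1$ and $k_{23}(S,T)\ge 1$, i.e.\ $U=J-S-T<U_{\max}:=\min(j_1+j_4,j_2+j_3)$; in that range a short check of (\ref{kj}) shows the shifted labels $(S-1,T),(S,T-1)$ also lie in $\cN_{j_i}$. Write $R_{S,T}$ for (\ref{fund_rel}) at such $(S,T)$, indexed by $\mathcal I=\{(S,T)\in\cN_{j_i}:U<U_{\max}\}$. Grade the free vector space on $\{|S,T\ket:(S,T)\in\cN_{j_i}\}$ by $S+T$: in $R_{S,T}$ the term $|S,T\ket$ is the unique top‑grade term and its coefficient $k_{14}k_{23}$ is nonzero, so the $R_{S,T}$ have pairwise distinct leading terms, are linearly independent, and span a subspace $\cR_0$ of the relation module $\cR$ (the kernel of the natural surjection from the free space onto $\cH_{j_1\cdots j_4}$) with $\dim\cR_0=|\mathcal I|$. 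Applying the $R_{S,T}$ successively in order of decreasing $S+T$ rewrites every symbol with $(S,T)\in\mathcal I$ in terms of the ``terminal'' symbols on the line $U=U_{\max}$, so modulo $\cR_0$ the free space is spanned by these terminal symbols.

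\emph{Step 3 (closing the count, and the hard part).} Since the $R_{S,T}$ hold in $\cH_{j_1\cdots j_4}$ and all $|S,T\ket$ span $\cH_{j_1\cdots j_4}$ (completeness of (\ref{C})), the terminal states $\{|S,T\ket:U=U_{\max}\}$ already span $\cH_{j_1\cdots j_4}$; moreover they are linearly independent, because along the line $U=U_{\max}$ the step to the next point multiplies the holomorphic representative by a fixed nonzero constant times the non‑constant rational function $\rho=\frac{[z_1|z_3\ket[z_2|z_4\ket}{[z_1|z_2\ket[z_3|z_4\ket}$, and $1,\rho,\rho^2,\dots$ are linearly independent over $\C$. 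Hence the terminal states form a basis of $\cH_{j_1\cdots j_4}$, so $\dim\cH_{j_1\cdots j_4}=|\cN_{j_i}|-|\mathcal I|$, whence $\dim\cR=|\mathcal I|=\dim\cR_0$ and therefore $\cR_0=\cR$: every linear relation among the $|S,T\ket$ is a combination of the fundamental relations (\ref{fund_rel}). The main obstacle is precisely this last step --- that the boundary line $U=U_{\max}$ carries exactly $\dim\cH_{j_1\cdots j_4}$ admissible $(S,T)$ --- which I would obtain from spanning‑plus‑independence as above rather than from a direct triangle‑inequality count, together with the routine but fiddly verification that the straightening never leaves $\cN_{j_i}$.
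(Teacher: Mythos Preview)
Your Step~1 is exactly the paper's argument, just phrased in direct monomial form rather than via the operator $\hat R:\cH_{j_i-\frac12}\to\cH_{j_i}$: both amount to factoring out the common monomial and recognising what remains as the Pl\"ucker relation (\ref{eqn_R_plucker}). So for the existence of the relation the two proofs coincide.

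Where you differ is in Steps~2--3. The paper's proof does not actually supply a self-contained argument that the fundamental relation \emph{generates all} relations; it derives (\ref{fund_rel}) from the Pl\"ucker identity, remarks that powers of $\hat R$ produce further relations, and implicitly relies on the classical fact that the Pl\"ucker quadric is the unique relation among the six brackets $[z_i|z_j\rangle$ (with the full justification effectively deferred to the later constraint-quantisation machinery around Lemmas~\ref{product1} and~\ref{product2}). Your straightening-plus-dimension-count argument is a genuinely different, and more elementary, route: grading by $S+T$ isolates a unique nonzero leading term $k_{14}k_{23}\,|S,T\rangle$ in each $R_{S,T}$, yielding independence of the relations and reduction to the boundary $U=U_{\max}$; independence of the boundary states via algebraic independence of the five brackets not involving $k_{14}$ (or $k_{23}$) then forces the dimension match. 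This buys a self-contained proof of the ``generates all'' clause without invoking Grassmannian facts or the projector $\Pi_{j_i}$; the paper's operator framing, by contrast, is less explicit here but sets up exactly the tools used later for the scalar product. Your proof is correct as written; the only point worth recording is that the equality of the terminal count with $\dim\cH_{j_1\cdots j_4}$ does follow from your span-plus-independence argument, and can also be checked directly (the assumption $j_1+j_4\le j_2+j_3$ forces $j_1-j_2\le j_3-j_4$ and $j_4-j_3\le j_2-j_1$, so $\max(j_2-j_1,j_3-j_4)=\max(|j_1-j_2|,|j_3-j_4|)$).
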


It turns out that the relation among the states is easily seen in the holomorphic representation.
It is well known that the gauge invariant quantities $[z_{i}|z_{j}\ket$ are not independent,
they satisfy the  Pl\"ucker relation:
\be \label{eqn_R_plucker}
R(z_i)\equiv [z_{1}|z_{2}\ket[z_{3}|z_{4}\ket - [z_{1}|z_{3}\ket[z_{2}|z_{4}\ket + [z_{1}|z_{4}\ket[z_{2}|z_{3}\ket =0.
\ee
In order to write the effect of this relation on the states $|S,T\ket_{j_{i}}$
lets compute first the effect of multiplication by one monomial
\bea\nn
[z_{1}|z_{2}\ket[z_{3}|z_{4}\ket ( z_{i} | S,T\ket_{j_{i}-\frac12}
= ({k_{12}k_{34}})(j_{i},S,T ) ( z_{i} | S,T+1\ket_{j_{i}}
\eea
where we used that $k_{12}(j_{i}-\frac12,S,T)  +1 = k_{12}(j_{i},S,T )= k_{12}(j_{i},S,T +1)$,
while 
$k_{13}(j_{i}-\frac12,S,T) = k_{13}(j_{i},S,T)-1 = k_{13}(j_{i},S,T+1)$,
and $k_{14}(j_{i}-\frac12,S,T) = k_{14}(j_{i},S,T)+ 1 = k_{14}(j_{i},S,T+1)$.
Performing similar computations for the different monomials we find that the multiplication by the Pl\"ucker relation can be implemented in terms of an operator $\hat{R} : {\cal{H}}_{j_{i}-\frac12} \rightarrow {\cal{H}}_{j_{i}}$
whose image vanishes identically.  It is defined by
 $  R(z_{i}) ( z_{i} | S,T\ket_{j_{i}-\frac12} = ( z_{i}| \hat{R} | S,T\ket_{j_{i}-\frac12}$ where $\hat{R}$ is given by 
\be
\hat{R} | S,T\ket_{j_{i}-\frac12} = 
{k_{12}k_{34}} | S,T+1\ket_{j_{i}} 
- {k_{13}k_{24}} |S+1,T\ket_{j_{i}} 
+ (k_{14}+2)(k_{23} +2)  |S+1,T+1\ket_{j_{i}} 
\ee
here $k_{ij}$ denotes $ k_{ij}(j_{i}, S, T)$.
By shifting the parameters $S\to S-1$ and $T\to T-1$ and using that 
$ k_{12}(j_{i}, S -1, T-1)= k_{12} +1$ etc. we obtain the desired relation stated in the theorem.
 By taking powers of the operator $\hat{R}$ we can generate many more relations which we will discuss in a later section.
Despite the linear dependence of these states they admit a resolution of identity, consistent with a coherent state basis:  
\begin{theorem} \label{thm_completeness}
The resolution of identity on the space of 4-valent intertwiners has the simple form
\be \label{eqn_res_id}
  \one_{{\cal H}_{j_{i}}} = \sum_{S,T}  \frac{|S,T\ket \bra S,T|}{\|S,T\|_{j_{i}}^2}, \qquad \|S,T\|^2_{j_{i}} \equiv \frac{(J+1)!}{\prod_{i<j}k_{ij}!}.
\ee
\end{theorem}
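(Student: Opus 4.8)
The plan is to derive the resolution of identity directly from the fundamental relation \eqref{fundrelation}, which already expresses the coherent intertwiner scalar product as a sum over the discrete basis with weights $\|[k]\|^{-2}$. First I would specialize \eqref{fundrelation} to $n=4$: there the set $K_j$ of solutions to the homogeneity constraints \eqref{kj} is parametrized, as explained around \eqref{int1}--\eqref{int2}, by the two extra spins $(S,T)\in\mathcal N_{j_i}$, and the norm $\|[k]\|^2 = (J+1)!/\prod_{i<j}k_{ij}!$ is exactly the claimed $\|S,T\|^2_{j_i}$. So \eqref{fundrelation} reads
\be
\bra j_i,\check w_i\| j_i,z_i\ket = \sum_{(S,T)\in\mathcal N_{j_i}} \frac{(w_i|S,T\ket\,(z_i|S,T\ket}{\|S,T\|^2_{j_i}}.
\ee

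Next I would argue that this identity of holomorphic functionals, valid for all spinors $w_i, z_i$, is enough to conclude the operator identity \eqref{eqn_res_id} on $\mathcal H_{j_i}$. The left-hand side, viewed as a function of the $w_i$ at fixed $z_i$, is precisely the reproducing kernel of $\mathcal H_{j_i}$: by the coherency property \eqref{coherent}, pairing $\|j_i,w_i\ket$ against any intertwiner $|f\ket\in\mathcal H_{j_i}$ returns $f$, so $\sum_i\int \rd\mu(w_i)\,\|j_i,w_i\ket\bra j_i,w_i\| = \one_{\mathcal H_{j_i}}$. Equivalently, $\|j_i,\check z_i\ket$ ranges over a total set in $\mathcal H_{j_i}$ as $z_i$ varies, and the displayed equation says that the sesquilinear form $\bra\,\cdot\,\|\,\cdot\,\ket$ agrees with the form $\sum_{S,T}\|S,T\|^{-2}\,|S,T\ket\bra S,T|$ on all such pairs; since both sides are bounded sesquilinear forms on the finite-dimensional space $\mathcal H_{j_i}$ and they agree on a total set, they are equal as operators. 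This gives \eqref{eqn_res_id}.

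For a reader who wants the argument without invoking \eqref{fundrelation} as a black box, I would also sketch the self-contained route: expand each discrete basis vector $|S,T\ket$ in the orthonormal $J_3$-eigenbasis $e^{j_i}_{m_i}$, so that $(z_i|S,T\ket = \prod_{i<j}[z_i|z_j\ket^{k_{ij}}/k_{ij}!$ becomes a polynomial whose coefficients are products of Clebsch--Gordan-type factors coming from expanding the brackets $[z_i|z_j\ket^{k_{ij}}=(\alpha_i\beta_j-\alpha_j\beta_i)^{k_{ij}}$ binomially; then compute $\sum_{S,T}\|S,T\|^{-2}\,(w_i|S,T\ket\overline{(\check w_i|S,T\ket}$ and recognize the result, after resumming the binomial coefficients against $(J+1)!/\prod k_{ij}!$, as the Gaussian integral $\int\rd g\prod_i [w_i|g|w_i'\ket^{2j_i}/(2j_i)!$ that defines the coherent intertwiner overlap. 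The combinatorial identity needed here is the multinomial-type sum that underlies \eqref{fundrelation}, essentially a generating-function computation of the kind used in \cite{FH_exact}.

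The main obstacle is the combinatorial identity in that self-contained route — matching $\sum_{(S,T)}\prod_{i<j}\binom{\text{stuff}}{k_{ij}}\big/\binom{J+1}{\cdots}$ to the group-averaging integral — which is exactly the content of \eqref{fundrelation} and is best quoted from \cite{FH_exact} rather than rederived. If one is willing to cite \eqref{fundrelation}, the proof is short: the only real point to be careful about is that $\mathcal N_{j_i}$ indexes $K_j$ bijectively (so no solutions are missed or double-counted) and that the totality/density argument is legitimate, which it is because $\mathcal H_{j_i}$ is finite-dimensional and the coherent states $\|j_i,\check z_i\ket$ span it (as follows again from \eqref{coherent}). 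I would therefore present the short proof as the main line and relegate the generating-functional verification to a remark or to \cite{FH_exact}.
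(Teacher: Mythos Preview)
Your argument is correct: once \eqref{fundrelation} is granted, specializing to $n=4$ and reading the identity as an equality of matrix elements between coherent intertwiners, together with the totality of those states from \eqref{coherent}, gives \eqref{eqn_res_id} immediately. The paper itself remarks, right after the theorem, that the general $n$-valent resolution of identity ``follows from the relations \eqref{fundrelation}'', so you have recovered that route.

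The appendix proof takes a genuinely different path. Rather than invoking \eqref{fundrelation}, it works with the Schwinger generating functional ${\cal C}_{\tau_{ij}}(z_i)=\exp\sum_{i<j}\tau_{ij}[z_i|z_j\ket$ and evaluates $\bra{\cal C}_\tau|{\cal C}_\tau\ket$ exactly as $\det(1+T\bar T)^{-1}$, which for $n=4$ is $(1-\sum|\tau_{ij}|^2+|R(\tau)|^2)^{-2}$ with $R(\tau)$ the Pl\"ucker combination. Setting $\tau_{ij}=[z_i|z_j\ket$ kills $R(\tau)$, and expanding both the double sum over $(S,T),(S',T')$ and the closed form then yields the projector identity directly. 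What this buys is a 4-valent argument that is self-contained modulo the Gaussian determinant (no appeal to \eqref{fundrelation}), and it makes manifest \emph{why} the naive weights $\|S,T\|^{-2}$ suffice: precisely because the Pl\"ucker term drops out on the holomorphic locus. Your approach is shorter and valid for all $n$, but it hides this mechanism inside the cited identity; the paper's approach is more explicit about the role of the Pl\"ucker relation, which is the theme running through the rest of Section~\ref{4-val}.
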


We give a proof of this theorem in Appendix \ref{app_completeness} which is specific to the 4-valent case.  The resolution of identity in the $n$-valent case has a similar form and follows from the relations (\ref{fundrelation}). 
We will show that despite the fact that they are discrete, the $|S,T\ket$ basis shares many of the same properties as the coherent intertwiners such as the correspondence with classical tetrahedra in the semi-classical limit. In addition the $|S,T\ket$ states also possess a simple relation with the orthogonal basis as  we will show in the next section .

\subsection{The Relation with the Orthogonal Basis}

In the previous sections we introduced a new and overcomplete basis of the space of 4-valent intertwiners which provided a simple decomposition of the identity.  On the other hand, the standard basis of 4-valent intertwiners is orthogonal, and is defined by the eigenstates of either of the invariant operators $J_{1}\cdot J_{2}$ or $J_{1}\cdot J_{3}$ or $J_{1}\cdot J_{4}$.  We will denote these orthogonal bases by $|S\ket$ and $|T\ket$ and $|U\ket$ respectively.  We would now like to investigate the action of the $S$ and $T$ channel operators $J_{1}\cdot J_{2}$ and $J_{1}\cdot J_{3}$ on $\left|S,T\right\ket$ as well as the relationship between the four  bases: $\left|S,T\right\ket$, $\left|S\right\ket$, $\left|T\right\ket$, $ | U\ket$. 

It is well known that, up to normalization,
the usual 4-valent intertwiner basis is obtained by the composition of two trivalent intertwiners.
For now we will focus on the $|S\ket$ states, which in the holomorphic representation, are defined to be
\bea \label{eqn_S_def}
\left( z_i | S \right\ket \equiv \int \rd\mu(z) {C}_{(j_{1},j_{2},S)}(z_{1},z_{2},\check{z}) {C}_{(S,j_{3},j_{4})}(z,z_{3},z_{4}),
\eea
where $|\check{z}\ket \equiv |z]$ and ${C}_{(j_{1},j_{2},S)}(z_{1},z_{2},\check{z}) = (z_{1},z_{2},\check{z}|k_{ij}(j_{1},j_{2},S)\ket$.  As shown in \cite{OH, LFetera1} the operators $J_{i}\cdot J_{j}$ in the holomorphic representation can be  written in terms of the SU$(N)$ operators
\be
E_{ij}\equiv z_{i}^{A}\partial_{z_{j}^{A}},
\ee 
as the quadratic combinations
\be
  2J_{i}\cdot J_{j} = E_{ij}E_{ji}-\frac12 E_{ii}E_{jj} - E_{ii}.
\ee

The operator $E_{ij}$ acts nontrivially only on a function of $z_{j}$ and its action amounts to replacing $z_{j}$ by $z_{i}$, i-e
 \be
E_{ij}\cdot [z_{j}|w\ket = [z_{i}|w\ket.
 \ee
Using this we can now compute the action of $J_1 \cdot J_2$ on $|S\ket$.  First note that the action of $E_{ii}$ on $|S\ket$ is given by $2j_{i}$ and the action of $E_{12}E_{21}$ is given by $(j_1-j_2+S)(-j_1+j_2+S+1)$.  Therefore the action of $J_1 \cdot J_2$ on $|S\ket$ is found to be
\bea \label{eqn_eigen_J_dot_J}
  J_{1}\cdot J_{2} \left|S\right\ket &=& \frac12\left(S(S+1) - j_{1}(j_{1}+1) - j_{2}(j_{2}+1)\right) \left|S\right\ket.
\eea

We are now in a position to discuss the physical interpretation of the spins $S$ and $T$.  From equation (\ref{eqn_eigen_J_dot_J}) we see that the operator $(J_1 +J_2)^2$ is diagonal in the $|S\ket$ basis with eigenvalue $S(S+1)$.  In \cite{Baez:1999tk} it is pointed out that if $A_1$ and $A_2$ are the classical area vectors of two faces of a tetrahedron then $|A_1 + A_2|^2$ is equal to four times the area of the medial parallelogram between the two faces.  The spins $T$ and $U$ would then be the areas of the other two medial parallelograms in the tetrahedron.  

This interpretation, however, does not hold for the $|S,T\ket$ states as we will see by computing the action $J_1 \cdot J_2$ on $|S,T\ket$.  We will find the true correspondence with the classical variables when we study the semi-classical limit.

\begin{theorem}
The action of $J_1 \cdot J_2$ on $|S,T\ket$ does not change the value of $S$ and it is given by
\bea \label{Jact1}
2J_{1}\cdot J_{2} \left|S,T \right\ket = \left(S(S+1) - j_{1}(j_{1}+1) - j_{2}(j_{2}+1)\right) \left|S,T \right\ket \\
+\left( (k_{14}+1)(k_{23}+1)  \left|S,T-1 \right\ket -
k_{14}k_{23}  \left|S,T \right\ket  \right)  \nonumber \\
+\left( (k_{13}+1)(k_{24}+1) \left|S,T+1 \right\ket -
k_{13}k_{24}  \left|S,T \right\ket  \right). \nonumber
\eea
where $k_{ij}$ stands for $k_{ij}(j_i,S,T)$.  Similarly the action of $J_1 \cdot J_3$ does not change the value of $T$.
\end{theorem}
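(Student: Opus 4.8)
The plan is to compute $J_1\cdot J_2$ on $|S,T\ket$ directly in the holomorphic representation, exactly as was done for $|S\ket$ but now keeping track of the fact that the $|S,T\ket$ states are monomials in the $[z_i|z_j\ket$ rather than integrals of products of trivalent intertwiners. Recall from \Ref{int1}--\Ref{int2} that
\be
( z_i | S,T\ket_{j_i} = \prod_{i<j} \frac{[z_i|z_j\ket^{k_{ij}}}{k_{ij}!},
\ee
with $k_{12}=j_1+j_2-S$, $k_{13}=j_1+j_3-T$, $k_{14}=j_1+j_4-U$ and $U=J-S-T$. First I would use the formula $2J_1\cdot J_2 = E_{12}E_{21}-\tfrac12 E_{11}E_{22}-E_{11}$ together with $E_{ij}\cdot[z_j|w\ket = [z_i|w\ket$. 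The diagonal pieces $E_{11}$ and $E_{22}$ act by the homogeneity degrees $2j_1$ and $2j_2$, so $-\tfrac12 E_{11}E_{22}-E_{11}$ contributes $-j_1(j_1+1)-\tfrac12\cdot 2j_1\cdot 2j_2 + \ldots$; the bookkeeping here is routine and will assemble into $-j_1(j_1+1)-j_2(j_2+1)$ plus terms that combine with the $E_{12}E_{21}$ output.

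The essential step is computing $E_{12}E_{21}$ on the monomial. The operator $E_{21}$ replaces one factor of $z_1$ by $z_2$: it acts on each of $[z_1|z_2\ket^{k_{12}}$ (killing the term, since $[z_2|z_2\ket=0$), $[z_1|z_3\ket^{k_{13}}$ (giving $k_{13}$ copies with one factor $\to [z_2|z_3\ket$), and $[z_1|z_4\ket^{k_{14}}$ (giving $k_{14}$ copies with one factor $\to [z_2|z_4\ket$). Then $E_{12}$ acts on the result, replacing a $z_2$ by a $z_1$; the nonzero contributions come from $[z_2|z_3\ket$, $[z_2|z_4\ket$ and $[z_2|z_1\ket=-[z_1|z_2\ket$ factors. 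Collecting the "diagonal" returns (where $E_{12}$ undoes $E_{21}$) gives a term proportional to $k_{13}+k_{14}+\ldots$ which, using $k_{13}+k_{14}=2j_1-k_{12}$ and $k_{12}=j_1+j_2-S$, rearranges into $\tfrac12 S(S+1)$ after combining with the diagonal pieces above — this reproduces the first line of \Ref{Jact1}. The genuinely new contributions are the "cross" terms: $E_{21}$ acting on $[z_1|z_3\ket$ followed by $E_{12}$ acting on a pre-existing $[z_1|z_2\ket$ (or vice versa with $z_4$), which produce monomials where one power has migrated between channels. One checks that converting $[z_1|z_3\ket[z_1|z_2\ket \to [z_2|z_3\ket[z_1|z_2\ket$ shifts $k_{13}\to k_{13}-1$, $k_{23}\to k_{23}+1$ with $k_{12}$ unchanged, i.e. it sends $T\to T-1$ at fixed $S$; similarly the $z_4$-channel cross term sends $T\to T+1$. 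Matching the combinatorial prefactors $k_{13}k_{12}\cdot\tfrac{1}{k_{12}}$ etc. to the $(k_{14}+1)(k_{23}+1)$ and $(k_{13}+1)(k_{24}+1)$ coefficients in \Ref{Jact1} — after the shift $T\to T\mp1$ relabels the $k$'s — completes the identification.

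I expect the main obstacle to be precisely this last matching of prefactors: one must be careful that shifting $T$ in $|S,T\mp1\ket$ changes several $k_{ij}$ simultaneously ($k_{13}$ and $k_{24}$ up by one, $k_{23}$ down by one, say), so the naive coefficient produced by the differential operator has to be re-expressed in terms of the $k_{ij}(j_i,S,T)$ evaluated at the \emph{source} value of $T$, which is where the $+1$'s and the bare $k_{14}k_{23}$, $k_{13}k_{24}$ diagonal corrections come from. A useful consistency check is that the two cross-term groupings in \Ref{Jact1}, each of the form $\bigl((k+1)(k'+1)|S,T\pm1\ket - kk'|S,T\ket\bigr)$, must be annihilated in combination by the Plücker/$\hat R$ relation of Theorem 1.1, and indeed adding the $J_1\cdot J_2$ and (analogously derived) $J_1\cdot J_3$ actions and comparing with \Ref{fund_rel} gives an independent verification. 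The statement that $J_1\cdot J_3$ leaves $T$ fixed then follows by the obvious relabeling $2\leftrightarrow 3$ of the whole computation, since $T$ plays for the $(1,3)$ channel the role $S$ plays for the $(1,2)$ channel.
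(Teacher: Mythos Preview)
Your strategy is the same as the paper's: act with $E_{12}E_{21}$ directly on the monomial $\prod_{i<j}[z_i|z_j\ket^{k_{ij}}/k_{ij}!$, separate the diagonal from the $T$-shifting pieces, and then use an algebraic identity (the paper invokes $k_{13}k_{23}+k_{14}k_{24}=S^{2}-(j_{1}-j_{2})^{2}-k_{13}k_{24}-k_{14}k_{23}$) to cast the diagonal coefficient in the form $S(S+1)-j_{1}(j_{1}+1)-j_{2}(j_{2}+1)-k_{14}k_{23}-k_{13}k_{24}$. The $J_{1}\cdot J_{3}$ statement is likewise obtained by a permutation.

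There is, however, a concrete error in your mechanics. You claim the cross terms arise from $E_{12}$ acting on a pre-existing $[z_{1}|z_{2}\ket$ factor, but $E_{12}[z_{1}|z_{2}\ket=[z_{1}|z_{1}\ket=0$; in fact \emph{both} $E_{21}$ and $E_{12}$ annihilate $[z_{1}|z_{2}\ket$, so this factor is completely inert --- which is exactly why $k_{12}$, and hence $S$, is manifestly preserved. The genuine off-diagonal terms come from $E_{21}$ hitting $[z_{1}|z_{3}\ket$ and $E_{12}$ subsequently hitting a pre-existing $[z_{2}|z_{4}\ket$ (or the same with $3\leftrightarrow 4$). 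This shifts $(k_{13},k_{24})\to(k_{13}-1,k_{24}-1)$ and $(k_{14},k_{23})\to(k_{14}+1,k_{23}+1)$, which, since $k_{13}=j_{1}+j_{3}-T$, corresponds to $T\to T+1$ (not $T\to T-1$ as you wrote). Once you track the correct factors, the factorial ratios $\prod k'_{ij}!/\prod k_{ij}!$ produce precisely the $(k_{\cdot\cdot}+1)(k_{\cdot\cdot}+1)$ coefficients, and the rest of your outline goes through exactly as in the paper.
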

\begin{proof}
The action of $E_{ii}$ on $|S,T\ket$ is given by $2j_{i}$ while the action of $E_{12}E_{21}$ on $|S,T\ket$ is
\bea
  \left( k_{13}(k_{23}+1) + k_{14}(k_{24}+1) \right)|S,T\ket + (k_{13}+1)(k_{24}+1)|S,T+1\ket + (k_{14}+1)(k_{23}+1)|S,T-1\ket. \nonumber
\eea
Now with this and the relation 
\be
  k_{13}k_{23} + k_{14}k_{24} = S^{2}- (j_{1}-j_{2})^{2} - k_{13}k_{24}-k_{14}k_{23}
\ee
we find the desired result.  The action of $J_1 \cdot J_3$ can be deduced from a permutation exchanging $1$ and $3$, under such a permutation $J_1 \cdot J_3 \to J_1 \cdot J_2$ and $ (-1)^{k_{23}} |S,T\ket \to  |T, S\ket$ .
Similarly under an exchange of $1$ and $4$, $J_1 \cdot J_4 \to J_1 \cdot J_2$ and $ (-1)^{k_{23}+k_{34}} |S,T\ket \to  |U, T\ket$ .
\end{proof}

While the $S$ and $T$ spins don't share the interpretation of areas of parallelograms like in the orthogonal basis (since there are extra diagonal terms), it turns out that they are still closely related as we will now show.  First of all, notice that the coefficient of the first term in (\ref{Jact1}) is the same as the eigenvalue in (\ref{eqn_eigen_J_dot_J}).  Furthermore, if one sums over $T$ in (\ref{Jact1}) it can be seen that the last two terms cancel out because $k_{13}(j_{i},S,T-1)= k_{13}(j_{i},S,T)+1$,
$k_{14}(j_{i},S,T+1)= k_{14}(j_{i},S,T)+1$... and so on.  
Therefore $\sum_{T} \left|S,T \right\ket$ is {\it  proportional } to $\left| S \right\ket$.  What we will now show in the following theorem is that the proportionality constant is exactly one.
\begin{theorem} \label{thm_sum_T}
The orthogonal basis is obtained from the $\left|S,T \right\ket$ basis by summing over the $S$ or $T$ channels
\be
\left| S \right\ket =\sum_{T}  \left|S,T \right\ket, \hspace{12pt} \left| T \right\ket =\sum_{S} (-1)^{k_{23}}  \left| S,T \right\ket, \quad \left| U \right\ket =\sum_{S+T=J-U} (-1)^{k_{23}+k_{34}}  \left| S,T \right\ket.
\ee
\end{theorem}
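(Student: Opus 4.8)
The plan is to verify the $|S\ket$ identity directly in the holomorphic representation and then obtain the $|T\ket$ and $|U\ket$ identities from it by relabeling symmetry. Note first that the hard part of the $|S\ket$ case is already settled by the remark preceding the theorem: summing (\ref{Jact1}) over $T$ makes the two bracketed lines telescope, so $\sum_{T}|S,T\ket$ is an eigenvector of $J_{1}\cdot J_{2}$ with exactly the eigenvalue $\tfrac12\big(S(S+1)-j_{1}(j_{1}+1)-j_{2}(j_{2}+1)\big)$ of $|S\ket$ in (\ref{eqn_eigen_J_dot_J}); since that eigenvalue determines $S$ uniquely and the corresponding $J_{1}\cdot J_{2}$-eigenspace is one-dimensional, $\sum_{T}|S,T\ket$ is \emph{proportional} to $|S\ket$. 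What remains is to show the constant is $1$, and for that it suffices to match the two holomorphic functions $(z_{i}|S\ket$ and $\sum_{T}(z_{i}|S,T\ket$.

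To compute $(z_{i}|S\ket$ I would expand (\ref{eqn_S_def}): inserting the explicit trivalent functions through (\ref{C}) and (\ref{eqn_3_k}), and using the elementary identity $[z_{i}|\check z\ket=\bra z|z_{i}\ket$ (immediate from $|\check z\ket=|z]$), one is left with the $z$-independent factor $[z_{1}|z_{2}\ket^{k_{12}}[z_{3}|z_{4}\ket^{k_{34}}/(k_{12}!\,k_{34}!\,a!\,b!\,c!\,d!)$ — where $k_{12}=j_{1}+j_{2}-S$, $k_{34}=j_{3}+j_{4}-S$, $a=j_{1}-j_{2}+S$, $b=-j_{1}+j_{2}+S$, $c=S+j_{3}-j_{4}$, $d=S-j_{3}+j_{4}$ — times the single Gaussian integral $\int\rd\mu(z)\,[z_{1}|\check z\ket^{a}[z_{2}|\check z\ket^{b}[z|z_{3}\ket^{c}[z|z_{4}\ket^{d}$. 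I would evaluate this by polarization: write $[z_{1}|\check z\ket^{a}[z_{2}|\check z\ket^{b}$ as the coefficient of $x^{a}y^{b}$ (up to $\binom{2S}{a}$) in $\bra z|xz_{1}+yz_{2}\ket^{2S}$ and $[z|z_{3}\ket^{c}[z|z_{4}\ket^{d}$ as the coefficient of $u^{c}v^{d}$ (up to $\binom{2S}{c}$) in $[z|uz_{3}+vz_{4}\ket^{2S}$; then the identity $\int\rd\mu(w)\,\bra\xi|w\ket^{2j}\bra w|\eta\ket^{2j}=(2j)!\,\bra\xi|\eta\ket^{2j}$ recalled just after (\ref{coherent}) collapses the integral to $(2S)!\,[\,xz_{1}+yz_{2}\,|\,uz_{3}+vz_{4}\,\ket^{2S}$, and it only remains to expand this by the multinomial theorem.

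Reading off the coefficient of $x^{a}y^{b}u^{c}v^{d}$ and cleaning up the binomials gives
\be
\int\rd\mu(z)\,[z_{1}|\check z\ket^{a}[z_{2}|\check z\ket^{b}[z|z_{3}\ket^{c}[z|z_{4}\ket^{d}\;=\;a!\,b!\,c!\,d!\sum_{[k]}\ \prod_{i,j}\frac{[z_{i}|z_{j}\ket^{k_{ij}}}{k_{ij}!},
\ee
where $i$ runs over $\{1,2\}$, $j$ over $\{3,4\}$, and $[k]$ over the non-negative integers with $k_{13}+k_{14}=a$, $k_{23}+k_{24}=b$, $k_{13}+k_{23}=c$, $k_{14}+k_{24}=d$ — that is, over the single free parameter which by (\ref{int1})--(\ref{int2}) is exactly $T$ (with $S$ held fixed, $k_{13}=j_{1}+j_{3}-T$). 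The factor $a!\,b!\,c!\,d!$ cancels the prefactor, the surviving $k_{12}!,k_{34}!$ restore the missing denominators, and one recognizes $(z_{i}|S\ket=\sum_{T}(z_{i}|[k](j_{i},S,T)\ket=\sum_{T}(z_{i}|S,T\ket$ with all coefficients equal to $1$.

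For $|T\ket$ and $|U\ket$ I would not repeat the integral but apply the relabeling identities already established in the proof of the theorem containing (\ref{Jact1}): the transposition of spinor arguments that conjugates $J_{1}\cdot J_{3}$ into $J_{1}\cdot J_{2}$ permutes the channel labels as $(S,T)\mapsto(T,S)$ and multiplies $(z_{i}|[k]\ket$ by $(-1)^{k_{23}}$, so carrying the $|S\ket$ result through this relabeling yields $|T\ket=\sum_{S}(-1)^{k_{23}}|S,T\ket$, and the transposition handling $J_{1}\cdot J_{4}$ yields $|U\ket=\sum_{S+T=J-U}(-1)^{k_{23}+k_{34}}|S,T\ket$ in the same way. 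The one place where genuine care is required is the bookkeeping of the binomial and factorial factors in the polarization step; in particular it is precisely because (\ref{eqn_S_def}) is written with the \emph{unnormalized} trivalent functions $C$ (no $\Delta(j_{1}j_{2}j_{3})$ prefactors) that the proportionality constant comes out to exactly $1$ rather than a ratio of triangle coefficients. Everything else is routine.
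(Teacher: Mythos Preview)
Your proof is correct and is essentially the paper's argument in a different wrapper. The paper performs the single Gaussian integral
\[
\int \rd\mu(z)\,{\cal C}_{(\tau_{1},\tau_{2},\tau_{12})}(z_{1},z_{2},\check z)\,{\cal C}_{(\tau_{3},\tau_{4},\tau_{34})}(z,z_{3},z_{4})
\;=\;e^{\sum_{i<j}\tau_{ij}[z_{i}|z_{j}\ket},\qquad \tau_{ij}=\tau_{i}\tau_{j}\ \ (i\in\{1,2\},\,j\in\{3,4\}),
\]
at the level of the exponential generating functional and then reads off the coefficient of $\tau_{12}^{k_{12}}\tau_{1}^{k_{1}}\tau_{2}^{k_{2}}\tau_{3}^{k_{3}}\tau_{4}^{k_{4}}\tau_{34}^{k_{34}}$. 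Your polarization variables $(x,y,u,v)$ are exactly the paper's $(\tau_{1},\tau_{2},\tau_{3},\tau_{4})$ restricted to total degree $2S$, and your identity $\int\rd\mu(z)\,\bra z|\xi\ket^{2S}[z|\eta\ket^{2S}=(2S)!\,[\xi|\eta\ket^{2S}$ is the degree-$2S$ piece of the same Gaussian. So the core computation and the coefficient matching are identical; the paper's packaging is just a bit more compact because all spins are handled at once. Your opening paragraph invoking the $J_{1}\!\cdot\!J_{2}$ eigenvalue to establish proportionality is logically sound but superfluous, since your direct evaluation already fixes the constant to $1$; the paper omits it. The $|T\ket$ and $|U\ket$ cases by relabeling symmetry match the paper's ``the other identities are obtained by permutation of indices.''
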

\begin{proof}
Using the generating functionals in (\ref{defC}) in analogy with the definition (\ref{eqn_S_def}) of $|S\ket$ we can perform the following Gaussian integral 
\bea \label{eqn_gen_fun_S_ST}
  && \int \rd\mu(z){\cal C}_{(\tau_{1},\tau_{2},\tau_{12})}(z_{1},z_{2},\check{z}) {\cal C}_{(\tau_{3},\tau_{4},\tau_{34})}(z,z_{3},z_{4})  \\
  &=& e^{\tau_{12}[z_1|z_2\ket + \tau_{34}[z_3|z_4\ket} \int \rd\mu(z) e^{\tau_{1}[\check{z}|z_1\ket + \tau_{2}[\check{z}|z_2\ket} e^{\tau_{3}[z|z_3\ket + \tau_{4}[z|z_4\ket} 
  = e^{\sum_{i<j} \tau_{ij} [z_i|z_j\ket} 
  = {\cal C}_{(\tau_{ij})}(z_{i}), \nonumber
\eea
where $|\check z\ket = |z]$ and $\tau_{13}=\tau_{1}\tau_{3}$, $\tau_{14}=\tau_{1}\tau_{4}$, $\tau_{23}=\tau_{2}\tau_{3}$, $\tau_{24}=\tau_{2}\tau_{4}$.  Now let $k_{1} = j_{1}-j_{2}+S$, $k_{2} = j_{2}-j_{1}+S$, $k_{3} = j_{3}-j_{4}+S$, and $k_{4} = j_{4}-j_{3}+S$ as prescribed by (\ref{eqn_3_k}). Then looking at the coefficient of 
\be
\tau_{12}^{k_{12}}\tau_{1}^{k_{1}} \tau_{2}^{k_{2}}\tau_{3}^{k_{3}} \tau_{4}^{k_{4}}\tau_{34}^{k_{34}}
\ee
we get the conditions $k_{1} = k_{13}+k_{14}$, $k_{2} = k_{23}+k_{24}$, $k_{3} = k_{13}+k_{34}$, and $k_{4} = k_{14}+k_{24}$.  These conditions are trivially satisfied if the $k_{ij}$ are defined as in (\ref{int1}) and (\ref{int2}) which can be seen for instance by adding $k_{12}$ to the first condition.  Notice, however that the LHS of (\ref{eqn_gen_fun_S_ST}), when expanded, is a sum over $j_i$ and $S$ whereas the RHS is a sum over $j_i$, $S$, and $T$.  Thus we get the identity
\be
\int \rd\mu(z) {C}_{(j_{1},j_{2},S)}(z_{1},z_{2},\check{z}) {C}_{(S,j_{3},j_{4})}(z,z_{3},z_{4})
= \sum_{T} ( z_{i}| S,T\ket_{j_{i}}.
\ee
which implies $|S\ket = \sum_T |S,T\ket$.The other identities are obtained by permutation of indices.
\end{proof}
\begin{figure} 
  \centering
    \includegraphics[width=0.7\textwidth]{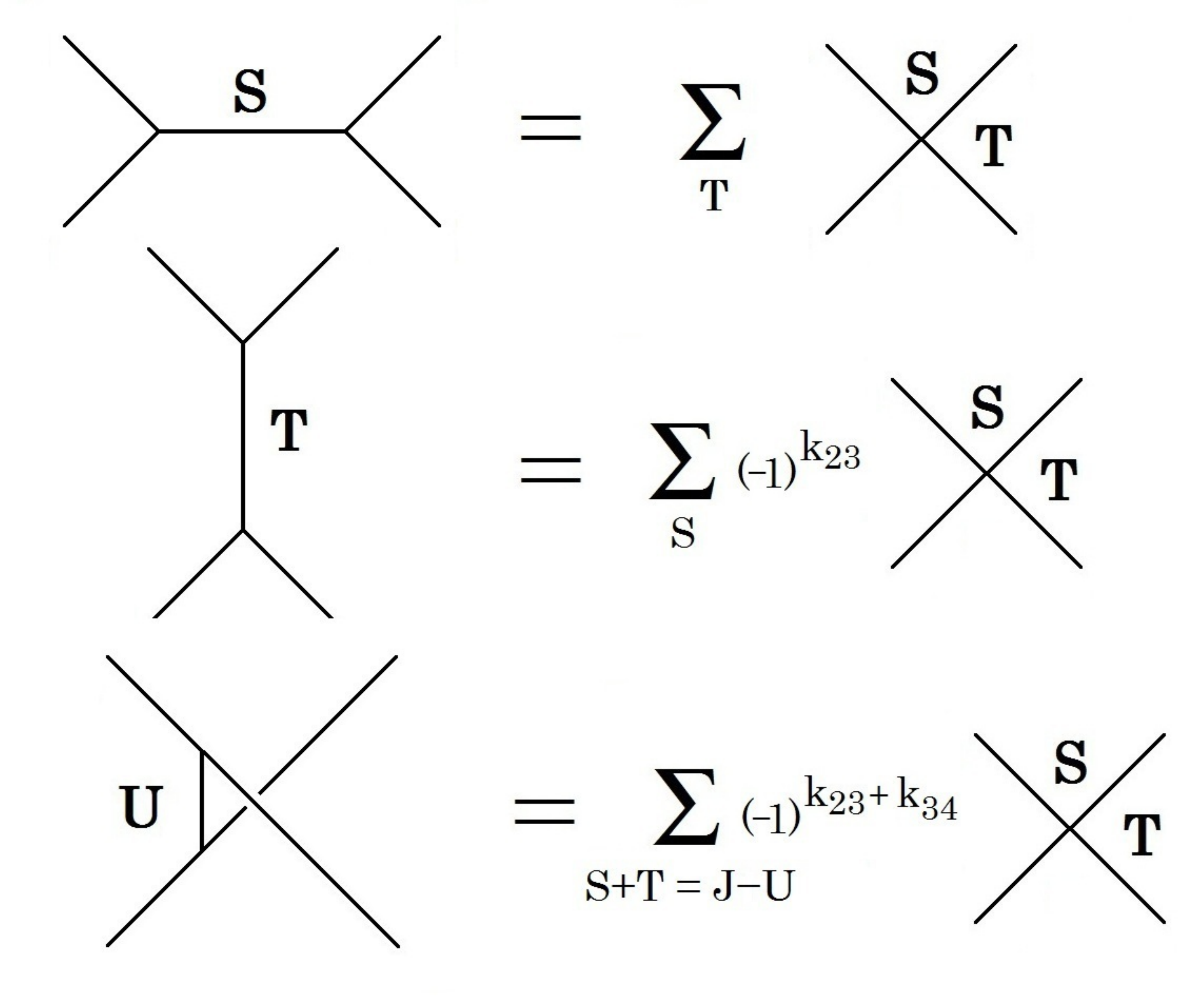}
    \caption{The graphical representation of theorem \ref{thm_sum_T} where a 4-valent vertex labelled by S and T is summed over T to produce the S channel decomposition.}  \label{fig_sumT}
\end{figure}
This last theorem shows that the $|S\ket$ and $|T\ket$ or $|U\ket$ bases are generated by the $|S,T\ket$ basis.  This is particularly useful for instance when describing spin-network amplitudes containing 4-valent nodes since a choice of $S$ or $T$ basis must be made at every such node.  The amplitude written in the $|S,T\ket$ basis however will generate all the different kinds of amplitudes by simply summing over the labellings.  For example the $15j$ symbol comes in five different kinds depending on the basis choice at the five nodes.  Thus a new symbol labelled by 20 spins, i.e. ten $j_i$, five $S$'s and five $T$'s, based on the $|S,T\ket$ basis would be a generator of these various symbols.  Moreover this $20j$ symbol would be the amplitude corresponding to the coherent 4-simplex.  We will define this new symbol shortly.

\section{Scalar Products}

In this section we will compute the scalar product in the $|S,T\ket$ basis and demonstrate the utility of Theorem \ref{thm_sum_T} by generating all the various other scalar products.  Let us first make a general remark about the form of the scalar product that follows from the  resolution of identity in (\ref{eqn_res_id}).
Let us split the scalar product into the naive product and the remainder:
\be
\left\bra S,T  \right.\left| S',T'  \right \ket = \|S,T\|^2 \delta_{S,S'} \delta_{T,T'} + O_{S,T}^{S',T'}
\ee
The resolution of the identity implies that 
\be
\sum_{S',T'} O_{S,T}^{S',T'} \frac{\bra S',T' |}{||S',T'||^{2}} =0= \sum_{S,T}\frac{|S,T\ket}{||S,T||^{2}} O_{S,T}^{S',T'}
\ee
This means that the reminder belongs to the algebra generated by the  fundamental relation in (\ref{fund_rel}).
These relations can be derived by considering the product of the operator $\hat{R}:H_{j_{i}-\frac12} \mapsto {\cal H}_{j_{i}}$ introduced previously: $R(z_i)^N \bra z_{i} |S,T\ket = \bra z_{i} | \hat{R}^{N}|S,T\ket$, where
 $R(z_i)$ is the Pl\"ucker relation given in (\ref{eqn_R_plucker}).
 Expanding $R(z_i)^N$ using the multinomial theorem we find
 \be\label{rel}
\left(R(z_i)\right)^N \prod_{i<j} [z_i|z_j\ket^{k_{ij}(j_i-N/2,s,t)} = \sum_{S,T} R^{(s,t)}_{(S,T)}(N) \prod_{i<j} [z_i|z_j\ket^{k_{ij}(j_i,S,T)} =0,
 \ee
where the summation coefficients are given by
\be\label{eqn_R_def}
R^{(s,t)}_{(S,T)}(N) = \frac{(-1)^{t-T+N}N!}{[s-S+N]![t-T+N]![S-s+T-t-N]!}
\ee
and the sum is over $S=s+N-a$, $T=t+N-b$ with $ a,b\geq 0$ and $ a+b \leq N$.  
From this result and the definition of the states $ \bra z_{i} |S,T\ket = ||S,T||_{j_{i}} / (J+1)!  \prod_{i<j} [z_i|z_j\ket^{k_{ij}(j_i,S,T)}$, we can write this relation  as
\be
 \frac{ \hat{R}^{N}  | s,t\ket_{j_{i}-N/2}}{ ||s,t||_{j_{i}-N/2}^{2 }}  =  \frac{(J+1)!}{(J-2N +1)!} 
 \left(\sum_{S,T} R^{(s,t)}_{(S,T)}(N) \frac{| S,T \ket_{j_{i}}}{ ||S,T ||_{j_{i}}^{2 }}\right) =0.
\ee
The coefficients in the sum vanish if any of the arguments in the factorials is negative.  Note that for $N=1$ we recover the fundamental relation (\ref{fund_rel}).  

Now that we have determined the linear relations among the basis states we can deduce the exact form of the scalar product
\begin{lemma}\label{product1}
The scalar product is given by
\be\label{scalar}
\left\bra S,T  \right.\left| S',T'  \right \ket = \|S,T\|^2 \delta_{S,S'} \delta_{T,T'} +
\sum_{s,t,N} \frac{(-1)^N }{N!}\frac{(J-N+1)!}{\prod_{i<j} k_{ij}(j_i-N/2,s,t)!} 
{ R^{(s,t)}_{(S,T)}(N) R^{(s,t)}_{(S',T')}(N) }{}
\ee
\end{lemma}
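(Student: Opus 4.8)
The plan is to compute the scalar product $\langle S,T | S',T' \rangle$ by exploiting the resolution of identity (Theorem~\ref{thm_completeness}) together with the relation $|S\rangle = \sum_T |S,T\rangle$ from Theorem~\ref{thm_sum_T}, which already pins down the ``diagonal'' part. Concretely, one can expand one of the two states, say $|S',T'\rangle$, in a basis that is explicitly adapted to the orthogonal bases and the null relations. The natural strategy is to use the generating functional: from $\langle z_i | S,T\rangle = \frac{\|S,T\|_{j_i}}{(J+1)!}\prod_{i<j}[z_i|z_j\rangle^{k_{ij}(j_i,S,T)}$ and the Bargmann inner product, the overlap $\langle S,T | S',T'\rangle$ is a Gaussian integral over the $\C^2$ spinor variables $z_i$ of $\prod_{i<j}\overline{[z_i|z_j\rangle}^{k_{ij}}\,[z_i|z_j\rangle^{k'_{ij}}$. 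This integral was essentially carried out in \cite{FH_exact} to establish (\ref{fundrelation}); one reads off that the ``naive'' contribution (pairing $[z_i|z_j\rangle$ with its own conjugate) gives exactly $\|S,T\|^2\delta_{S,S'}\delta_{T,T'}$, and the non-diagonal contributions come from the Plücker relation re-expressing products of $[z_i|z_j\rangle$.

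More precisely, I would argue as follows. By the general remark preceding the lemma, the remainder $O^{S',T'}_{S,T}$ must lie in the span of the null relations, i.e.\ it is a linear combination of the coefficients $R^{(s,t)}_{(S,T)}(N)$ in the $S,T$ index (and likewise symmetrically in $S',T'$), since those generate \emph{all} linear relations among the $|S,T\rangle$. So one can write
\be
O^{S',T'}_{S,T} = \sum_{s,t,N} c_{s,t,N}\, R^{(s,t)}_{(S,T)}(N)\, R^{(s,t)}_{(S',T')}(N)
\ee
for some coefficients $c_{s,t,N}$ to be determined (symmetry in the two pairs is forced by Hermiticity of the inner product and reality of the $R$'s up to the sign already displayed). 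To fix $c_{s,t,N}$ one contracts with the known relation $\hat R^N |s,t\rangle_{j_i - N/2}/\|s,t\|^2_{j_i-N/2} = \frac{(J+1)!}{(J-2N+1)!}\sum_{S,T} R^{(s,t)}_{(S,T)}(N) |S,T\rangle_{j_i}/\|S,T\|^2_{j_i}$: taking the inner product of this identity with itself, or with a single $|S',T'\rangle$, reduces the determination of $c_{s,t,N}$ to computing $\langle \hat R^N (s,t) | \hat R^N (s,t)\rangle$, which by the reality condition (\ref{real}) and the fact that $\hat R$ implements multiplication by the Plücker polynomial $R(z_i)$ becomes the Bargmann norm of $R(z_i)^N \prod_{i<j}[z_i|z_j\rangle^{k_{ij}(j_i-N/2,s,t)}$ — a single Gaussian integral. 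Using $\int d\mu(w)\,\langle a|w\rangle^{2j}\langle w|b\rangle^{2j} = (2j)!\langle a|b\rangle^{2j}$ applied to each spinor, this norm evaluates to $N!\,(J-N+1)!/\prod_{i<j}k_{ij}(j_i - N/2,s,t)!$ up to the sign $(-1)^N$ coming from $R(\check z_i) = \overline{R(z_i)}$ picking up the factor $[\check z|\check w\rangle\,[\check a|\check b\rangle$-type conjugations; matching this against the coefficient in the claimed formula gives $c_{s,t,N} = (-1)^N (J-N+1)!/\big(N!\prod_{i<j}k_{ij}(j_i-N/2,s,t)!\big)$, which is exactly (\ref{scalar}).

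The key steps, in order: (i) write the overlap as a Bargmann integral over spinors of monomials in $[z_i|z_j\rangle$ and $\overline{[z_i|z_j\rangle}$; (ii) identify the diagonal term $\|S,T\|^2\delta\delta$ and argue, via the resolution of identity, that the remainder lives in the null algebra generated by the $R^{(s,t)}_{(S,T)}(N)$; (iii) write the remainder in the bilinear form $\sum c_{s,t,N} R^{(s,t)}_{(S,T)}(N)R^{(s,t)}_{(S',T')}(N)$; (iv) determine $c_{s,t,N}$ by computing the Bargmann norm of $R(z_i)^N\prod[z_i|z_j\rangle^{k_{ij}(j_i-N/2,s,t)}$ via the Gaussian/coherent-state identity, and matching. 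I expect the main obstacle to be step (ii)/(iii): showing not merely that the remainder lies in the null algebra but that it is diagonal in the auxiliary labels $(s,t,N)$ — i.e.\ that there are no cross terms $R^{(s,t)}_{(S,T)}(N) R^{(s',t')}_{(S',T')}(N')$ with $(s,t,N)\neq (s',t',N')$. This should follow from a degree/homogeneity bookkeeping argument: the states $\hat R^N |s,t\rangle_{j_i-N/2}$ for distinct $(s,t,N)$ (with $J$ fixed) are built from genuinely different monomials before imposing the Plücker relation, so their ``overlaps'' organize the integral into a single sum; making this rigorous — essentially a careful triangularity/orthogonality statement for the filtration of ${\cal H}_{j_i}$ by powers of $\hat R$ — is the delicate part, and is presumably where the detailed computation in \cite{FH_exact} is invoked.
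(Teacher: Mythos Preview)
Your approach has a genuine gap that makes it unworkable as stated. In step (iv) you propose to determine $c_{s,t,N}$ by computing the Bargmann norm of $R(z_i)^N\prod_{i<j}[z_i|z_j\rangle^{k_{ij}(j_i-N/2,s,t)}$. But $R(z_i)$ is the Pl\"ucker polynomial, which \emph{vanishes identically} when its arguments are the bilinears $[z_i|z_j\rangle$ --- that is precisely equation (\ref{eqn_R_plucker}). Hence this Bargmann norm is zero, and so is $\langle \hat R^N(s,t)\,|\,\hat R^N(s,t)\rangle$ in the physical Hilbert space (since $\hat R^N|s,t\rangle_{j_i-N/2}=0$ as a vector in ${\cal H}_{j_i}$). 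Contracting with null vectors gives no information, so your scheme cannot fix the coefficients. The same problem undermines the ansatz in step (iii): because the images of $\hat R^N$ are all zero in ${\cal H}_{j_i}$, there is no intrinsic orthogonality or triangularity statement available there to force the remainder to be diagonal in $(s,t,N)$; the ``degree/homogeneity bookkeeping'' you hope for has no leverage when all the relevant vectors are identically zero.

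The paper's proof avoids this trap by working not with spinor bilinears $[z_i|z_j\rangle$ but with six \emph{independent} complex parameters $\tau_{ij}$, for which $R(\tau)\neq 0$. Concretely, one starts from the exact Gaussian evaluation $\langle {\cal C}_{\tau}|{\cal C}_{\tau}\rangle = (1-\sum_{i<j}|\tau_{ij}|^2+|R(\tau)|^2)^{-2}$, expands the right-hand side as a double power series in $|\tau_{ij}|^2$ and $|R(\tau)|^2$, and then expands $R(\tau)^N$ by the multinomial theorem --- which is precisely how the coefficients $R^{(s,t)}_{(S,T)}(N)$ were defined in (\ref{rel}), now with $\tau_{ij}$ in place of $[z_i|z_j\rangle$. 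Because the $\tau_{ij}$ are free variables, monomials $\prod \tau_{ij}^{k_{ij}}\bar\tau_{ij}^{k'_{ij}}$ are linearly independent, and matching coefficients against the expansion (\ref{eqn_scalar_expanded}) of the left-hand side yields the formula directly. The diagonality in $(s,t,N)$ that you could not justify is then automatic: it comes from the factorized form $|R(\tau)|^{2N}=R(\tau)^N\overline{R(\tau)}^{\,N}$, which pairs the same $(s,t,N)$ in the holomorphic and antiholomorphic factors. The lesson is that one must lift the computation off the Pl\"ucker quadric to see the structure of the remainder.
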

The proof of this formula is given  in appendix \ref{alpha_proof}.

\subsection{Constraints Quantisation}

We would like now to develop a deeper understanding of the construction just given of the scalar product.  We have seen that the complexity of the scalar product comes from the imposition of the constraints $\hat{R}=0$. This suggest that we should be able to understand the previous construction in terms of constraint quantization.
In order to do so, lets introduce the auxiliary  Hilbert space ${\cal \widehat{H}}_{j_{i}}$ with an orthogonal basis 
$|S,T)_{j_{i}}$ having $(S,T)\in {\cal N}_{j_{i}}$ and the scalar product
\be
(S',T'|S,T) =||S,T||_{j_{i}}^{2} \delta_{S,S'}\delta_{T,T'}.
\ee 
For this Hilbert space the decomposition of the identity takes the canonical form
\be
\one_{{\cal \widehat{H}}_{j_{i}}} = \sum_{S,T}  \frac{|S,T)( S,T|}{\|S,T\|_{j_{i}}^2}.
\ee
We  define the operator $\hat{R} : {\cal \widehat{H}}_{j_{i}-\frac12}\mapsto {\cal \widehat{H}}_{j_{i}}$ by
\be
\hat{R}|S,T)_{j_{i}-\frac12} \equiv 
{k_{12}k_{34}} | S,T+1)_{j_{i}} 
- {k_{13}k_{24}} |S+1,T)_{j_{i}} 
+ (k_{14}+2)(k_{23} +2)  |S+1,T+1)_{j_{i}} 
\ee
Its powers can be evaluated in terms of the coefficients introduced it the previous section, we find 
\be\label{matrixel}
{}_{j_{i}} (S,T| \hat{R}^{N}  | s,t)_{j_{i}-N/2} = ||s,t||_{j_{i}-N/2}^{2 }  \frac{(J+1)!}{(J-2N +1)!} 
  R^{(s,t)}_{(S,T)}(N).
\ee
The operator $\hat{R}$ is not hermitian, however the operator 
$$H \equiv \hat{R}^{\dagger} R $$ is an hermitian operator, being positive its kernel coincides with the kernel of $\hat{R}$. 
The intertwiner Hilbert space is defined as the quotient of this auxiliary Hilbert space by the relation $H=0$. This means that $ {\cal {H}}_{j_{i}} = \mathrm{Im}\Pi_{j_{i}} $, where $\Pi^{2}_{j_{i}}=\Pi_{j_{i}}$ with $\Pi_{j_{i}} : {\cal \widehat{H}}_{j_{i}}\to {\cal \widehat{H}}_{j_{i}}$ the projector onto the 
kernel of $H$.  This means that the intertwined states are related to the auxiliary states as 
$$ |S,T\ket_{j_{i}} = \Pi_{j_{i}} |S,T)_{j_{i}}$$ and the physical scalar is given by the matrix element of the projector
\be
\left\bra S,T  \right.\left| S',T'  \right \ket = ( S,T | \Pi_{j_{i}} |  S',T' ).
\ee
From the results of the previous section this projector can be explicitly constructed.
\begin{lemma}\label{product2}
The projector onto the kernel of $H$ is explicitly given by
\be
\Pi_{j_{i}} = 1 +\sum_{N=1}^{\mathrm{min}(2j_{i})}\frac{(-1)^{N}}{N!} \frac{(J-N+1)!(J-2N+1)!}{(J+1)!^{2}} \, \hat{R}^{N}(\hat{R}^{\dagger})^{N}.
\ee
\end{lemma}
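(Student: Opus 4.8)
### Proof Proposal for Lemma 10 (the projector formula)

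The plan is to verify directly that the proposed operator $\Pi_{j_{i}}$ is the orthogonal projector onto $\ker H = \ker \hat R$ by checking two properties: that it is idempotent and self-adjoint (hence an orthogonal projector onto its image), and that its image is exactly $\ker\hat R$. Since we already have the explicit matrix elements \eqref{matrixel} of all the powers $\hat R^{N}$ between the auxiliary spaces ${\cal \widehat H}_{j_{i}-N/2}$, the cleanest route is to reduce everything to combinatorial identities among the coefficients $R^{(s,t)}_{(S,T)}(N)$ defined in \eqref{eqn_R_def}. First I would record that the physical scalar product of Lemma~\ref{product1} is, by construction of the $|S,T\ket_{j_i}$ states as $\hat R$-descendants summed with the weights of \eqref{rel}, precisely the matrix element $(S,T|\Pi_{j_i}|S',T')$ with $\Pi_{j_i}$ as written; so the content of the lemma is really the assertion that this $\Pi_{j_i}$ squares to itself and kills $\hat R$.

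The key steps, in order. (1) Self-adjointness: each summand $\hat R^{N}(\hat R^{\dagger})^{N}$ is manifestly self-adjoint, and the scalar coefficients are real, so $\Pi_{j_i}^{\dagger}=\Pi_{j_i}$ is immediate. (2) $\hat R$ annihilates the image: equivalently $\hat R\,\Pi_{j_i}=0$. Writing $\hat R\,\Pi_{j_i} = \hat R + \sum_{N\ge 1} c_N\, \hat R^{N+1}(\hat R^{\dagger})^{N}$ with $c_N = \frac{(-1)^N}{N!}\frac{(J-N+1)!(J-2N+1)!}{(J+1)!^2}$, one inserts the resolution of identity on the intermediate spaces and uses \eqref{matrixel} to turn each operator product into a sum over the $R^{(s,t)}_{(S,T)}(N)$ coefficients; the vanishing then follows from a single recursion relating $\sum_{s',t'} R^{(s',t')}_{(S,T)}(1)\,\|s',t'\|^{2}\,R^{(s,t)}_{(S',T')}(N)\,(\text{normalisations})$ to $R^{(s,t)}_{(S,T)}(N+1)$ — that is, the telescoping statement that convolving the $N$-step relation coefficients with the one-step ones reproduces the $(N+1)$-step coefficients, up to the combinatorial factor that makes the alternating sum collapse. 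This is where I would spend the most care: it is essentially a Vandermonde/Chu identity for the triple-factorial coefficients in \eqref{eqn_R_def}, and getting the index ranges $S=s+N-a$, $T=t+N-b$, $a,b\ge0$, $a+b\le N$ to line up across the convolution is the fiddly part. (3) Idempotency: $\Pi_{j_i}^{2}=\Pi_{j_i}$. Expand the product, regroup by total power of $\hat R$ and $\hat R^{\dagger}$, and show the coefficient of $\hat R^{M}(\hat R^{\dagger})^{M}$ on the left equals $c_M$ for $M\ge1$ (and $1$ for $M=0$); after inserting \eqref{matrixel} this again reduces to the same family of binomial identities, now in the guise $\sum_{N} (-1)^{N} \binom{M}{N}(\text{ratio of factorials}) = c_M/c_0$-type collapses. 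Alternatively, once (2) is known, idempotency is cheap: $\Pi_{j_i}^{2}=\Pi_{j_i} + \Pi_{j_i}\big(\sum_{N\ge1}c_N \hat R^{N}(\hat R^{\dagger})^{N}\big)$, and each term in the second piece contains a factor $\hat R^{\dagger}$ adjacent to $\Pi_{j_i}$, i.e. $\Pi_{j_i}\hat R^{N}(\cdots) = (\hat R^{\dagger N}\Pi_{j_i})^{\dagger}(\cdots)^{?}$ — more precisely one uses $\hat R\,\Pi = 0 \Rightarrow \Pi\,\hat R^{\dagger}=0$, so $\Pi_{j_i}\hat R^{N}(\hat R^{\dagger})^{N}$ vanishes for $N\ge 1$ provided the leftmost $\hat R$ acts after a $\Pi$; one has to be slightly careful because the spaces are graded by $j_i$, but the same telescoping handles it. (4) Image is all of $\ker\hat R$: since $\Pi_{j_i} = 1 + (\text{operator built from }\hat R\hat R^{\dagger})$ and $H = \hat R^{\dagger}\hat R$ is positive with $\ker H=\ker\hat R$, any $v\in\ker\hat R$ satisfies $\hat R^{N}(\hat R^{\dagger})^{N}v = \hat R^{N}(\hat R^{\dagger})^{N}v$; one shows $(\hat R^{\dagger})v$ need not vanish but the combination is arranged so that $\Pi_{j_i}v = v$ — most transparently, $1-\Pi_{j_i}$ is a polynomial in $\hat R\hat R^{\dagger}$ with no constant term times $\hat R\hat R^{\dagger}$... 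I would instead argue that $\mathrm{rank}\,\Pi_{j_i} = \dim\ker H$ by a dimension count (the number of independent $|S,T\ket_{j_i}$, namely $\dim{\cal H}_{j_i}$, equals $|{\cal N}_{j_i}| - \mathrm{rank}\,\hat R$), together with $\mathrm{Im}\,\Pi_{j_i}\subseteq\ker\hat R$ from step (2).

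The main obstacle I anticipate is step (2)/(3): establishing the convolution identity for the coefficients $R^{(s,t)}_{(S,T)}(N)$ that makes the alternating sums telescope. Everything else is formal (self-adjointness, the reduction to matrix elements, the dimension count). I would attack the convolution by generating functions — the coefficients in \eqref{eqn_R_def} are, up to sign and an overall factorial, the expansion coefficients of $(R(z_i))^{N}$ in the monomial basis, so the identity $\hat R\,\Pi_{j_i}=0$ is really the operator shadow of the trivial polynomial identity $R(z_i)\cdot\big(\text{anything}\big)\big|_{R=0}=0$ already used to define the $|S,T\ket$ states; phrasing it that way should let me bypass the raw factorial manipulation and instead say: apply $\hat R^{N}(\hat R^{\dagger})^{N}$, read off via \eqref{matrixel}, resum using \eqref{rel}, and observe that the $S,T$-sums are exactly the ones proven to vanish in the derivation of Lemma~\ref{product1}. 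If that works, the lemma follows with essentially no new computation beyond bookkeeping of the factorial prefactors $(J-N+1)!(J-2N+1)!/(J+1)!^{2}$, which are precisely the normalisations appearing in \eqref{matrixel} and must cancel between the two occurrences of $\hat R$.
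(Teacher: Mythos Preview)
Your approach is genuinely different from the paper's, and more laborious. The paper does not verify idempotency or $\hat R\,\Pi=0$ at all: it proves Lemmas~\ref{product1} and~\ref{product2} together by a single generating-functional computation. Expanding the closed-form evaluation of $\bra {\cal C}_\tau|{\cal C}_\tau\ket$ in powers of $|R(\tau)|^{2}$, shifting $j_i\to j_i-N/2$, and using \eqref{rel} to convert each $|R(\tau)|^{2N}$ factor into sums of $R^{(s,t)}_{(S,T)}(N)$ coefficients, it reads off $\langle S,T|S',T'\rangle$ as the double sum of Lemma~\ref{product1}. That sum is then rewritten, via the matrix elements \eqref{matrixel} and the resolution of identity on each $\widehat{\cal H}_{j_i-N/2}$, as $(S,T|\,[\text{claimed formula}]\,|S',T')$. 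Since by the constraint-quantisation setup the physical scalar product \emph{is} $(S,T|\Pi_{j_i}|S',T')$ for the abstract projector onto $\ker H$, equality of all matrix elements in the complete auxiliary basis identifies the claimed formula with $\Pi_{j_i}$ directly --- no projector axioms need checking.

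Your direct-verification route is viable in principle, but the step you correctly identify as the crux --- the convolution identity making $\hat R\,\Pi_{j_i}$ telescope to zero --- is exactly what the generating-functional argument delivers without any factorial manipulation, and your own final paragraph effectively points back to Lemma~\ref{product1} anyway. There is also a grading subtlety you flag but do not resolve: the $\hat R$'s inside the formula for $\Pi_{j_i}$ are the incoming ones $\widehat{\cal H}_{j_i-N/2}\to\widehat{\cal H}_{j_i}$, whereas the $\hat R$ whose kernel defines $\Pi_{j_i}$ is the outgoing one $\widehat{\cal H}_{j_i}\to\widehat{\cal H}_{j_i+1/2}$. Your shortcut for idempotency --- adjointing $\hat R\,\Pi=0$ to kill $\Pi\,\hat R^N(\hat R^\dagger)^N$ --- conflates these levels: you would need $\Pi_{j_i}\hat R_{j_i}=0$ (incoming), which is not the adjoint of $\hat R_{j_i+1/2}\Pi_{j_i}=0$ (outgoing). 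The paper's route sidesteps this entirely by never manipulating $\Pi$ as an operator.
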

The proof is given in  appendix \ref{alpha_proof}.

\subsection{Overlap with the Orthogonal Basis}

Let us now show how theorem \ref{thm_sum_T} can be used to generate the various other scalar products.  In appendix \ref{app_R_delta} we show  that we have the following identity
\be \label{eqn_R_delta}
  \sum_{T} R^{(s,t)}_{(S,T)}(N) = \delta_{s,S}. 
\ee
This identity translates into the statement that $\hat{R}^{\dagger}$ acts diagonally on $|S)_{j_{i}}=\sum_{T}|S,T)_{j_{i}}$:
\be
(R^{\dagger})^{N} |S)_{j_{i}} = \frac{(J-2N +1)!}{(J+1)!} |S)_{j_{i}-N/2}.
\ee
For details on proving this identity see appendix \ref{app_R_delta}.  Therefore summing over $T$ in (\ref{scalar}) yields
\bea \label{eqn_s_ST_overlap}
  \left\bra S  \right.\left|S',T' \right \ket &=&
  \sum_{N=0}^{\mathrm{min}(2j_{i})}  \alpha_{J,N} \sum_{t} R^{(S,t)}_{(S',T')}(N) ||S,t||_{j_{i}-N/2}^{2}
\eea
where it is convenient to define
\bea \label{eqn_alpha}
 \alpha_{J,N} &\equiv & 
 \frac{(-1)^N}{N!} \frac{(J-N+1)!}{(J-2N+1)!}.
\eea
By summing over the different labels in (\ref{scalar}) we can compute the remaining scalar products:
\bea \label{eqn_usual_sc_prods}
  \left\bra S\right|\left.S' \right\ket = \delta_{S,S'}\sum_{T,N} \alpha_{J,N} ||S,T||_{j_{i}-N/2}^{2}, \hspace{12pt}
   \left\bra T\right|\left.T' \right\ket = \delta_{T,T'} \sum_{S,N} \alpha_{J,N} ||S,T||_{j_{i}-N/2}^{2},\nonumber \\ \hspace{12pt}  \bra S | T \ket_{j_{i}} = (-1)^{k_{23}} \sum_N \alpha_{J,N} ||S,T||_{j_{i}-N/2}^{2} 
   =  \sum_N \alpha_{J,N} (S|T)_{j_{i}-N/2}
\eea
where here the sum over $N$ starts from zero.  Hence all of the scalar products between $|S\ket$, $|T\ket$, and $|S,T\ket$ bases are different summations over $\alpha_{J,N}$ and the canonical norms.  
%
In appendix \ref{sec_scalar_products} we show how to perform the summations in (\ref{eqn_usual_sc_prods}) to give the well known normalization factors for $|S\ket$ and $|T\ket$.  They are given by
\be
  \left\bra S\right|\left.S' \right\ket  = \frac{\delta_{S,S'}}{2S+1}  \Delta^{2}(j_{1}j_{2}S) \Delta^{2}(j_{3}j_{4}S), \hspace{12pt} \left\bra T\right|\left.T' \right\ket  = \frac{\delta_{T,T'}}{2T+1}  \Delta^{2}(j_{1}j_{3}T) \Delta^{2}(j_{2}j_{4}T),
\ee
where the triangle coefficients were given in (\ref{eqn_tri_coeff}).  Finally, it is easy to see that the overlap between the $|S\ket$ and $|T\ket$ bases is given by a 6j symbol.  That is, the third sum in (\ref{eqn_usual_sc_prods}) can be recognized as the Racah expansion of the 6j symbol by making a change of variable $m = J-N$.  Doing so we get
\bea
  \bra S | T \ket 
  &=& {(-1)^{J+k_{23}}}{\Delta(j_{1}j_{2}S)\Delta(j_{3}j_{4}S)\Delta(j_{1}j_{3}T)\Delta(j_{2}j_{4}T)} \sixj{j_1}{j_2}{S}{j_4}{j_3}{T}.
\eea

These relations with the orthogonal basis provide a consistency check, but they will also be useful later in connecting with the 15j symbol.  To do so we will next study the contraction of the $|S,T\ket$ states.  


\section{Spin Network Amplitudes}

In this section we show how the discrete coherent basis (\ref{C}) is a natural basis for representing and computing spin network amplitudes.  A spin network can be defined by a directed graph $\Gamma$ in which the edges are labelled by spins $j_e$
and vertices are labelled by intertwiners. The spin network amplitude is obtained by contracting the intertwiners along the  edges of $\Gamma$. 
Depending on the intertwiner basis we get different amplitudes.  The two bases we consider here are the continuous basis $\|j_{i},z_{i}\ket $ and the discrete basis $|[k_{ij}]\ket$. 

More precisely lets denote by $\{\Omega_v\}_{v\in V_{\Gamma}} $ a choice of intertwiner for every vertex of $\Gamma$ where $\|\Omega_v\ket \in \mathrm{Inv}_{\mathrm{SU(2)}}(\bigotimes_{e\supset v}V^{j_{e}})$ is represented by a holomorphic function $( w_e | \Omega_v \ket$.
The contraction of the intertwiners  is accomplished using the scalar product (\ref{barg_in_prod}) and is denoted
\be
  \underset{v}{\corner} \|\Omega_{v} \ket \equiv \int \prod_{e \in E_\Gamma} \rd\mu(w_e) \prod_{v \in V_\Gamma} ( w_e \| \Omega_v \ket,
\ee
where $E_\Gamma$ and $V_\Gamma$ are the set of edges and vertices of $\Gamma$.  Note that in the contraction $w_{e^{-1}} \equiv \check{w}_e$.  In particular, the contraction of coherent intertwiners produces an amplitude which is given by
\be \label{amplitude}
  A_\Gamma(j_e,z_e) \equiv \underset{v\in V_\Gamma}{\corner} {\|j_e,z_e \ket} = \int \prod_{v\in V_\Gamma} \rd g_v \prod_{e \in E_\Gamma} \frac{ [z_e|g_{s_e}g_{t_e}^{-1}|z_{e^{-1}}\ket^{2j_e}}{(2j_{e})!}.
\ee

The amplitude for a general graph, in the discrete basis, will depend on the integers $k_{ee'}^{v}$ associated with each pair of edges meeting at $v$ and is denoted 
$$A_\Gamma(k_{ee'}^{v})\equiv \underset{v\in V_\Gamma}{\corner} |k^{v}_{ee'}\ket.$$  
The fundamental relation (\ref{fundrelation}) between the two bases implies that these two amplitudes are related as follows
\bea
  A_\Gamma(j_e,z_e) &=& \sum_{k^{v}_{ee'}\in K_j} A_{\Gamma}(k_{ee'}^{v}) \prod_v \frac{(z_{e}| k^{v}_{ee'}\ket}{\|[k^{v}]\|^2}\nn \\
  &=&  \sum_{k^{v}_{ee'}\in K_j} A_{\Gamma}(k_{ee'}^{v})
\prod_{v}  \left(\frac{\prod_{(ee')\supset v} [z_{e}|z_{e'}\ket^{k_{ee'}^{v}}}{(J_{v}+1)!}\right). \label{eqn_discrete_amp}
\eea



We would now like to evaluate these  amplitudes.  In \cite{FH_exact} it is shown how to evaluate the integrals in (\ref{amplitude}) by comparing the coefficients of the same homogeneity in the following generating functional 
\be \label{eqn_gen_func}
{\cal A}_{\Gamma}(z_{e}) \equiv \sum_{j_e} \int \prod_{v\in V_\Gamma} \rd g_v (J_v + 1)! \prod_{e \in E_\Gamma} \frac{[z_e|g_{s_e}g_{t_e}^{-1}|z_{e^{-1}}\ket^{2j_e}}{(2j_e)!} = \frac{1}{(1+\sum_{C} A_C(z_e))^2}
\ee 
where $J_v$ is the sum of the spins at the vertex $v$.  

The sum is over collections $C = \{c_1,...,c_k\}$ of non-trivial cycles of the graph which are disjoint, i.e. do not share any edges or vertices with themselves or the other cycles.  The quantities $A_C \equiv A_{c_1} \cdots A_{c_k}$ are defined for each cycle $c_i = (e_1,...,e_n)$ by
\be
  A_{c_i}(z_e) \equiv -(-1)^{|e|} [\tilde{z}_{e_{1}} | z_{e_{2}}\ket [\tilde{z}_{e_2}| z_{e_3}\ket \cdots [\tilde{z}_{e_{n}}|z_{e_1}\ket
\ee
where $\tilde{z}_e \equiv z_{e^{-1}}$ and $|e|$ is the number of edges in the cycle which agrees with the orientation of $\Gamma$.  
Expanding the generating functional we obtain the explicit expansion
\be
{\cal A}_{\Gamma}(z_{e}) = \sum_{k_{ee'}^{v}} R_{\Gamma}(k_{ee'}^{v}) \prod_{v} \left(\prod_{(ee')\supset v} [z_{e}|z_{e'}\ket^{k_{ee'}^{v}}\right).
\ee
where $R_{\Gamma}(k_{ee'}^{v})$ are the generalization of the Racah summation for an arbitrary graph
\be \label{eqn_power}
R_{\Gamma}(k_{ee'}^{v}) \equiv  \sum_{[M_C]} (-1)^{N+s} \frac{(N+1)!}{\prod_{C} M_C!}
\ee
where $N = \sum_C M_C$ and the sign $s$ accounts for the ordering of $ee'$ in $[z_{e}|z_{e'}\ket$.  The $M_C$ are positive integers labelled by each disjoint union of cycles $C$ and are summed over. These integers are restricted to depend on the $k_{ee'}^{v}$ by the relation
\be \label{eqn_kee}
  k_{ee'}^{v} = \sum_{C \supset (ee')} M_C,
\ee
where the sum is over all cycle unions $C$ which contain a cycle with the corners $(ee')$ or $(e'e)$.\footnote{Note that the solution of (\ref{eqn_kee}) is not unique since in  the number of cycles is usually greater than the number of independent $k_{ee'}$.  Therefore in general the coefficients $A_\Gamma(k_{ee'}^{v})$ will be given by a sum over arbitrary parameters. This leads  for example  to a summation over  one parameter for the tetrahedral graph, which corresponds to  the Racah expansion of the 6j symbol.  For the 4-simplex this will involve 17 parameters.}

On the other hand the relationship between continuous and discrete bases implies that the generating functional can also be expressed in terms of the discrete intertwiners as
\be
{\cal A}_{\Gamma}(z_{e}) = \sum_{k_{ee'}^{v}} A_{\Gamma}(k_{ee'}^{v}) \prod_{v} \left(\prod_{(ee')\supset v} [z_{e}|z_{e'}\ket^{k_{ee'}^{v}}\right).
\ee
This shows that  
$A_{\Gamma}(k_{ee'}^{v}) \simeq R_{\Gamma}(k_{ee'}^{v})$ 
where $\simeq$ is an equivalence relation on amplitudes $A_{\Gamma}(k^{v}_{ee'})$.
It is defined by  $A_{\Gamma}(k^{v}_{ee'})\simeq 0$ iff $\sum_{k^{v}_{ee'}}A_{\Gamma}(k^{v}_{ee'}) \prod_{v,(ee')} [z_{e}|z_{e'}\ket^{k_{ee'}^{v}}=0$. That is, it vanishes due to the Plucker relations when contracted with and summed over $\prod_{v,(ee')} [z_{e}|z_{e'}\ket^{k_{ee'}^{v}}$.

In order to find the  analog of the Racah formula  for the amplitude $A_{\Gamma}(k_{ee'}^{v})$ we need to use the more general generating functional 
\be
{\cal G}_{\Gamma}(\tau^{v}) \equiv \sum_{k_{ee'}^{v}} A_{\Gamma}(k_{ee'}^{v}) \prod_{v} \left(\prod_{(ee')\supset v} (\tau_{ee'}^{v})^{k_{ee'}^{v}}\right)
\ee
where $\tau^{v}_{ee'}$ are arbitrary complex parameters associated with pairs of edges meeting at $v$.
This generating functional has been evaluated in \cite{FH_exact}, and remarkably it was shown that they assume the same form as (\ref{eqn_gen_func}), the only difference being that the sum is {\it not only} over unions of cycles, but also over unions of simple loops denoted by $L$.
A simple loop is loop of non overlapping edges.
The difference between loops and cycles is that cycles do not have any intersections.  Hence, the unions of cycles are a subset of the unions of loops.  This result implies that the amplitude $A_{\Gamma}(k_{ee'}^{v})$ can be expressed as a Racah sum over loops:
\be 
A_{\Gamma}(k_{ee'}^{v}) \equiv  \sum_{[M_L]} (-1)^{N+s} \frac{(N+1)!}{\prod_{L} M_L!}
\ee
where $M_{L}$ are integers labelled by each disjoint union of simple loops $L$, 
they are summed over with the restriction
$
  k_{ee'}^{v} = \sum_{L \supset (ee')} M_L,
$
while
$N = \sum_L M_L$.  For more details see \cite{FH_exact}.

\subsection{The 20j symbol}

Let us now use all of the results obtained for the $|S,T\ket$ basis to compute a generalization of the $15j$ symbol, which will depend now on 20 spins: ten $j_e$ on the edges and five $S_v$, and five $T_v$ on the vertices.  The $20j$ symbol will be the amplitude corresponding to the coherent 4-simplex.  It is a generalization of the $15j$ symbol since by theorem \ref{thm_sum_T} we can sum over five of the extra spins and obtain one of the five variations of the $15j$ symbol.

First label the vertices of the 4-simplex by $i=1,..,5$ and an edge directed from $i$ to $j$ as in $z_{e} \equiv z^{i}_{j} \neq z^{j}_{i} \equiv z_{e^{-1}}$. Let $\{20j\}_{S_i,T_i} \equiv \underset{i}{\corner} |S_i, T_i \ket$ denote the unnormalized 20j symbol.
The relation (\ref{eqn_discrete_amp}) between coherent and discrete amplitudes reads
\be 
  A_{4S}(j_{ij},z^{i}_{j}) = \sum_{S_i,T_i} \{20j\}_{S_i,T_i} \prod_{i} \frac{( z^{i}_{j}  |  S_i, T_i\ket}{\|S_i,T_i\|^2}
\ee
\begin{figure} 
  \centering
    \includegraphics[width=0.5\textwidth]{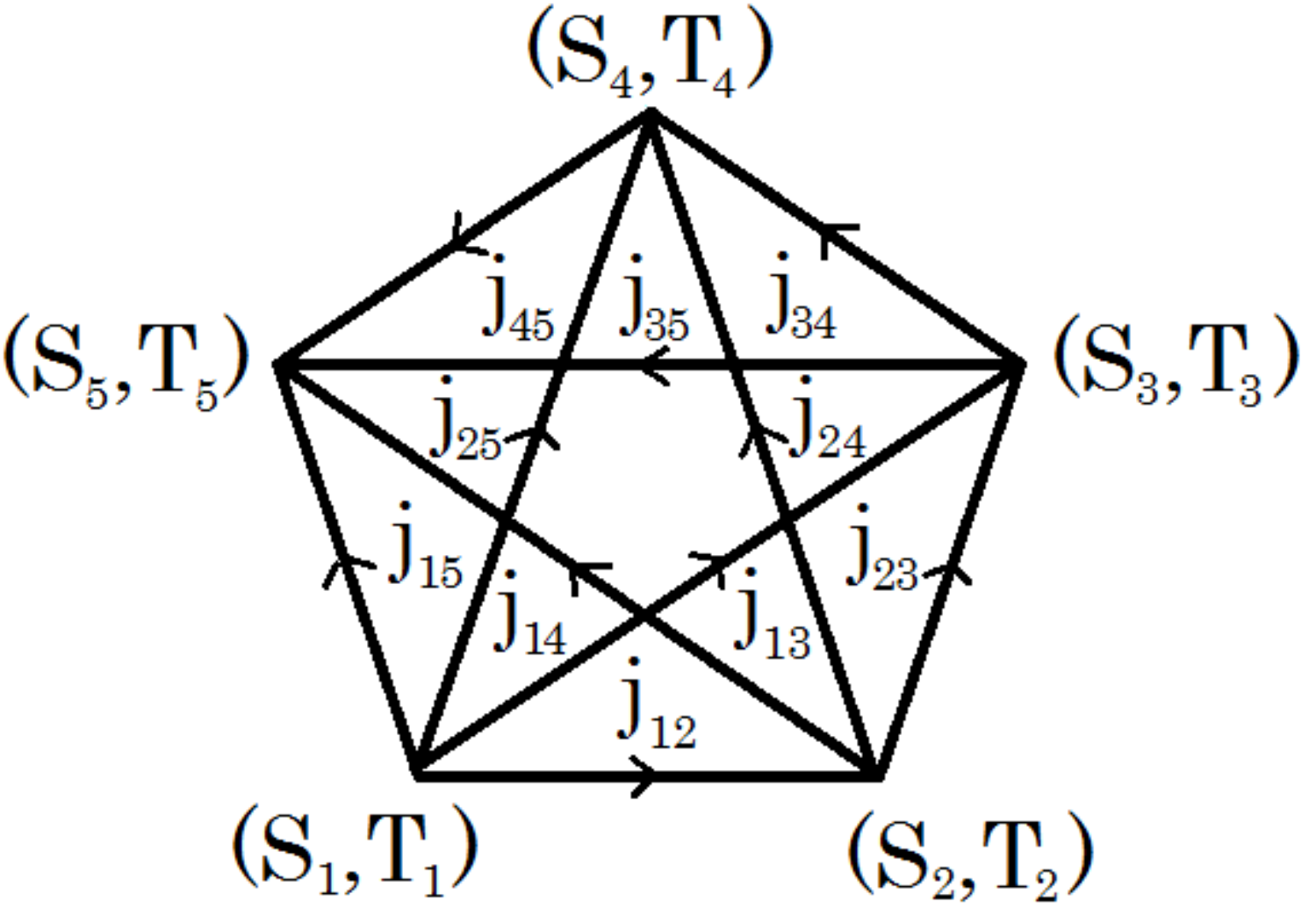}
    \caption{The graphical representation of the 20j symbol: The amplitude of the coherent 4-simplex.}  \label{fig_20j}
\end{figure}
We can express the $20j$ symbol explicitly in terms of the $15j$ symbol by inserting another five resolutions of identity $\one_{j_i} = \sum_{S'} |S'\ket\bra S'|/\|S'\|^2$ into the definition of the 20j symbol to get
\be \label{eqn_20j_15j}
  \{20j\}_{S_i,T_i} = \sum_{S_{i}^{'}} \{15j\}_{S'_i} \prod_{i} \frac{\bra S_{i}^{'} | S_i, T_i \ket}{\|S_{i}^{'}\|^2}
\ee
where $\{15j\}_{S'_i}$ is the unnormalized 15j symbol defined by $\{15j\}_{S'_i} \equiv \underset{i}{\corner} |S_{i}^{'} \ket$ and is equal to $\prod_i \|S_{i}^{'}\|$ times the conventional normalized 15j symbol up to a sign depending on the orientation of the edges.  Notice that by summing over $T_i$ in (\ref{eqn_20j_15j}) we obtain the unnormalized $15j$ symbol as expected.  Thus the five different kinds of $15j$ symbols are derived from the $20j$ by summing over the different channels.  For example the $15j$ with all $S$ channels is given by
\be
  \{15j\}_{S_i} = \sum_{T_i} \{20j\}_{S_i,T_i}, 
\ee
and the other kinds of 15j symbol are given similarly.

Let us now use theorem \ref{thm_sum_T} to rewrite the 20j symbol in a more symmetric form
\be
  \{20j\}_{S_i,T_i} = \sum_{S_{i}^{'},T_{i}^{'}} \{15j\}_{S'_i} \prod_{i} \frac{\bra S_{i}^{'},T_{i}^{'} | S_i, T_i \ket}{\|S_{i}^{'}\|^2}.
\ee
In this form it is easy to derive the asymptotics of the 20j symbol by those of the 15j since for large spins (see the next section) $\bra S,T | S', T' \ket \sim \delta_{S,S'} \delta_{T,T'} \|S,T\|^2$, and therefore 
\be
  \{20j\}_{S_i,T_i}\sim  \{15j\}_{S_i} \prod_i \frac{\|S_i,T_i\|^2}{\|S_i\|^2}.
\ee
This means that understanding the asymptotics of the $20j$ symbol will give us the asymptotics of the $15j$ symbol too.
There has been recent results on the asymptotics of spin networks evaluation \cite{FC,B} but this progress concerns however the asymptotic evaluation of the coherent state amplitude $ A_{4S}(j_{ij},z^{i}_{j})$. 
The asymptotic evaluation of the non coherent $15j$ symbol is not known and as we are going to see in the next section our techniques allow us to unravel the asymptotics for the first time.

Finally, we will give an explicit expression for the 20j, which is independent of the 15j, as a generalized Racah formula.  By solving (\ref{eqn_kee}) for $M_C$ in terms of $k^{v}_{ee'}$ we can derive a Racah formula for the amplitude of an arbitarary graph which is given by (\ref{eqn_power}).  Since there are 37 cycles $C$ in the 4-simplex and only 20 independent $k^{v}_{ee'}$ this formula will not be unique and will involve a sum over 17 parameters $p_k$.  The Racah formula is then
\be
  \{20j\}_{S_i,T_i} \simeq \sum_{p_1 \cdots p_{17}} \frac{(-1)^{N+s}(N+1)!}{\prod_{C} M_{C}(j_{ij},S_i,T_i,p_k)!}
\ee
where $N = \sum_C M_C$ and the sign $s = M_{1234}+M_{1235}+M_{1245}+M_{12354}+M_{12435}$ accounts for the edge ordering.  In appendix \ref{20j_symbol} we give an explicit parameterization of the $M_C$ in terms of the $p_k$ although we note that simpler parametrizations might exist.  Furthermore, using various hypergeometric formulas one may be able to perform some of the summations over the $p_k$ explicitly.

\section{Semi-Classical Limit}

It is now well-known and explained in great detail in \cite{Conrady:2009px,Freidel:2009nu} that the space of $4$-valent intertwiners can be uniquely labelled by 
oriented tetrahedra. In this section we will demonstrate this correspondence for the $|S,T\ket$ states.  In order to connect with the classical behaviour we would like to analyze the asymptotics of the scalar product of two such states in the limit where the spins $(j_{i},S,T)$ are all uniformly large.
We use the fact that this scalar product can itself be expressed as an integral 
\be
\la S,T|S',T'\ra =\frac1{\prod_{i<j}( k_{ij}! k_{ij}'!)} \int \prod_{i} \frac{\rd^2 z_{i}}{\pi^2}  e^{- S_{k}(z)}
\ee
  where the action is given by
  \be \label{eqn_action}
  S_{k} = \sum_{i} \la z_{i}| z_{i} \ra - \sum_{i<j}\left( k_{ij} \ln[z_{i}|z_{j}\ra + k'_{ij} \ln \la z_{i}|z_{j}] \right).
  \ee
The asymptotic evaluation of this scalar product is controlled by the stationary points\footnote{ If $k_{ij}= N K_{ij}$ and we  define $ z_{i}=\sqrt{N} x_{i}$ we see that this integral that we want to evaluate in the large $N$ limit takes the usual form $N^{2J} \int \prod_{i}\rd x_{i} e^{-N S_{K}(x)} $ .} of this action.
That is we look for solutions of 
\be
\sum_{j\neq i} \frac{k_{ij}}{[z_{i}|z_{j}\ra} [z_{i}| = \la z_{j} |,\qquad
\sum_{j\neq i} \frac{k_{ij}'}{\bra z_{j}|z_{i}]} |z_{i}] = | z_{j} \ra .\label{kz}
\ee
Now it is clear that if $k\neq k'$ there cannot be any real solution. This shows that this scalar product is exponentially suppressed unless $(S,T)=(S',T')$
\footnote{We could still evaluate the  integral asymptotically  when  $k\neq k'$ by looking for complex solutions.
In order to do so we  and use the fact that $[\check{z}_{i}|\check{z}_{j}\ra = [z_{i}|z_{j}\ra^{*}$.
We get an action holomorphic in $ |z_{i}\ket$ and $|\check{z}_{i}\ket$:
$$S_{k} =- \sum_{i} [ \check{z}_{i}| z_{i} \ra - \sum_{i<j}\left( k_{ij} \ln[z_{i}|z_{j}\ra + k'_{ij} \ln [\check{ z}_{i}|\check{z}_{j}\ra \right).$$
 The stationary equations are 
\be
\sum_{j\neq i} \frac{k_{ij}}{[z_{i}|z_{j}\ra} [z_{j}| = -[ \check{z}_{i} |,\qquad 
\sum_{j\neq i} \frac{k_{ij}'}{[\check{z}_{i}|\check{z}_{j}\ra} [\check{z}_{j}| =  [ z_{i} |,
\ee
In the case $k_{ij}\neq k'_{ij}$ we do not demand that $[{z}_{i}| = \la \check{z}_{i}|$ which corresponds to the real contour of integration.}.
Furthermore, if we contract this   equation with $ | z_{j}\ra$  we obtain the constraints
\be
2j_{i} = \sum_{j\neq i} k_{ij}= \la z_{i}|{z}_{i}\ra.
\ee
These equations are invariant under $\SU(2)$,  so $  g|z_{i}\ra$ is a solution if $|z_{i}\ra$ is and $ g \in \SU(2)$.
We also have an invariance of these equations under the rescaling $ |z_{i}\ra \to e^{i\alpha^{i}} |z_{i}\ra$.

Finally, by taking the conjugation of (\ref{kz}) $|\check z\ket = |z]$ and using the fact that $[\check{z}_{j}|\check{z}_{i}\ra = [z_{j}|z_{i}\ra^{*}$ we can show that this equation is also equivalent
to the conjugated equation 
\be
\sum_{j\neq i} \frac{k_{ij}}{[\check{z}_{j}|\check{z}_{i}\ra} [\check{z}_{j}| = \la \check{z}_{i} |.\label{conjkz}
\ee
This means that the $\mathbb{Z}_{2}$ transformation  $ |z_{i}\ra \to | \check{z}_{i}\ra =|z_{i} ]$ is also a symmetry of the equation of motion.  In summary this shows that the symmetry group of the solutions (\ref{kz}) is given by $\SU(2) \times \mathrm{U}(1)^{4}\times \mathbb{Z}_{2}$.

\subsection{Relation with Framed Tetrahedra}

What is remarkable about the solutions (\ref{kz}) is that they are in one to one correspondence with framed tetrahedra.
A framed tetrahedron in $\R^{3}$ is a tetrahedron together with a choice of frame on each face (i.e. a choice of a preferred direction tangential to the face).
The SU(2) invariance corresponds to rotations of the tetrahedron, while a rotation of the frame on face $i$ by an angle $\alpha^{i}$ corresponds to a rescaling of $|z_{i}\ra $ by $ e^{i\alpha^{i}/2}$.
The $\mathbb{Z}_{2}$ transformation corresponds to a global reflection exchanging inward and outward normals.

Indeed, suppose that we have a framed tetrahedron which is  such that the area and outward unit normal directions of the face $i$ are  denoted by $(A_{i},N_{i})$.
We also denote $F_{i}$ to be the unit vector in the face $i$ (i.e. $F_{i}\cdot N_{i}=0$) that provides the framing of the face $i$.
Then the fact that this data corresponds to a tetrahedron is implied by the closure constraints
\be
\sum_{i} A_{i} N_{i} =0.
\ee

Such a framed tetrahedron can be equivalently labelled in terms of four spinors $|z_{i}\ra$ which satisfy the closure relation 
\be
\sum_{i} |z_{i}\ket \bra z_{i}| = \frac{A}2 1
\ee
where $A=\sum_{i}A$ is  the total area of the  tetrahedra.
This data is 
related to the data $(A_{i},N_{i},F_{i})$ as follows: First $\bra z_{i}|z_{i}\ket = A_{i}$ and second
\be \label{eqn_z_N_F}
|z_{i}\ra\la z_{i}| -|z_{i}][z_{i}| = A_{i} N_{i} \cdot \sigma ,\qquad  |z_{i}\ra[z_{i}| = i\frac{A_{i}}{2}\left(F_{i} + i N_{i}\times F_{i}\right) \cdot \sigma
\ee
where $\sigma = (\sigma_{1},\sigma_{2},\sigma_{3})$ are the Pauli matrices and $\times$ denotes the cross product.
The first equation determines $|z_{i}\ra$ up to a phase while the second equation determines the phase up to an overall sign. Thus $\pm |z_{i}\ra$ is uniquely determined by the framed tetrahedron.

\begin{theorem} \label{thm_closure}
The solutions of (\ref{kz}) are in one to one correspondence with framed tetrahedra, with face areas $2j_{i}$ and total area $2J$.
 The discrete parameters related to the spinors are given by
\be
\bra z_{i}|z_{i}\ket = 2j_{i}, \quad J k_{ij} \equiv   |[z_{j}|z_{i}\ket|^{2}.
\ee
%
\end{theorem}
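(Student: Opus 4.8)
The plan is to prove the two directions of the correspondence and then check that the maps are mutually inverse, the only inputs being the two elementary $2\times2$ spinor identities $|z_i\ra[z_j|-|z_j\ra[z_i|=[z_j|z_i\ra\,\one$ and $|z_i\ra\bra z_i|+|z_i][z_i|=\bra z_i|z_i\ra\,\one$, together with the dictionary between closure-satisfying spinors and framed tetrahedra already recorded around (\ref{eqn_z_N_F}). Since the real saddle requires $k=k'$, I take (\ref{kz}) in the form $\bra z_i|=\sum_{j\neq i}\tfrac{k_{ij}}{[z_j|z_i\ra}[z_j|$. First, contracting on the right with $|z_i\ra$ gives $\bra z_i|z_i\ra=\sum_{j\neq i}k_{ij}=2j_i$, fixing the face areas and hence the total area $A=2J$. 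Next I would compute $\sum_i|z_i\ra\bra z_i|$ by substituting this equation and symmetrising the double sum over the \emph{unordered} pair $\{i,j\}$: using $k_{ij}=k_{ji}$ and $[z_i|z_j\ra=-[z_j|z_i\ra$, the $(i,j)+(j,i)$ contribution is $\tfrac{k_{ij}}{[z_j|z_i\ra}\bigl(|z_i\ra[z_j|-|z_j\ra[z_i|\bigr)=k_{ij}\,\one$ by the first identity, so that $\sum_i|z_i\ra\bra z_i|=\bigl(\sum_{i<j}k_{ij}\bigr)\one=J\,\one=\tfrac{A}{2}\one$. By the dictionary of (\ref{eqn_z_N_F}) this is exactly a framed tetrahedron with face areas $2j_i$, the residual $\SU(2)\times\U(1)^4\times\Z_2$ being the rotations, face-frame rotations, and inward/outward reflection already identified.

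For the converse, given a framed tetrahedron, i.e.\ spinors with $\bra z_i|z_i\ra=2j_i$ and $\sum_i|z_i\ra\bra z_i|=J\,\one$, I would \emph{define} $k_{ij}\equiv J^{-1}|[z_j|z_i\ra|^2$ and verify both the admissibility condition and (\ref{kz}) directly. Using $|z\ra\bra z|+|z][z|=\bra z|z\ra\,\one$, closure is equivalent to $\sum_j|z_j][z_j|=J\,\one$. Since $[z_j|z_i\ra=0$ and, from $\overline{[z_j|z_i\ra}=\bra z_i|z_j]$, one has $|[z_j|z_i\ra|^2=[z_j|\,\bigl(|z_i\ra\bra z_i|\bigr)\,|z_j]=\tr\bigl(|z_i\ra\bra z_i|\,|z_j][z_j|\bigr)$, summing over $j$ gives $\sum_{j\neq i}|[z_j|z_i\ra|^2=\tr\bigl(|z_i\ra\bra z_i|\sum_j|z_j][z_j|\bigr)=J\bra z_i|z_i\ra$, hence $\sum_{j\neq i}k_{ij}=2j_i$ (so these are legitimate $S,T$ labels). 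Likewise $\tfrac{k_{ij}}{[z_i|z_j\ra}=J^{-1}\overline{[z_i|z_j\ra}=J^{-1}\bra z_j|z_i]$, and since $\bra z_j|z_j]=0$, $\sum_{i\neq j}\tfrac{k_{ij}}{[z_i|z_j\ra}[z_i|=J^{-1}\bra z_j|\sum_i|z_i][z_i|=\bra z_j|$, which is precisely (\ref{kz}). Thus every framed tetrahedron solves the saddle equations, with the stated value of $k_{ij}$.

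It remains to argue that these two maps are inverse. Backward-then-forward returns the original spinors by the computation just done. For forward-then-backward I must check that the reconstructed $k_{ij}=J^{-1}|[z_j|z_i\ra|^2$ agree with the data defining the action: contracting (\ref{kz}) for a fixed $i$ with the remaining kets $|z_p\ra$ produces a (generically invertible) linear system for the three $k_{ij}$, $j\neq i$, and since $J^{-1}|[z_j|z_i\ra|^2$ solves it (by the backward computation) it is the unique solution; equivalently, the freedom in writing $\bra z_i|$ as a combination of the three linearly dependent covectors $[z_j|$ is a single multiple of the relation coefficient $[z_{j'}|z_{j''}\ra$, and imposing $k_{ij}=k_{ji}\in\R$ simultaneously for all pairs forces that multiple to vanish. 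Finally $|[z_j|z_i\ra|^2$ is manifestly invariant under $|z_i\ra\to e^{i\alpha_i}|z_i\ra$ and under $|z_i\ra\to|z_i]$, so the bijection descends to the quotient by $\SU(2)\times\U(1)^4\times\Z_2$.

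I expect the analytic steps (areas, closure, the direct verification of (\ref{kz})) to be short once the two $2\times2$ identities are in hand. The delicate part is the bijectivity bookkeeping: recovering the $k_{ij}$ uniquely from the spinors — (\ref{kz}) at a single index under-determines them, so one genuinely needs symmetry/reality or the full system — and matching the spinor gauge group $\SU(2)\times\U(1)^4\times\Z_2$ with the rotational, framing and reflection redundancy of the tetrahedron, which is where the dictionary (\ref{eqn_z_N_F}) and the rigidity (Minkowski / Guillemin–Sternberg) of the polyhedron carry the argument.
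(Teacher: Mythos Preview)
Your approach is essentially identical to the paper's: both directions use the same computations (the symmetrised rank-one identity $|z_i\ra[z_j|-|z_j\ra[z_i|=[z_j|z_i\ra\,\one$ for closure, and the direct verification of (\ref{kz}) from the defined $k_{ij}$), and the uniqueness step is handled in the same spirit. One caveat: your claim that contracting (\ref{kz}) at a fixed $i$ with the three $|z_p\ra$ gives a ``generically invertible'' linear system is not correct---the three covectors $[z_j|$ live in a two-dimensional space, so the system always has a one-parameter kernel---but you immediately supply the right fix (the freedom is a single multiple of the Pl\"ucker cofactors $[z_{j'}|z_{j''}\ra$, killed by imposing $k_{ij}=k_{ji}\in\R$ across all pairs), which is in fact more explicit than the paper's rather terse ``consistency of relations'' argument.
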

\begin{proof}
Lets suppose that $|z_{i}\ket$ is a solution of (\ref{kz}).  Then
\bea
\sum_{i} |z_{i}\ra \la z_{i} | = 
\sum_i \sum_{j\neq i } \frac{k_{ij}}{[z_{j}|z_{i}\ra} |z_{i}\ra [z_{j}| 
= \sum_{i < j } \frac{k_{ij}}{[z_{j}|z_{i}\ra} (|z_{i}\ra [z_{j}|  -|z_{j}\ra [z_{i}|)
= \sum_{i < j } {k_{ij}} 1 = J 1.
\eea
which implies that $|z_{i}\ket$ satisfy the closure constraint and hence constitute a framed tetrahedron.
The area of the faces of this tetrahedron are given by $A_{i} =\sum_{j\neq i} k_{ij}$.

Let us now suppose that $|z_{i}\ket$ is a solution of the closure constraints and lets define $k_{ij} \equiv \frac2{A} [z_{j}|z_{i}\ket\bra z_{i}|z_{j}]$.
Then by construction we have 
\be
\sum_{j\neq i} \frac{k_{ij} }{[z_{j}|z_{i}\ket} [z_{j}| = \frac2{A} \sum_{j\neq i}\bra z_{i}|z_{j}] [z_{j}| = \bra z_{i}|.
\ee
which shows that $|z_{i}\ket$ is a solution of (\ref{kz}). 

Finally lets suppose that given $|z_{i}\ket$ we have another set $k'_{ij}$ which is a solution of (\ref{kz}).
 This would imply that 
$\sum_{j\neq i} {\Delta_{ij}} [z_{j}| =0$ with $\Delta_{ij}\equiv(k_{ij}-k_{ij}')/[z_{j}|z_{i}\ket$.
The sum contains three terms, and by contracting it with $|z_{k}]$, with $k\neq i,j$, we obtain  two relations.
The consistency of these relations imply that $\Delta_{ij}=0$. 
\end{proof}

This shows that the $|S,T\ket$ states enjoy the same geometrical properties as the coherent intertwiners.  Namely they are peaked on states representing closed bounded tetrahedra.  We note that the proof of theorem \ref{thm_closure} also holds for general $n$-valent intertwiners by simply extending the range of indices from 4 to $n$.  A similar analysis of stationary points of intertwiner generating functionals is given in \cite{Bonzom:2012bn}.

\subsection{Geometrical Interpretation}
It is interesting to make explicit the geometrical interpretation of the data encoded in the spinor variables.
We have seen that the norms of the spinors $2j_{i}=\bra z_{i}|z_{i}\ket$ are the areas of the faces. 

This can be made explicit by writing these spinors in terms of the geometrical data:
As we have seen $A_{i}$ denotes the area of the face $i$ and 
we denote by $\theta_{ij} \in [0,\pi]$ the dihedral angle between the normals $N_{i}$ and $N_{j}$.
The extra data necessary is the angle $\alpha^{i}_{j}$ in the face $i$ between   the oriented edge $(ij)$ and the reference vector $F_{i}$.
This data is represented in figure \ref{fig_tri} and is related to the spinors in the following lemma.

\begin{figure} 
  \centering
    \includegraphics[width=0.7\textwidth]{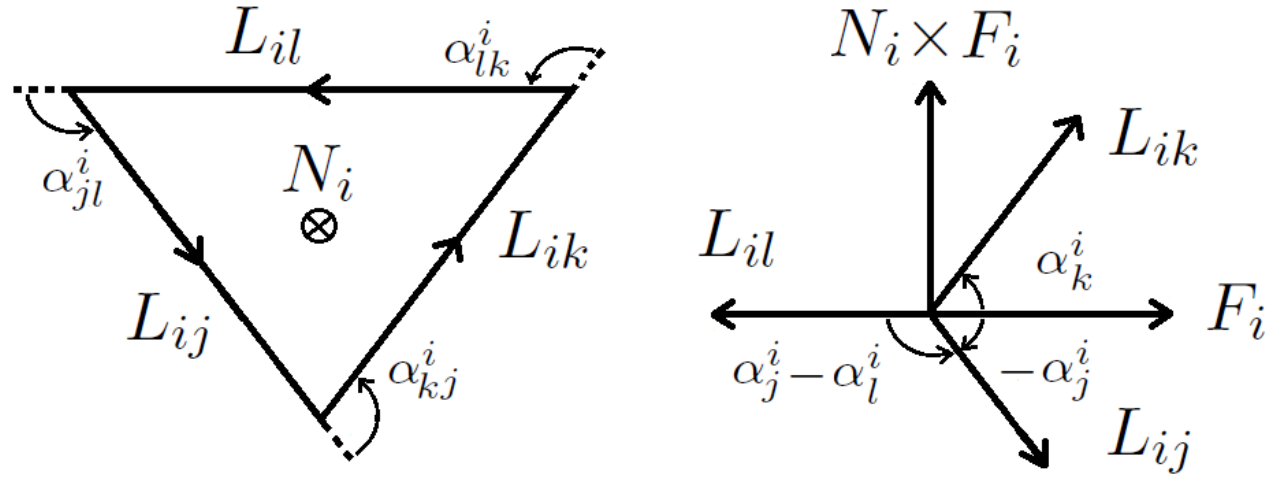}
    \caption{The geometrical data on the face $i$ of a framed tetrahedron.}  \label{fig_tri}
\end{figure}

\begin{lemma} \label{lemma_geo}
The angle between the edge $(ij)$ and the edge $(ik)$ is $\alpha^{i}_{jk}$ where
$\alpha^{i}_{jk} = \alpha^{i}_{j}-\alpha^{i}_{k}$.
The expression of the spinor products in terms of the geometrical data is  given by:
\bea
[z_{i}|z_{j}\ra &=& \epsilon_{ij} \sqrt{A_{i}A_{j}} \sin \frac{\theta_{ij}}{2} e^{i (\alpha^{i}_{j} + \alpha^{j}_{i})/2},\\
{[}z_{i}|z_{j}] &=& \sqrt{A_{i}A_{j}} \cos \frac{\theta_{ij}}{2} e^{i (\alpha^{i}_{j} - \alpha^{j}_{i})/2},
\eea
where $\epsilon_{ij} = +1$ if $(ij)$ is positively oriented and $-1$ otherwise.  The area of face $i$ is $A_{i} = 2j_{i}$ and
$\theta_{ij}\in[0,\pi]$ is the 3d external dihedral angle for which we choose the convention $\theta_{ii}=0$.  Finally, $\alpha^{i}_{j}$ is the angle in the face $i$ between the edge $(ij)$ and the reference vector $F_{i}$.
\end{lemma}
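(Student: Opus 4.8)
The plan is to turn every bracket into a trace of $2\times 2$ matrices, using only the defining relations (\ref{eqn_z_N_F}) (recall $\la z_{i}|z_{i}\ra = A_{i} = 2j_{i}$). The second relation in (\ref{eqn_z_N_F}) reads $|z_{i}\ra[z_{i}| = \f{iA_{i}}{2}\,\vec V_{i}\cdot\sigma$ with $\vec V_{i}\equiv F_{i} + i\,N_{i}\times F_{i}$, and taking hermitian conjugates (using $([z_{i}|)^{\dagger} = |z_{i}]$) gives $|z_{i}]\la z_{i}| = -\f{iA_{i}}{2}\,\overline{\vec V}_{i}\cdot\sigma$. With the scalar identities $[z|M|w\ra = \tr\!\big(|w\ra[z|\,M\big)$, $[z|w\ra = -[w|z\ra$ and $[z|w] = \la w|z\ra$ one can write $\big([z_{i}|z_{j}\ra\big)^{2} = -[z_{i}|z_{j}\ra\,[z_{j}|z_{i}\ra = -\tr\!\big(|z_{i}\ra[z_{i}|\,|z_{j}\ra[z_{j}|\big)$ and $\big([z_{i}|z_{j}]\big)^{2} = [z_{i}|z_{j}]\,\la z_{j}|z_{i}\ra = \tr\!\big(|z_{i}\ra[z_{i}|\,|z_{j}]\la z_{j}|\big)$. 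Inserting the two matrix expressions and using $\tr(\vec a\cdot\sigma) = 0$ together with the purely algebraic identity $\tr\!\big((\vec a\cdot\sigma)(\vec b\cdot\sigma)\big) = 2\,\vec a\cdot\vec b$ (valid for complex $\vec a,\vec b$), this collapses to $\big([z_{i}|z_{j}\ra\big)^{2} = \tfrac12 A_{i}A_{j}\,\vec V_{i}\cdot\vec V_{j}$ and $\big([z_{i}|z_{j}]\big)^{2} = \tfrac12 A_{i}A_{j}\,\vec V_{i}\cdot\overline{\vec V}_{j}$. The moduli asserted in the lemma will then follow by taking absolute values of these, using $\theta_{ij}\in[0,\pi]$ so that $\sin(\theta_{ij}/2),\cos(\theta_{ij}/2)\ge 0$.

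It remains to evaluate $\vec V_{i}\cdot\vec V_{j}$ and $\vec V_{i}\cdot\overline{\vec V}_{j}$ geometrically. Let $\he_{ij}$ be the unit vector along the edge common to faces $i$ and $j$, oriented along $N_{i}\times N_{j}$; then $\he_{ij}$ is orthogonal to $N_{i}$ and to $N_{j}$, and $N_{i}\cdot N_{j} = \cos\theta_{ij}$. By definition $\alpha^{i}_{j}$ is the angle, in the oriented plane of face $i$, between $\he_{ij}$ and $F_{i}$, so $F_{i} = \cos\alpha^{i}_{j}\,\he_{ij} + \sin\alpha^{i}_{j}\,(N_{i}\times\he_{ij})$; a one-line computation then gives $\vec V_{i} = e^{-i\alpha^{i}_{j}}\big(\he_{ij} + i\,N_{i}\times\he_{ij}\big)$, and likewise $\vec V_{j} = e^{-i\alpha^{j}_{i}}\big(\he_{ij} + i\,N_{j}\times\he_{ij}\big)$. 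Since $\he_{ij}\cdot(N_{i}\times\he_{ij}) = \he_{ij}\cdot(N_{j}\times\he_{ij}) = 0$ and, by the Lagrange identity, $(N_{i}\times\he_{ij})\cdot(N_{j}\times\he_{ij}) = N_{i}\cdot N_{j} = \cos\theta_{ij}$, one gets $\big(\he_{ij}+iN_{i}\times\he_{ij}\big)\cdot\big(\he_{ij}+iN_{j}\times\he_{ij}\big) = 1-\cos\theta_{ij} = 2\sin^{2}(\theta_{ij}/2)$ and $\big(\he_{ij}+iN_{i}\times\he_{ij}\big)\cdot\big(\he_{ij}-iN_{j}\times\he_{ij}\big) = 1+\cos\theta_{ij} = 2\cos^{2}(\theta_{ij}/2)$. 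Hence $\vec V_{i}\cdot\vec V_{j} = 2\sin^{2}(\theta_{ij}/2)\,e^{-i(\alpha^{i}_{j}+\alpha^{j}_{i})}$ and $\vec V_{i}\cdot\overline{\vec V}_{j} = 2\cos^{2}(\theta_{ij}/2)\,e^{-i(\alpha^{i}_{j}-\alpha^{j}_{i})}$, so $\big([z_{i}|z_{j}\ra\big)^{2} = A_{i}A_{j}\sin^{2}(\theta_{ij}/2)e^{-i(\alpha^{i}_{j}+\alpha^{j}_{i})}$ and $\big([z_{i}|z_{j}]\big)^{2} = A_{i}A_{j}\cos^{2}(\theta_{ij}/2)e^{-i(\alpha^{i}_{j}-\alpha^{j}_{i})}$. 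Taking square roots reproduces the two displayed formulas (with the sense of each $\alpha^{i}_{j}$ fixed so that the exponents carry the stated sign), the residual sign being the orientation factor $\eps_{ij} = \pm 1$ recording whether the oriented edge $(ij)$ agrees with the orientation of $\he_{ij}\propto N_{i}\times N_{j}$. The first assertion of the lemma, $\alpha^{i}_{jk} = \alpha^{i}_{j} - \alpha^{i}_{k}$, is then immediate, since $\alpha^{i}_{j}$ and $\alpha^{i}_{k}$ are both measured from the common reference direction $F_{i}$ in the oriented plane of face $i$.

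I expect the only real difficulty to be the orientation bookkeeping in the last step: fixing once and for all the sense in which each $\alpha^{i}_{j}$ is measured (this is exactly what decides whether the phases emerge as $e^{+i(\cdots)}$ or $e^{-i(\cdots)}$, i.e. it amounts to a possible redefinition $\alpha^{i}_{j}\to-\alpha^{i}_{j}$), pinning down the sign $\eps_{ij}$, and checking that the orientation induced on the shared edge from face $i$ and from face $j$ is handled compatibly so that the square roots above are taken on the correct branch. All of this can be settled by evaluating, say, $[z_{i}|z_{j}\ra$ on one explicit reference configuration. The rest --- the trace manipulations and the contractions of $\he_{ij}$ with $N_{i}\times\he_{ij}$ --- is entirely routine.
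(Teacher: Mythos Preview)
Your argument is correct and takes a genuinely different route from the paper's proof. The paper separates moduli and phases: it first extracts $\cos\theta_{ij}$ from $|\la z_i|z_j\ra|^2-|\la z_i|z_j]|^2$ and $\sin\theta_{ij}$ from the edge vector $L_{ij}=A_iA_j(N_i\times N_j)$, and only afterwards reads off the phase $\alpha^i_j$ by projecting $L_{ij}$ onto the complex frame $F_i+iN_i\times F_i$; the identification of $\alpha^i_{jk}$ as the angle between edges is then obtained from $L_{ij}\cdot L_{ik}$. Your approach is more compact: by squaring the brackets and tracing against the complex vectors $\vec V_i=F_i+iN_i\times F_i$ you get modulus and phase simultaneously from a single dot product $\vec V_i\cdot\vec V_j$ (resp.\ $\vec V_i\cdot\overline{\vec V}_j$), evaluated via the decomposition of $\vec V_i$ along $\hat e_{ij}$ and $N_i\times\hat e_{ij}$. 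The price you pay is exactly the one you flag: because you compute squares, the overall sign $\epsilon_{ij}$ and the orientation convention for $\alpha^i_j$ (your phases come out as $e^{-i(\cdots)}$ rather than $e^{+i(\cdots)}$) must be fixed separately, whereas the paper's projection onto $L_{ij}$ pins down the sense of $\alpha^i_j$ directly. Both approaches are equivalent; yours is shorter, the paper's is more explicitly geometric and makes the role of the edge vectors $L_{ij}$ transparent for the subsequent discussion of the spherical cosine law and the Pl\"ucker relation.
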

\begin{proof}
From the definitions (\ref{eqn_z_N_F}) we have
\be
|z_{i}\ket\bra z_{i}|=  \frac{A_{i}}{2}(1 + N_{i}\cdot \sigma ),\qquad 
|z_{i}][ z_{i}|= \frac{A_{i}}{2}(1 - N_{i}\cdot \sigma ).
\ee
and so the scalar product between two normals is
\be \label{eqn_N_dot}
  A_i A_j N_i \cdot N_j = |\bra z_i | z_j \ket|^2 - |\bra z_i | z_j]|^2 = A_i A_j \cos \theta_{ij}.
\ee
Defining the edge vectors by $L_{ij}\equiv A_{i}A_{j}(N_{i}\times N_{j})$ then gives
\be\label{Lij}
 L_{ij} \cdot \sigma = 2i \Big( |z_{i}\ra\la z_{i} | z_{j}][z_{j}| - |z_{j}][z_{j}|z_{i}\ra \la z_{i}| \Big).
\ee
Now by taking the trace of the square of (\ref{Lij}) we obtain 
\be \label{eqn_N_cross}
  A_i A_j |N_i \times N_j | =  2|[z_{i}|z_{j}\ket \bra z_j | z_i \ket |=  A_i A_j \sin \theta_{ij}.
\ee
Equations (\ref{eqn_N_dot}) and (\ref{eqn_N_cross}) determine the magnitudes of the spinor products.  

Lets now look at the scalar product between the edge vectors $L_{ij}$ 
and the complex vector $F_{i} + i N_{i}\times F_{i}$
\bea
 L_{ij}\cdot (F_{i} + i N_{i}\times F_{i})& =&
\frac{2}{A_{i}}\tr\left( \left\{|z_{i}\ra\la z_{i} | z_{j}][z_{j}| - |z_{j}][z_{j}|z_{i}\ra \la z_{i}| \right\} 
 |z_{i}\ra [ z_{i}| \right)  \label{eqn_L_F_N} \\
&= & \frac{2 }{A_{i}}  [ z_{i}|z_{j}][  z_{i} | z_{j}\ra \la z_{i}|z_{i}\ra   \nn \\
&=& \epsilon_{ij} A_{i}A_{j}\sin {\theta_{ij}} e^{i\alpha^{i}_{j} } = \epsilon_{ij} |L_{ij}| e^{i\alpha^{i}_{j}} \nn
\eea
where $|L_{ij}| = A_i A_j \sin \theta_{ij}$.  This shows that $L_{ij} \cdot F_i = |L_{ij}| \cos(\alpha^{i}_{j})$ and so $\alpha^{i}_{j}$ is indeed the angle  between the edge $(ij)$ and the frame vector on face $i$.  The sign of $L_{ij}\cdot (N_{i}\times F_{i})$ determines the orientation of this angle with respect to the 2d basis $\{F_i,\hat{F}_{i}\}$ where $\hat{F}_i \equiv N_{i}\times F_{i}$.

We can also show that the angle $\alpha^{i}_{jk} \equiv \alpha^{i}_{j} - \alpha^{i}_{k}$ is the angle between the edge vectors $L_{ij}$ and $L_{ik}$ at the vertex $i$.
Using (\ref{eqn_L_F_N}) we can construct the following quantity
\be
  \epsilon_{ij} \epsilon_{ki} |L_{ij}| \cdot |L_{ik}| e^{i(\alpha^{i}_{j} - \alpha^{i}_{k})} = [L_{ij} \cdot (F_{i} + i N_{i} \times F_{i})] \cdot [L_{ik} \cdot (F_{i} - i N_{i} \times F_{i})].
\ee
Now in the components of the 2d basis $L^{(1)}_{ij} = L_{ij} \cdot F_{i}$ and $L^{(2)}_{ij} = L_{ij} \cdot \hat{F}_{i}$ we have
\be
  \epsilon_{ij} \epsilon_{ki} |L_{ij}| \cdot |L_{ik}| e^{i(\alpha^{i}_{j} - \alpha^{i}_{k})} 
  = (L^{(1)}_{ij} + i L^{(2)}_{ij})(L^{(1)}_{ik} - i L^{(2)}_{ik}).
\ee
The real part is equal to $L_{ij} \cdot L_{ik}$.  
Therefore
\be \label{eqn_alpha_i_jk}
  L_{ij} \cdot L_{ik} = \epsilon_{ij} \epsilon_{ki} |L_{ij}| \cdot |L_{ik}| \cos \alpha^{i}_{jk}.
\ee
which shows that $\alpha^{i}_{jk}$ is the angle between edges $(ij)$ and $(ik)$.
\end{proof}

Using (\ref{eqn_alpha_i_jk}) and the definition $L_{ij} = A_i A_j (N_i \times N_j)$ we can relate the angles $\alpha^{i}_{jk}$ to the 3d dihedral angles by
\be \label{eqn_alpha_theta}
  \epsilon_{ij} \epsilon_{ki} \cos \alpha^{i}_{jk} = \frac{\cos \theta_{jk} - \cos \theta_{ij} \cos \theta_{ik}}{ \sin \theta_{ij} \sin \theta_{ik}}.
\ee
This is the spherical law of cosines relating the edges $(\theta_{ij},\theta_{jk},\theta_{ki})$ and angles 
$(\alpha_{ji}^{k},\alpha_{kj}^{i},\alpha_{ik}^{j})$ of a spherical triangle.  This relation with spherical geometry is captured by the so called three terms relations which we discuss next.

\subsection{Geometry of 3-terms Relations}

The relationships between the the 3d dihedral angles and the internal angles between edges is expressed via the three term relations 
(Fierz identity) satisfied by a set of spinors.
There are two such types of relations.
First there are the relations arising at a given vertex of the tetrahedra which imply that the angles $(\theta_{ij},\theta_{jk},\theta_{ki})$ and 
$(\alpha_{ji}^{k},\alpha_{kj}^{i},\alpha_{ik}^{j})$ are respectively the edge lengths and angles of a spherical triangle.
We can write two such relations\footnote{Note that $|z_{j}\ra \la z_{j} | + |z_{j}][ z_{j} | = \la z_{j} |z_{j}\ra \one$.}, 
\bea
\la z_{i}|z_{j}\ra \la z_{j} |z_{k}\ra + \la z_{i}|z_{j}][ z_{j} |z_{k}\ra &=& \la z_{i}|z_{k}\ra \la z_{j} |z_{j}\ra \\
\la z_{i}|z_{j}\ra \la z_{j} |z_{k}] + \la z_{i}|z_{j}][ z_{j} |z_{k} ] &=& \la z_{i}|z_{k}] \la z_{j} |z_{j}\ra 
\eea
which translate into
\bea
c_{ij} c_{jk} \label{eqn_3_term_1}
+ s_{ij} s_{jk} e^{i\alpha^{j}_{ki} }
&=& c_{ik} e^{i(\alpha^{i}_{jk}+\alpha^{j}_{ik} + \alpha^{k}_{ij})/2},\\
c_{ij} s_{jk} 
+  s_{ij} c_{jk} e^{i\alpha^{j}_{ki} }
&=&  s_{ik} e^{i( \alpha^{i}_{jk}-\alpha^{j}_{ik} - \alpha^{k}_{ij})/2}, \label{eqn_3_term_2}
\eea
where $c_{ij} \equiv \cos  \frac{\theta_{ij}}2$ and $ s_{ij}\equiv \epsilon_{ij}\sin  \frac{\theta_{ij}}2$.  Taking the difference of squares of equations $|(\ref{eqn_3_term_1})|^2 - |(\ref{eqn_3_term_2})|^2$ produces equation (\ref{eqn_alpha_theta}).

We also have a relation that  genuinely depends on the tetrahedral geometry and involves the four spinors;
it follows from the Pl\"ucker relation that we have already made extensive use of
\be
[z_{1}|z_{2}\ket[z_{3}|z_{4}\ket + [z_{1}|z_{3}\ket[z_{4}|z_{2}\ket + [z_{1}|z_{4}\ket[z_{2}|z_{3}\ket =0
\ee
and it reads 
\be
s_{12}s_{34} e^{i (\alpha_{12} +\alpha_{34})/2} + s_{13}s_{24} e^{i (\alpha_{13}+\alpha_{24})/2} + s_{14}s_{23}e^{i(\alpha_{14}+\alpha_{23})/2} =0
\ee
where we have defined 
\be \label{eqn_alpha_geo}
\alpha_{ij}\equiv \frac12 \sum_{k\neq i,j} (\alpha^{i}_{jk} +\alpha^{j}_{ik}).
\ee
It can be checked that these angles sum up to $0$: $\sum_{i\neq j } \alpha_{ij}=0$.  Now what needs to be appreciated is the non trivial fact that the angles 
$$
\Phi_{S} = \alpha_{12} +\alpha_{34}
,\quad
\Phi_{T} =\alpha_{13}+\alpha_{24},\quad
\Phi_{U} = \alpha_{14}+\alpha_{23}
$$
determine completely the geometry of the tetrahedron once we know the face areas $A_{i}=2j_{i}$.
This means that $(j_{i}, \alpha_{S},\alpha_{T})$ determine the value of all the 3d dihedral angles $\theta_{ij}$ and internal angles $\alpha_{ij}^{k}$.
This non-trivial fact follows from the analysis performed in \cite{EKLF}.

\subsection{Asymptotic Evaluation of the 20j Symbol}

We will now take an indepth look at the asymptotic evaluation of the normalized  $20j$ symbol.
This object depends on the choice of an orientation of the edges, and we denote by $\epsilon_{ij}$ a sign which $+1$ if the edge $[ij]$ is positively oriented from $i$ to $j$ and $-1$ otherwise. 
This normalized $20j$ symbol is defined as a contraction of the normalized intertwinner $ |S,T\ra$ times the normalisations  and it is 
 expressed as 
\be \label{normalized_20j}
\widehat{\{20j\}}_{S_a,T_a}\equiv \frac{\{20j\}_{S_a,T_a}}{\prod_a \|S_a,T_a\|} = \frac{I(k_{ij})}{\sqrt{\prod_{a}(J_{a}+1)! \prod_{a\neq i<j}k^{a}_{ij}! }} ,
\ee
where $I(k_{ij})$ is an integral over $20$ spinors $|z_{i}^{j}\ra $. The countour of integration is a real contour where $ |z_{i}^{j}\ra$ is related to the  conjugate $|z_{j}^{i}]$ by the reality condition.
\be \label{eqn_reality_cond}
|z_{i}^{j}\ra = \epsilon_{ij}|z_{j}^{i} ].
\ee
This condition implies that the normals of glued faces are related by $N^{j}_{i} = - N^{i}_{j}$ and that the frame vectors match $F^{i}_{j} = F^{j}_{i}$.  

The integral  is given by
\be
I(k_{ij}) =\int \prod_{i\neq j} \frac{\rd^{2} z_{j}^{i} }{\pi^2}  e^{S(z^{i}_{j}) },\quad \mathrm{with}\quad S \equiv \sum_{i<j}  [ z^{i}_{j} | z_{i}^{j}\ket + \sum_{a} \sum_{i<j} k_{ij}^{a} \ln [z_{i}^{a}|z_{j}^{a}\ra
\ee
There are four spinors $|z_{i}\ra$ associated with a framed tetrahedron.
The stationary points of this equation are given by solutions of 
\be \label{system_of_equations}
\sum_{j\neq a,i} \frac{ k_{ij}^{a} }{[z_{j}^{a}|z_{i}^{a}\ra} [z_{j}^{a}| = - [z^{i}_{a}| = \epsilon_{ai} \la z_{i}^{a}|,
\ee
and according to the previous section the solution of these equations are given by
oriented framed tetrahedra.
The relationship between $k_{ij}^{a}$ and the spinors depends on the choice of graph orientation,
it is given by
\be \label{eqn_k_z}
k_{ij}^{a} = \frac1{J^{a}} |[\hat{z}_{i}^{a}|\hat{z}_{j}^{a}\ra|^{2}, \quad 2j_{ai}=  \la \hat{z}^{a}_{i}|\hat{z}^{a}_{i}\ra,\quad J^{a} = \sum_{i\neq a} j^{a}_{i} 
\ee
where $|\hat{z}^{a}_{i}\ra =|z^{a}_{i}\ra$ if the edge is oriented from $a$ to $i$ and 
$|\hat{z}^{a}_{i}\ra = |z^{a}_{i}]$ if the edge is oriented from $i$ to $a$.
This determines the norm of the spinor scalar products in terms of $k_{ij}^{a}$. 

The phases of these products are denoted $\alpha^{ab}_{i}$ and they denote the angle in the face $b$ of the tetrahedron $a$, between the edge $(bi)$ and the reference frame vector in the face $b$ of tetrahedra $a$.
As shown in lemma \ref{lemma_geo}, they are related to the spinor products by 
\be
[\hat{z}_{i}^{a}|\hat{z}_{j}^{a}\ra =\sqrt{J^{a} k_{ij}^{a}} e^{i (\alpha_{j}^{ai}+ \alpha_{i}^{aj})/2}.
\ee

Thus the on-shell evaluation of the action is
\be
  S_{\mathrm{onshell}} = -\sum_{a} J^{a} +\frac12 \sum_{a\neq i<j} k_{ij}^{a}\ln (J^{a} k_{ij}^{a} ) + \frac{i}{2}\sum_{a\neq i<j} k_{ij}^{a}  (\alpha_{i}^{aj} + \alpha_{j}^{ai})
\ee
The real part can be rewritten as
  \be
\mathrm{Re}(S_{\mathrm{onshell}}) = \frac12\left(\sum_{a} J_{a}\ln J_{a}-J^{a} + \sum_{a\neq i<j} ( k_{ij}^{a}\ln  k_{ij}^{a}    - k_{ij}^{a}) \right)
\ee
which is easily recognized as the dominant\footnote{up to a term given by $  \frac14\ln(\prod_{a}( 2\pi J_{a}^{3}\prod_{i<j} (2\pi  k_{ij}^{a} ) )).$ }term in the stirling  expansion of  $\ln\sqrt{\prod_{a}(J_a+1)! \prod_{a\neq i<j}k_{ij}^{a}!)}$.  This cancels the factor in (\ref{normalized_20j}).  Let us now focus on the imaginary part:
\be\label{Ims}
{\mathrm{Im}}(S_{\mathrm{onshell}}) =  \frac12  \sum_{a\neq i<j} k_{ij}^{a} (\alpha_{i}^{aj} + \alpha_{j}^{ai}).
\ee
  First,  recall that the system of equations (\ref{system_of_equations}) possesses a gauge symmetry,
\be
\alpha^{ai}_{j} \to \alpha^{ai}_{j} +\theta^{ai}
\ee
where $\theta^{ai}= -\theta^{ia}$.
This  corresponds to the rotation of the  frame vector in the face $(ai)$ by an angle $\theta^{ai}$.
The action is invariant under these gauge transformations.  Indeed under $|z_{i}^{a}\ra \to e^{i\theta^{ai}} |z_{i}^{a}\ra $ the variation of the on-shell action is 
\bea
2\Delta S_{\mathrm{onshell}} &=& i \sum_{a\neq i<j} k_{ij}^{a} (\theta^{ai} + \theta^{aj}) =
i\sum_{(a,i,j)} k_{ij}^{a} \theta^{ai} = i\sum_{(a,i)} \left(\sum_{j\neq (a,i)}k_{ij}^{a}\right) \theta^{ai} \nn \\
&=& 2i \sum_{(a,i)} j_{ai} \theta^{ai} = i\sum_{(a,i)} j_{ai} ( \theta^{ai}+\theta^{ia}) =0
\eea
Here we have denoted by $(a,i,j)$ or $(a,i)$ a set of indices all distinct from each other. 
Therefore the on-shell action can be determined entirely in terms of gauge invariant angles.  The question is which combinations appear.

There are two types of gauge invariant data:
The first type characterizes the intrinsic geometry of each tetrahedron and depends only on the data associated with one tetrahedron.
These correspond  to the angles in a given tetrahedron $a$ between edges $(ij)$ and $(ik)$, and are given by
\be
\alpha^{ai}_{jk} \equiv \alpha^{ai}_{j}-\alpha^{ai}_{k}.
\ee
We already have seen in (\ref{eqn_alpha_i_jk}) that these angles are angles between the edges $(ij)$ and $(ik)$ at the tetrahedron $a$.

The second type of gauge invariant angles encode the extrinsic geometry of the gluing of the five tetrahedra. It depends on two tetrahedra and  involves the sum\footnote{Since the faces $(ab)$ and $(ba)$ have opposite orientations this is really a differences of angles when we take the orientation into account.} of angles between two tetrahedra
\be
\xi^{ab}_{i} \equiv \alpha^{ab}_{i} +\alpha^{ba}_{i}.
\ee

In order to understand the geometrical meaning of these angles let us first remark that when the shapes of the triangles $(ab)$ and $(ba)$ match then the angles between the edges of the triangles when viewed from $a$ or $b$ coincide.  Hence 
 \be \label{eqn_shape_match}
 \alpha^{ab}_{ij} =\alpha^{ba}_{ji}.
 \ee
This condition of shape matching therefore implies that
$
  0 = \alpha^{ab}_{ij} - \alpha^{ba}_{ji} = \xi^{ab}_{i} - \xi^{ab}_{j}
$
and so $\xi^{ab}_{i}$ is {\it independent} of  $i$.  This is the condition which will allow us to interpret $\xi^{ab}_i$ as the 4d dihedral angle between tetrahedra $a$ and $b$.

When the face matching condition is not satisfied, the geometry is twisted in the sense of \cite{twisted} 
and $\xi^{ab}_{i}$ represent a generalization of dihedral angles to twisted geometry.
Moreover, the on-shell action will therefore represent a generalization of the Regge action to twisted geometry.  


Let us now express the on-shell action in terms of this data.

\begin{theorem} \label{thm_twisted_action}
  The generalization of the Regge action to twisted geometry is given by
\be\label{action}
S_\bbT =   \sum_{i<j} j_{ij} \xi^{ij} +  \sum_{a\neq i<j} k_{ij}^{a} \alpha_{ij}^{a}.
\ee
where 
\be\label{defang}
   \xi^{ij} \equiv \frac13  \sum_{k\neq (i,j)} \xi^{ij}_{k}, \quad   \alpha^{a}_{ij} \equiv  \frac16 \sum_{b\neq (i,j,a)} (\alpha^{ai}_{jb} +\alpha^{aj}_{ib}).
\ee 
\end{theorem}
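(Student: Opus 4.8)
The plan is to start from the formula (\ref{Ims}) for $\mathrm{Im}(S_{\mathrm{onshell}})$ and reorganise its terms into the two gauge-invariant blocks appearing in (\ref{action}). First I would use $k^{a}_{ij}=k^{a}_{ji}$ to rewrite (\ref{Ims}) in the unconstrained form $\mathrm{Im}(S_{\mathrm{onshell}}) = \tfrac12\sum_{a}\sum_{i\neq j,\ i,j\neq a} k^{a}_{ij}\,\alpha^{aj}_{i}$, so that the sum is indexed by a tetrahedron $a$, one of its faces $j$, and an edge $(ji)$ of that face. The key device is that the per-edge angle $\alpha^{aj}_i$ is gauge dependent — it shifts by $\theta^{aj}$ under a frame rotation on face $j$ of $a$ — whereas its traceless part is not. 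So I would introduce the face average $\bar\alpha^{aj}\equiv\frac13\sum_{i\neq a,j}\alpha^{aj}_i$, which absorbs all of the gauge freedom, and decompose $\alpha^{aj}_i = \bar\alpha^{aj} + (\alpha^{aj}_i - \bar\alpha^{aj})$, noting that $\alpha^{aj}_i-\bar\alpha^{aj}=\frac13\sum_{k\neq a,i,j}\alpha^{aj}_{ik}$ is built entirely from the intrinsic angles $\alpha^{aj}_{ik}$ of Lemma \ref{lemma_geo}.

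Substituting this decomposition splits $\mathrm{Im}(S_{\mathrm{onshell}})$ into two pieces. For the average piece, the inner sum over $i$ collapses via $\sum_{i\neq a,j} k^{a}_{ij}=\langle\hat z^{a}_j|\hat z^{a}_j\rangle=2j_{aj}$ from (\ref{eqn_k_z}) — the prefactor $\tfrac12$ cancelling the $2$ — which leaves $\sum_{a\neq j} j_{aj}\bar\alpha^{aj}$; pairing the face $j$ of tetrahedron $a$ with the glued face $a$ of tetrahedron $j$ and using $\bar\alpha^{aj}+\bar\alpha^{ja}=\frac13\sum_{k}(\alpha^{aj}_k+\alpha^{ja}_k)=\frac13\sum_k \xi^{aj}_k=\xi^{aj}$ then yields exactly $\sum_{i<j} j_{ij}\xi^{ij}$. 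For the traceless piece, I would insert $\alpha^{aj}_i-\bar\alpha^{aj}=\frac13\sum_{k\neq a,i,j}\alpha^{aj}_{ik}$ and symmetrise in $(i,j)$ using $k^{a}_{ij}=k^{a}_{ji}$, so that $\alpha^{ai}_{jb}$ and $\alpha^{aj}_{ib}$ enter with equal weight; comparing with the definition (\ref{defang}) of $\alpha^{a}_{ij}$ — and keeping track that in the $4$-simplex the auxiliary index $b$ runs over the two vertices in $\{1,\dots,5\}\setminus\{a,i,j\}$ — this reproduces $\sum_{a\neq i<j} k^{a}_{ij}\alpha^{a}_{ij}$ with precisely the factor $\tfrac16$. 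Adding the two pieces gives (\ref{action}).

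The main obstacle is conceptual rather than computational, and it concerns the first piece: $\bar\alpha^{aj}$ is individually gauge variant (it transforms by $\theta^{aj}$), so the decomposition is only legitimate because $\mathrm{Im}(S_{\mathrm{onshell}})$ itself is gauge invariant, a fact already established above through $\Delta S_{\mathrm{onshell}}=0$. I would therefore need to verify explicitly that the gauge-variant contributions cancel as the glued faces are paired — this is exactly the interplay of $\theta^{aj}=-\theta^{ja}$ with $j_{aj}=j_{ja}$, which makes $j_{aj}(\bar\alpha^{aj}+\bar\alpha^{ja})$ invariant and equal to $j_{ij}\xi^{ij}$. The remaining work — tracking the multiplicities (the $\tfrac13$ from each face average, the factor $2$ from the $(i,j)$ symmetrisation, and the two values of $b$) so that everything lands on the normalisations in (\ref{defang}) — is routine index bookkeeping once the ranges are written out carefully.
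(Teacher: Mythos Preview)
Your proposal is correct and the underlying algebra is the same as the paper's, but your organization is the mirror image of theirs. The paper starts from the \emph{target} term $\sum_{(a,i,j)}k^{a}_{ij}\alpha^{a}_{ij}$, expands $\alpha^{a}_{ij}$ via its definition, and manipulates until it reads $2I-\sum_{(a,i)}j_{ai}\xi^{ai}$; you instead start from $I$ and split each $\alpha^{aj}_{i}$ into its face-average $\bar\alpha^{aj}$ plus a traceless remainder. Both routes hinge on the same two identities --- $\sum_{i\neq a,j}k^{a}_{ij}=2j_{aj}$ to collapse the average piece, and $\bar\alpha^{aj}+\bar\alpha^{ja}=\xi^{aj}$ to pair glued faces --- so the computations are equivalent line for line. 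What your framing buys is a transparent reason \emph{why} the split works: the average absorbs exactly the gauge freedom $\theta^{aj}$, so the traceless remainder is manifestly built from the intrinsic angles $\alpha^{aj}_{ik}$ and must land on the $k\alpha$ term. The paper's direct verification is terser but less motivated.
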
 
\begin{proof}
Lets first recall the expression (\ref{Ims}) for the imaginary part of 
 the on-shell action 
 \be \label{eqn_2I_action}
2{\mathrm{Im}}(S_{\mathrm{onshell}}) \equiv 2 I  = \sum_{a\neq i<j} k_{ij}^{a}  (\alpha_{j}^{ai} + \alpha_{i}^{aj})
=\sum_{(a,i,j)} k_{ij}^{a}  \alpha_{j}^{ai}.
\ee
where we denote by $(i,j)$, $(a,i,j)$ a set of index distinct from each other.
We now evaluate the sum using the symmetries $ k^{a}_{ij}= k^{a}_{ji}$ , $j_{ij}=j_{ji}$ and the relation 
$ \sum_{j} k^{a}_{ij} = 2 j_{ai}$.
\bea \label{eqn_action_proof}
  \sum_{(a,i,j)} k^{a}_{ij} \alpha^{a}_{ij} 
  &=& \frac16 \sum_{(a,i,j,b)} k^{a}_{ij} (\alpha^{ai}_{j} + \alpha^{aj}_{i} - \alpha^{ai}_{b} - \alpha^{aj}_{b} )
  =  \frac13 \sum_{(a,i,j,b)} k^{a}_{ij} (\alpha^{ai}_{j}  - \alpha^{ai}_{b}),  \\
  &=& \frac13 \sum_{(a,i,j)} k^{a}_{ij} \sum_{b\neq (a,i,j)}\left(\alpha^{ai}_{j} - \alpha^{ai}_{b}\right)  
  = 2I - \sum_{(a,i)} (\sum_{j\neq(a,i)} k^{a}_{ij}) ( \frac13\sum_{b\neq (a,i)} \alpha^{ai}_{b}),
 \nn  \\
 &=& 2I - 2\sum_{(a,i)} j_{ai}  ( \frac13\sum_{b\neq (a,i)} \alpha^{ai}_{b})
 = 2I - \sum_{(a,i)} j_{ai} \xi^{ai} \nn.
\eea
 as required.
\end{proof}
Here $\xi^{ij}$ measures the extrinsic curvature of the face $(ij)$ inside the 4-simplex.  It is a generalisation of the 
dihedral angle in the case of twisted geometry.  The angle $\alpha^{a}_{ij}$ is a geometrical angle\footnote{See equation (\ref{eqn_alpha_geo}).} associated with the edge $(ij)$ inside the tetrahedron $a$. 

The first term is a generalization of the Regge action while the second term defines a canonical phase for the intertwiners as was noted in \cite{B}.  See also \cite{FC}.

\subsection{Geometricity and 4d Dihedral angles}

In this section we will discuss the connection between the $\xi^{ij}$ angles and the 4d dihedral angles of a 4-simplex when shape matching is enforced.  To do so we first derive relations between the angles $\xi$ and $\theta$ from the 3-term relations.  Indeed, using the reality condition (\ref{eqn_reality_cond}) and $|z^{a}_{b}\ket\bra z^{a}_{b}| + |z^{a}_{b}][ z^{a}_{b}| = A_{ab}\one$ we have 
\bea
  \left[z^{a}_{i}|z^{a}_{b} \right] \bra z^{b}_{a} | z^{b}_{i}] - [z^{a}_{i}|z^{a}_{b}\ket [ z^{b}_{a} | z^{b}_{i}] = \epsilon_{ab}\epsilon_{ai}\epsilon_{bi} A_{ab} \bra z^{i}_{a}|z^{i}_{b}\ket \\
  \left[z^{a}_{i}|z^{a}_{b} \right] \bra z^{b}_{a} | z^{b}_{i}\ket - [z^{a}_{i}|z^{a}_{b}\ket [ z^{b}_{a} | z^{b}_{i}\ket = \epsilon_{ab}\epsilon_{ai}\epsilon_{ib} A_{ab} \bra z^{i}_{a}|z^{i}_{b}] 
\eea
which are given explicitly by
\bea
  c^{a}_{ib}s^{b}_{ai} - s^{a}_{ib} c^{b}_{ai} e^{i \xi^{ab}_{i}} &=& \epsilon_{ab}\epsilon_{ai}\epsilon_{bi} c^{i}_{ab} e^{i(\xi^{ib}_{a}+\xi^{ab}_i-\xi^{ai}_{b})/2}, \label{eqn_4d_3_term_1}\\
  c^{a}_{ib}c^{b}_{ai} - s^{a}_{ib} s^{b}_{ai} e^{i \xi^{ab}_{i}} &=& \epsilon_{ab}\epsilon_{ai}\epsilon_{ib} s^{i}_{ab} e^{i(-\xi^{ib}_{a}+\xi^{ab}_b-\xi^{ai}_{b})/2}, \label{eqn_4d_3_term_2}
\eea
where $c^{a}_{ij} \equiv \cos  \frac{\theta^{a}_{ij}}2$ and $ s^{a}_{ij}\equiv \epsilon_{ij}\sin  \frac{\theta^{a}_{ij}}2$.  Taking the difference of squares of equations $|(\ref{eqn_4d_3_term_1})|^2 - |(\ref{eqn_4d_3_term_2})|^2$ we get
\be \label{eqn_4d_dihedral}
 -\cos \theta^{a}_{ib}\cos \theta^{b}_{ai} - \epsilon_{ib} \epsilon_{ai} \sin \theta^{a}_{ib} \sin \theta^{b}_{ai} \cos \xi^{ab}_{i} = \cos \theta^{i}_{ab}.
\ee

Another way to derive this relationship is to use the relations $N^{a}_{b} = -N^{b}_{a}$ and $F^{a}_{b} = F^{b}_{a}$ and an argument similar to the one leading to (\ref{eqn_alpha_i_jk}) to show that 
\be
  L^{a}_{bi} \cdot L^{b}_{ai} = \epsilon_{ib} \epsilon_{ai} |L^{a}_{bi}| \cdot |L^{b}_{ai}| \cos \xi^{ab}_{i}.
\ee
Then using the defintion $L^{a}_{bi} = A_{ab} A_{ai} N^{a}_{b} \times N^{a}_{i}$ one arrives at $(\ref{eqn_4d_dihedral})$.

In the twisted picture we have three different $\xi^{ab}_{i}$ for $i\neq a,b$ and $\xi^{ab}$ is their average.  In order for the tetrahedra to glue together into a geometrical 4-simplex we must impose the shape matching conditions (\ref{eqn_shape_match}).  We already noted that when these conditions are satisfied $\xi^{ab}_{i}$ is independent of $i$.  Then as shown in \cite{Bahr:2009qd} equation (\ref{eqn_4d_dihedral}) is the relationship between the 3d and 4d dihedral angles of a  4-simplex\footnote{Note that our convention $\theta^{a}_{ii}=0$ differs from the other convention $\theta^{a}_{ii}=\pi$.}.  

Finally, we note that all the gauge invariant angles are entirely determined by the values of $k_{ij}^{a}$.
First the 3d dihedral angles are determined by the $k_{ij}^{a}$ via (\ref{eqn_k_z})
\be
\left(\sin \frac{\theta_{ij}^{a}}{2}\right)^{2} = \frac{J^{a} k_{ij}^{a}}{4 j_{i}^{a}j_{j}^{a}} 
\ee
and then $\alpha^{ai}_{jk}$ and $\xi^{ab}_{i}$ are related to $\theta^{a}_{ij}$ by (\ref{eqn_alpha_theta}) and (\ref{eqn_4d_dihedral}) respectively.  Furthermore, these relations give an interpretation of $k^{a}_{ij}$ in terms of spherical geometry.  

\section{Discussion}

We introduced a new basis of SU(2) intertwiners which had the advantage of being both discrete and coherent.  Consequently, this basis was found to enjoy many of the advantages of both the orthogonal and coherent intertwiner bases.  

This basis was found to be overcomplete and satisfied a number of linear relations generated by the Pl\"ucker relation.  Using these relations we were able to explicitly derive the scalar products of these states in the 4-valent case.  Interestingly, it was found that these states could also be derived from an auxillary Hilbert space of states, not satisfying these linear relations, by projecting onto the kernel of an operator imposing these constraints.  

In the 4-valent case it was found that the orthogonal basis elements could be derived from the new basis by simply summing over one of the two spins labelling it.  This relationship allowed us to generate the various 15j symbols from a new, more fundamental symbol which we called the 20j symbol.  We also developed a generalized Racah formula for spin network amplitudes on an arbitrary graph.  In the case of a 4-simplex this provided another way of defining the $\{20j\}$ symbol.  

Remarkably, it was found that the closed spin network amplitudes associated with this new basis admit an interpretation in terms of twisted geometries.  In particular we showed that the asymptotics of the 20j implies a classical action for twisted geometry.  Moreover, by studying the asymptotics of the 20j symbol, we were able to derive the asymptotics of the 15j symbol for the first time.  When the shapes of glued triangles were constrained to match, the action was found to reduce to the Regge action (with a canonical phase for the intertwiners).  When these constraints are not satisfied, the geometry is twisted and a generalization of the 4d dihedral angles was found in this case.  This agrees with the analysis given in \cite{Dittrich:2008ar} and \cite{Dittrich:2010ey}.  

What was not determined here is a four dimensional picture when the shape matching conditions do not hold.  A proposal for the Levi-Civita connection in twisted geometry was given in \cite{Haggard:2012pm} which was found to reduce to the usual spin connection in the Regge case.  It would be interesting to compare these results with the classical twisted geometry picture we found here.

\acknowledgments

We would like to thank Bianca Dittrich, Etera Livine, Valentin Bonzom, Eugenio Bianchi, Daniel Terno, and Carlo Rovelli for useful discussions.  JH would like to thank the Natural Sciences and Engineering Research Council of Canada (NSERC) for his post graduate scholarship. Research at Perimeter Institute is supported
by the Government of Canada through Industry Canada and by the Province of Ontario
through the Ministry of Research and Innovation.

\appendix

\section{Proof of Theorem \ref{thm_completeness}} \label{app_completeness}
\begin{proof}
To show this we introduce the following generating functional which depends holomorphically on $n$ spinors $|z_{i}\ket$ and $n(n-1)/2$ complex numbers $\tau_{ij} =-\tau_{ji}$
\be\label{defC}
 {\cal C}_{\tau_{ij}}(z_i)  \equiv  \sum_{[k] } 
 \prod_{i<j} \tau_{ij}^{k_{ij}} (z_i|k_{ij}\ket = e^{\sum_{i<j} \tau_{ij}[z_{i}|z_{j}\ket },
\ee
Here the sum is over all non-negative integers $[k]$ and not just those satisfying (\ref{kj}).  This functional was first considered by Schwinger \cite{Schwinger}.  The scalar product between the generating functionals is
\be\label{CC2}
\left\bra {\cal C}_{\tau_{ij}} | {\cal C}_{\tau_{ij}} \right\ket =  \int \prod_{i}\rd\mu(z_{i})   \left|{\cal C}_{\tau_{ij}}(z_{i})\right|^{2}
 =   \int \prod_{i}\rd\mu(z_{i}) e^{\sum_{i<j} \tau_{ij}[z_{i}|z_{j}\ket + \bar{\tau}_{ij} \bra z_{j}|z_{i}]}.
\ee
This integral is Gaussian and can be shown to have the following exact evaluation (for more details see \cite{FH_exact})
\be \label{eqn_scalar_det}
\left\bra {\cal C}_{\tau_{ij}} | {\cal C}_{\tau_{ij}} \right\ket
= \frac{1}{\det(1 + T\overline{T})}
\ee
where $T = (\tau_{ij})$ and $\overline{T} = (\overline{\tau}_{ij})$ are $n \times n$ antisymmetric matrices.  This determinant can be evaluated explicitly and in the case $n=4$ it has the form 
\be\label{det4}
\det(1 + T\overline{T}) = \left(1-\sum_{i<j} |\tau_{ij}|^{2} +  |R(\tau)|^{2}\right)^{2}
\ee
where $R(\tau)= \tau_{12}\tau_{34} +  \tau_{13} \tau_{42}+ \tau_{14}\tau_{23}$.  Notice that $R(\tau)$ is the Pl\"ucker identity and when $\tau_{ij} =[z_{i}|z_{j}\ket$ it vanishes.  Now by expanding the LHS of (\ref{CC2}) in the notation of (\ref{eqn_ST_notation})
\bea \label{eqn_scalar_expanded} 
 \left\bra {\cal C}_{\tau_{ij}} | {\cal C}_{\tau_{ij}} \right\ket&= & \sum_{j_i,S,T} \sum_{j'_i,S',T'} \left\bra S,T \right|\left. S',T' \right\ket \prod_{i<j} \bar{\tau}_{ij}^{k_{ij}} \tau_{ij}^{k_{ij}'}
 \eea
we see that the generating functional contains information about the scalar products of the new intertwiner basis.  
We now have two different ways to evaluate the scalar product for the generating functional 
with $\tau_{ij} = [z_i|z_j\ket$.
On one hand we start from (\ref{eqn_scalar_expanded}) to get 
\bea
 \left\bra {\cal C}_{[z_i|z_j\ket} | {\cal C}_{[z_i|z_j\ket} \right\ket&= &  \sum_{j_i,S,T} \sum_{j'_i,S',T'}(J+1)!^{2} \frac{( z_i | S,T \ket \bra S,T | S',T' \ket \bra S', T' | z_i )}{\|S,T\|^2 \|S',T'\|^2}. 
\eea
here we used the definition of our states and normalization:
\be
 \prod_{i<j} [z_i|z_j\ket^{k_{ij}} = (J+1)!\frac{ ( z_i | S,T \ket}{\|S,T\|^2}.
\ee
On the other hand we can evaluate directly  the product by expanding (\ref{det4}) when $R(\tau)=0$.
This gives
\be
 \left\bra {\cal C}_{[z_i|z_j\ket} | {\cal C}_{[z_i|z_j\ket} \right\ket =
 \sum_{j_i,S,T}(J+1)!^{2}\, \frac{( z_i | S,T \ket \bra S, T | z_i )}{\|S,T\|^2}.
\ee
equating the two expressions gives the identity decomposition
\be
\sum_{S',T'} \frac{\bra S,T | S',T' \ket \bra S', T' | z_i )}{ \|S',T'\|^2} = \bra S, T | z_i ).
\ee
This completes the proof.
\end{proof}
Note that the determinant in (\ref{eqn_scalar_det}) can be evaluated for general $n$ and the projector onto the space of $n$-valent intertwiners of spins $j_i$ has the same form as in (\ref{eqn_res_id}) (see \cite{FH_exact}).

\section{Proof of lemmas \ref{product1} and \ref{product2} } \label{alpha_proof}
In this appendix we show that
\begin{proposition}
The scalar product  (\ref{scalar})is given by 
\bea
  \left\bra S,T \right|\left. S',T' \right\ket &=&
\sum_{j_i,s,t,N}  \frac{(-1)^N}{N!} \frac{(J-N+1)!}{\prod_{i<j} k_{ij}(j_i-N/2,s,t)!} R^{(s,t)}_{(S,T)}(N) R^{(s,t)}_{(S',T')}(N) \\
&=& ( S,T | \Pi_{j_{i}} |  S',T' ).
\eea
\end{proposition}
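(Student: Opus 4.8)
The statement packages Lemma~\ref{product1} (the closed form (\ref{scalar})) together with Lemma~\ref{product2} (its rewriting as a matrix element of the projector $\Pi_{j_i}$). The plan is to prove (\ref{scalar}) by a generating‑functional computation and then to deduce the projector form as a bookkeeping identity, which simultaneously confirms that the operator written in Lemma~\ref{product2} is the orthogonal projector. First I would note that, since $(z_i|S,T\ket=(\prod_{i<j}k_{ij}!)^{-1}\prod_{i<j}[z_i|z_j\ket^{k_{ij}}$ with $k_{ij}=k_{ij}(j_i,S,T)$, the Bargmann product gives $\bra S,T|S',T'\ket=(\prod_{i<j}k_{ij}!\,k'_{ij}!)^{-1}\int\prod_i\rd\mu(z_i)\prod_{i<j}\bra z_j|z_i]^{k_{ij}}[z_i|z_j\ket^{k'_{ij}}$, which by (\ref{eqn_scalar_expanded}) is exactly the coefficient of $\prod_{i<j}\bar\tau_{ij}^{k_{ij}}\tau_{ij}^{k'_{ij}}$ in $\bra{\cal C}_\tau|{\cal C}_\tau\ket$. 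Unlike the proof of Theorem~\ref{thm_completeness}, which only uses the Pl\"ucker slice $R(\tau)=0$, here I keep the full $\tau$‑dependence: for $n=4$, (\ref{eqn_scalar_det})--(\ref{det4}) give $\bra{\cal C}_\tau|{\cal C}_\tau\ket=(1-x+y)^{-2}$ with $x=\sum_{i<j}|\tau_{ij}|^2$ and $y=|R(\tau)|^2=R(\tau)\overline{R(\tau)}$.

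Next I would extract that coefficient. Expanding $(1-x+y)^{-2}=\sum_{m\ge0}(m+1)\sum_{N=0}^m\binom{m}{N}(-1)^N x^{m-N}y^N$, expanding $x^{m-N}$ multinomially, and expanding $R(\tau)^N$ and $\overline{R(\tau)}^N$ by the trinomial theorem over the monomials $\tau_{12}\tau_{34}$, $-\tau_{13}\tau_{24}$, $\tau_{14}\tau_{23}$, I match powers against $\prod_{i<j}\bar\tau_{ij}^{k_{ij}(j_i,S,T)}\tau_{ij}^{k_{ij}(j_i,S',T')}$. Matching total degree forces $m=J-N$, and matching the individual exponents forces, for each $N$, the exponent vector of the $x^{m-N}$ factor to be $\big(k_{ij}(j_i-N/2,s,t)\big)_{i<j}$ for a reduced labelling $(s,t)\in{\cal N}_{j_i-N/2}$, the signed trinomial coefficient of $\overline{R(\tau)}^N$ to be $R^{(s,t)}_{(S,T)}(N)$ of (\ref{eqn_R_def}) (its three exponents being $s-S+N$, $t-T+N$, $S-s+T-t-N$ and its sign $(-1)^{t-T+N}$ coming from the $-\tau_{13}\tau_{24}$ monomial), and that of $R(\tau)^N$ to be $R^{(s,t)}_{(S',T')}(N)$. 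Collecting the weights, $(m+1)\binom{m}{N}(-1)^N$ times the residual multinomial factor $(m-N)!/\prod_{i<j}k_{ij}(j_i-N/2,s,t)!$ reduces, with $m=J-N$, to $(-1)^N(J-N+1)!/\big(N!\prod_{i<j}k_{ij}(j_i-N/2,s,t)!\big)$, which is the summand of (\ref{scalar}); the sum over $N$ terminates at $\min_i(2j_i)$, and its $N=0$ term is the diagonal $\|S,T\|^2\delta_{S,S'}\delta_{T,T'}$ since $R^{(s,t)}_{(S,T)}(0)=\delta_{s,S}\delta_{t,T}$.

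For the projector form I would insert $\one_{\widehat{\cal H}_{j_i-N/2}}=\sum_{s,t}|s,t)(s,t|/\|s,t\|_{j_i-N/2}^2$ between $\hat R^N$ and $(\hat R^\dagger)^N$ in the $\Pi_{j_i}$ of Lemma~\ref{product2} and evaluate using (\ref{matrixel}) twice; the $R^{(s,t)}_{(S,T)}(N)$ are real, so $(s,t|(\hat R^\dagger)^N|S',T')={}_{j_i}(S',T'|\hat R^N|s,t)_{j_i-N/2}$, and the factorial prefactor telescopes,
\be
\frac{(-1)^N}{N!}\frac{(J-N+1)!(J-2N+1)!}{(J+1)!^2}\,\|s,t\|_{j_i-N/2}^2\left(\frac{(J+1)!}{(J-2N+1)!}\right)^{2}=\frac{(-1)^N}{N!}\frac{(J-N+1)!}{\prod_{i<j}k_{ij}(j_i-N/2,s,t)!},
\ee
where $\|s,t\|_{j_i-N/2}^2=(J-2N+1)!/\prod_{i<j}k_{ij}(j_i-N/2,s,t)!$ was used. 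Summing over $N$ and $(s,t)$ this is the summand of (\ref{scalar}), so $(S,T|\Pi_{j_i}|S',T')=\bra S,T|S',T'\ket$. This also proves Lemma~\ref{product2}: the displayed operator is self‑adjoint (real coefficients and $(\hat R^N(\hat R^\dagger)^N)^\dagger=\hat R^N(\hat R^\dagger)^N$), it acts as the identity on $\ker\hat R^\dagger$ (since $(\hat R^\dagger)^N$ annihilates that subspace for $N\ge1$), and it carries the inner product of the physical quotient ${\cal H}_{j_i}=\widehat{\cal H}_{j_i}/\mathrm{Im}\,\hat R$; a self‑adjoint operator with those properties is the orthogonal projector onto $\ker\hat R^\dagger\cong{\cal H}_{j_i}$.

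The hard part will be the index‑chasing in the second paragraph: checking that the twelve exponents of the target monomial, together with the constraint $m=J-N$, force all summation indices (the six exponents of $x^{m-N}$ and the two triples of trinomial exponents) into a two‑parameter family labelled by $(s,t)$, and that the factorials and signs then assemble into the compact form above. The reduced spins $j_i-N/2$ arise because each of the $N$ Pl\"ucker factors consumes one unit of homogeneity from two of the six edges; the main thing to verify carefully is that $k_{ij}(j_i-N/2,s,t)$ remains a non‑negative integer precisely on the asserted summation range.
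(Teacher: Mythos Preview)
Your proposal is correct and follows essentially the same route as the paper's proof: expand the generating functional $(1-\sum|\tau_{ij}|^2+|R(\tau)|^2)^{-2}$, use the trinomial expansion of $R(\tau)^N$ and $\overline{R(\tau)}^N$ (which the paper packages as relation (\ref{rel})) to produce the coefficients $R^{(s,t)}_{(S,T)}(N)$ after the shift $j_i\to j_i-N/2$, and then insert the auxiliary resolution of identity together with (\ref{matrixel}) to rewrite the result as $(S,T|\Pi_{j_i}|S',T')$. Your additional closing remark---that the displayed operator is self-adjoint, acts as the identity on $\ker\hat R^\dagger$, and reproduces the physical inner product, hence must equal the orthogonal projector---is a useful supplement that the paper leaves implicit.
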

\begin{proof}
To prove this we will use start from the computation of the scalar product of the generating functionals (\ref{eqn_scalar_expanded}) and its evaluation (\ref{det4}) which reads
\be \label{eqn_ST_loops}
  \sum_{j_i,S,T} \sum_{j'_i,S',T'} \left\bra S,T \right|\left. S',T' \right\ket \prod_{i<j} \bar{\tau}_{ij}^{k_{ij}(j_i,S,T)} \tau_{ij}^{k_{ij}(j_i,S',T')} = \left( 1 - \sum_{i < j} |\tau_{ij}|^2 + |R(\tau)|^2 \right)^{-2},
\ee
where $R(\tau) = \tau_{12}\tau_{34} + \tau_{13}\tau_{42} + \tau_{14}\tau_{23}$.  Expanding the RHS of (\ref{eqn_ST_loops}) gives
\be
 \sum_{[k],N}  \frac{(-1)^N(J+N+1)! }{N !}  |R(\tau)|^{2N} \prod_{i<j} \frac{|\tau_{ij}|^{2k_{ij}}}{k_{ij}!}.
\ee
and by shifting $j_i \rightarrow j_i - N/2$ and using the relations (\ref{rel}) this becomes
\be
  \sum_{j_i,s,t,N}  \frac{(-1)^N(J-N+1)!}{N!} \sum_{S,T} \sum_{S',T'} R^{(s,t)}_{(S,T)}(N) R^{(s,t)}_{(S',T')}(N) \prod_{i<j} \frac{\tau_{ij}^{k_{ij}(j_i,S,T)}{\bar{\tau}}_{ij}^{k_{ij}(j_i,S',T')}}{k_{ij}(j_i-N/2,s,t)!} 
\ee
Now comparing coefficients of this with the LHS of (\ref{eqn_ST_loops}) gives the desired result, that is 
\be
\left\bra S,T \right|\left. S',T' \right\ket=
\sum_{j_i,s,t,N}  \frac{(-1)^N}{N!} \frac{(J-N+1)!}{\prod_{i<j} k_{ij}(j_i-N/2,s,t)!} R^{(s,t)}_{(S,T)}(N) R^{(s,t)}_{(S',T')}(N).
\ee
This can also be written as 
\be
\left\bra S,T \right|\left. S',T' \right\ket=
\sum_{j_i,s,t,N}  \frac{(-1)^N}{N!}||s,t||_{j_{i}-N/2}^{2 }  \frac{(J-N+1)!}{(J-2N +1)!} R^{(s,t)}_{(S,T)}(N) R^{(s,t)}_{(S',T')}(N).
\ee
 Using the expression (\ref{matrixel}) for the matrix elements of $\hat{R}^{N}$:
 \be
{}_{j_{i}} (S,T| \hat{R}^{N}  | s,t)_{j_{i}-N/2} = ||s,t||_{j_{i}-N/2}^{2 }  \frac{(J+1)!}{(J-2N +1)!} 
  R^{(s,t)}_{(S,T)}(N),
\ee
this expression reads
\bea
&&\sum_{N,s,t}\frac{(-1)^{N}}{N!} \frac{(J-N+1)!(J-2N+1)!}{(J+1)!^{2}} \, 
\frac{( S,T |\hat{R}^{N}|s,t)_{j_{i}-N/2}(s,t|(\hat{R}^{\dagger})^{N} |S',T')}{||s,t||_{j_{i}-N/2}^{2}}\nn\\
&=& \sum_{N=0}^{\mathrm{min}(2j_{i})}\frac{(-1)^{N}}{N!} \frac{(J-N+1)!(J-2N+1)!}{(J+1)!^{2}} \,\nn 
( S,T |\hat{R}^{N}(\hat{R}^{\dagger})^{N} |S',T')\\
&=& ( S,T | \Pi_{j_{i}} |  S',T' ).\nn
\eea
where we have used the decomposition of the identity in the second line.
\end{proof}
\section{Proof of Equation (\ref{eqn_R_delta})} \label{app_R_delta}

Writing $a = N+s-S$ and $b=N+t-T$ with $a,b\geq 0$ and $a+b\leq N$ we get
\bea
\sum_{T} R^{(s,t)}_{(S,T)}(N) = \sum_{b=0}^{N-a} \frac{(-1)^{b}N!}{a!b!(N-a-b)!} = \sum_{b=a}^{N} (-1)^{b-a} \binom{N}{b} \binom{b}{a} = \delta_{N,a} = \delta_{S,s},
\eea
where we used a standard binomial identity in the last step.  Similarly we use the same identity to show that
\bea
\sum_{S} (-1)^{k_{23}(j_i,S,T)} R^{(s,t)}_{(S,T)}(N) = (-1)^{k_{23}(j_i,s,t)} \sum_{a=0}^{N-b} \frac{(-1)^{a}N!}{a!b!(N-a-b)!} = (-1)^{k_{23}(j_i,s,t)} \delta_{T,t}.
\eea

\section{Resummations of scalar products} \label{sec_scalar_products}

\begin{proposition}
The summations in (\ref{eqn_usual_sc_prods}) can be performed explicitly.  The first of which has the form
\be
\left\bra S\right|\left.S' \right\ket  =\frac{\delta_{S,S'}}{2S+1}  \Delta^{2}(j_{1}j_{2}S) \Delta^{2}(j_{3}j_{4}S)
\ee
where $\Delta(j_{1}j_{2}j_{3})$ is defined in (\ref{eqn_tri_coeff}).
\end{proposition}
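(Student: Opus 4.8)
The plan is to evaluate the double sum $\sum_{T,N}\alpha_{J,N}\|S,T\|^2_{j_i-N/2}$ appearing in (\ref{eqn_usual_sc_prods}) by performing the $T$-sum and then the $N$-sum, each of which reduces to an instance of the Vandermonde--Chu convolution.

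\emph{First the $T$-sum.} Fix $N$ and write $j_i'\equiv j_i-N/2$, $J'\equiv J-2N$. Among the six integers $k_{ij}(j_i',S,T)$, the pair $k_{12}=j_1+j_2-S-N$ and $k_{34}=j_3+j_4-S-N$ is independent of $T$, while $k_{13},k_{24},k_{14},k_{23}$ depend on $T$ subject to $k_{13}+k_{14}=p$, $k_{23}+k_{24}=q$, $k_{24}-k_{13}=e$, where
\be
 p\equiv j_1-j_2+S,\quad q\equiv j_2-j_1+S,\quad e\equiv j_2+j_4-j_1-j_3
\ee
are all independent of $N$ (and $p+q=2S$). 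Setting $m=k_{13}$ turns $\sum_T(k_{13}!\,k_{24}!\,k_{14}!\,k_{23}!)^{-1}$ into $\frac{1}{p!\,q!}\sum_m\binom{p}{m}\binom{q}{m+e}$, which by Vandermonde equals $\frac{1}{p!\,q!}\binom{2S}{q-e}=\frac{(2S)!}{p!\,q!\,r!\,r'!}$ with $r\equiv j_3-j_4+S=q-e$ and $r'\equiv j_4-j_3+S$. Hence
\be
 \sum_T\|S,T\|^2_{j_i-N/2}=\frac{(J-2N+1)!}{(j_1+j_2-S-N)!\,(j_3+j_4-S-N)!}\cdot\frac{(2S)!}{p!\,q!\,r!\,r'!}.
\ee
No range issue arises: the binomials vanish automatically outside the region where all $k_{ij}\geq 0$, so the constraint $(S,T)\in{\cal N}_{j_i-N/2}$ is built in.

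\emph{Then the $N$-sum.} Inserting $\alpha_{J,N}=(-1)^N(J-N+1)!/[N!\,(J-2N+1)!]$ cancels the $(J-2N+1)!$; writing $a\equiv j_1+j_2-S$, $b\equiv j_3+j_4-S$ (so $J=a+b+2S$) and $c\equiv 2S+1$, one is left with
\be
 \bra S|S'\ket=\delta_{S,S'}\,\frac{(2S)!}{p!\,q!\,r!\,r'!}\sum_{N\geq 0}\frac{(-1)^N\,(a+b+c-N)!}{N!\,(a-N)!\,(b-N)!}.
\ee
The remaining sum is again Vandermonde in disguise: using $(a+b+c-N)!/[(b+c)!\,(a-N)!]=\binom{a+b+c-N}{a-N}=[x^a]\,x^{N}(1+x)^{a+b+c-N}$ together with $\sum_N y^{N}/[N!\,(b-N)!]=(1+y)^{b}/b!$ evaluated at $y=-x/(1+x)$, the $N$-sum collapses to $(b+c)!\,[x^a]\,(1+x)^{a+c}/b!=(a+c)!\,(b+c)!/(a!\,b!\,c!)$. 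Therefore $\bra S|S'\ket=\delta_{S,S'}(a+2S+1)!\,(b+2S+1)!/[(2S+1)!\,a!\,b!]$, and recognizing from (\ref{eqn_tri_coeff}) that $\Delta^2(j_1j_2S)=(a+2S+1)!/(p!\,q!\,a!)$, $\Delta^2(j_3j_4S)=(b+2S+1)!/(r!\,r'!\,b!)$, and that $(2S+1)!=(2S+1)(2S)!$, yields the claimed formula. The formulas for $\bra T|T'\ket$ and $\bra S|T\ket$ follow from the same two steps applied to the remaining sums in (\ref{eqn_usual_sc_prods}), the last one producing the Racah series for the $6j$ symbol.

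\emph{Main obstacle.} Nothing deep is involved; the difficulty is purely bookkeeping — one must track all six $k_{ij}(j_i-N/2,S,T)$ through the shift $j_i\mapsto j_i-N/2$, isolate the combinations that are independent of the summation variables, and verify that both Vandermonde sums genuinely run over their full natural ranges so that no boundary contributions are dropped. As a consistency check, the same value of $\bra S|S'\ket$ also follows directly from the definition (\ref{eqn_S_def}) of $|S\ket$ as a contraction of two trivalent intertwiners over the channel $S$, using the norm $\Delta^2$ of the trivalent intertwiners and the orthogonality of the $3j$ symbols.
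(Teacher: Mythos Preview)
Your proof is correct and follows essentially the same route as the paper: first perform the $T$-sum via Vandermonde to reduce $\sum_T\prod_{i<j}(k_{ij}!)^{-1}$ to $(2S)!/(k_{12}!k_{34}!p!q!r!r'!)$, then perform the $N$-sum via a second Vandermonde/Chu identity (the paper uses $\sum_k(-1)^k\binom{p}{k}\binom{j+q-k}{q}=\binom{j+q-p}{j}$ directly where you use the equivalent generating-function extraction). Your bookkeeping of the $N$-shift and the identification with $\Delta^2(j_1j_2S)\Delta^2(j_3j_4S)$ matches the paper's computation.
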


\begin{proof}
We wish to perform the summation
\be
  \left\bra S\right|\left.S' \right\ket = \delta_{S,S'} \sum_{t,N} \frac{(-1)^N (J-N+1)!}{N!\prod_{i<j} k_{ij}(j_i-N/2,s,t)!}.
\ee
We can first evaluate the sum over $t$ by noticing that
\be
  \sum_{T} \frac{1}{\prod_{i<j} k_{ij}!} = \frac{1}{k_{12}! k_{34}! (k_{13}+k_{14})!(k_{23}+k_{24})!} \sum_{T} \binom{k_{13}+k_{14}}{k_{13}}\binom{k_{23}+k_{24}}{k_{23}}
\ee
and using Vandermonde's identity\footnote{For proof compare the coefficients in the expansion of $(1+x)^p(1+x)^q = (1+x)^{p+q}$.}
\be \label{vandermonde}
  \sum_{k} \binom{p}{k} \binom{q}{j-k} = \binom{p+q}{j}
\ee
we have
\be
  \sum_{T} \frac{1}{k_{ij}!} = \frac{(2S)!}{k_{12}!k_{34}!k_{1}!k_{2}!k_{3}!k_{4}!}.
\ee
where $k_{1} = k_{13} + k_{14} = j_1 - j_2 + S$, $k_{2} = k_{23} + k_{24} = j_2 - j_1 + S$, $k_{3} = k_{13} + k_{23} = j_3 - j_4 + S$, and $k_{4} = k_{14} + k_{24} = j_4 - j_3 + S$.  Therefore after changing the  variable $j_i$ to $j_{i}-N/2$ we have
\be
   \left\bra S\right|\left.S' \right\ket = \delta_{S,S'} \sum_{N} \frac{(-1)^N (J-N+1)!(2S)!}{N!(k_{12}-N)!(k_{34}-N)!k_{1}!k_{2}!k_{3}!k_{4}!}.
\ee
We can now perform the sum over $N$
\be
  \sum_N (-1)^{N} \frac{(J-N+1)!}{N!(k_{12}-N)!(k_{34}-N)!} = \frac{(j_1+j_2+S+1)!}{k_{12}!}\sum_N (-1)^{N} \binom{k_{12}}{N} \binom{J-N+1}{j_1+j_2+S+1}
\ee
using the identity\footnote{For proof compare the coefficients in the expansion of $(1+x)^p/(1+x)^{q+1} = (1+x)^{p-q -1}$ with $p < q$.}
\be \label{binom_id}
  \sum_{k} (-1)^{k} \binom{p}{k} \binom{j+q- k}{q} = \binom{j+q-p}{j}
\ee
this becomes
\be
  \sum_N (-1)^{N} \frac{(J-N+1)!}{N!(k_{12}-N)!(k_{34}-N)!} = \frac{(j_1+j_2+S+1)!(j_3+j_4+S+1)!}{k_{12}! k_{34}!(2S+1)!}
\ee
and finally
 \be
  \bra S | S' \ket
  = \delta_{S,S'} \frac{(j_1+j_2+S+1)!(j_3+j_4+S+1)!}{(2S+1)k_{12}!k_{34}!k_{1}!k_{2}!k_{3}!k_{4}!}.
\ee
\end{proof}

\section{Racah formula for the 20j symbol} \label{20j_symbol}

In this section we give an explicit parameterization of the 37 $M_C$ in terms of the 17 parameters $p_k$.  Choosing the following parameters: $p_{1} = M_{1324}, p_{2} = M_{1325}, p_{3} = M_{1345}, p_{4} = M_{1354}, p_{5} = M_{1435}, p_{6} = M_{1425}, p_{7} = M_{2345}, p_{8} = M_{2354}, p_{9} = M_{2435}, p_{10} = M_{12345}, p_{11} = M_{12543}, p_{12} = M_{13245}, p_{13} = M_{13254}, p_{14} = M_{13425}, p_{15} = M_{13524}, p_{16} = M_{14235}, p_{17} = M_{14325}$ we can solve for the $M_C$ in (\ref{eqn_kee}) to be
\bea
  M_{123} &=& k^{3}_{12}-p_{1}-p_{2}-p_{12}-p_{13}, \hspace{12pt} M_{145} = k^{1}_{45}-p_{6}-p_{5}-p_{16}-p_{17}, \nonumber \\
  M_{124} &=& k^{4}_{12}-p_{1}-p_{6}-p_{16}-p_{15}, \hspace{12pt} M_{234} = k^{2}_{34}-p_{1}-p_{8}-p_{12}-p_{16}, \nonumber \\
  M_{125} &=& k^{5}_{12}-p_{2}-p_{6}-p_{14}-p_{17}, \hspace{12pt} M_{235} = k^{2}_{35}-p_{2}-p_{7}-p_{13}-p_{17}, \nonumber \\
  M_{134} &=& k^{1}_{34}-p_{1}-p_{4}-p_{13}-p_{15}, \hspace{12pt} M_{245} = k^{2}_{45}-p_{9}-p_{6}-p_{14}-p_{15}, \nonumber \\
  M_{135} &=& k^{1}_{35}-p_{2}-p_{3}-p_{12}-p_{14}, \hspace{12pt} M_{345} = k^{4}_{35}-p_{7}-p_{3}-p_{10}-p_{11}, \nonumber \\
  M_{1234} &=& k^{3}_{24}-k^{2}_{34}+p_{1}+p_{8}+p_{12}+p_{16}-p_{7}-p_{10}-p_{17}, \nonumber \\
  M_{1243} &=& k^{3}_{14}-k^{1}_{34}+p_{1}+p_{4}+p_{13}+p_{15}-p_{3}-p_{14}-p_{11}, \nonumber \\
  M_{1245} &=& k^{5}_{14}-k^{1}_{45}+p_{6}+p_{5}+p_{16}+p_{17}-p_{3}-p_{10}-p_{12}, \nonumber \\           
  M_{1254} &=& k^{5}_{24}-k^{2}_{45}+p_{9}+p_{6}+p_{14}+p_{15}-p_{7}-p_{11}-p_{13}, \nonumber \\
  M_{12354} &=& k^{4}_{15}-k^{1}_{45}+k^{2}_{45}-k^{5}_{24}+p_{7}+p_{5}+p_{11}+p_{16}+p_{17}-p_{4}-p_{14}-p_{15}-p_{9}, \nonumber \\
  M_{12453} &=& k^{4}_{25}-k^{2}_{45}+k^{1}_{45}-k^{5}_{14}+p_{9}+p_{3}+p_{10}+p_{14}+p_{15}-p_{5}-p_{8}-p_{16}-p_{17}, \nonumber \\
  M_{12435} &=& k^{4}_{23}-k^{2}_{34}+k^{1}_{34}-k^{3}_{14}+p_{8}+p_{3}+p_{12}+p_{16}+p_{11}-p_{4}-p_{9}-p_{13}-p_{15}, \nonumber \\
  M_{12534} &=& k^{4}_{13}-k^{1}_{34}+k^{2}_{34}-k^{3}_{24}+p_{4}+p_{7}+p_{13}+p_{15}+p_{10}-p_{5}-p_{8}-p_{12}-p_{16}, \nonumber \\
  M_{1235} &=& k^{3}_{25}-k^{2}_{35}+k^{1}_{45}-k^{4}_{15}+k^{5}_{24}-k^{2}_{45}+p_{2}+p_{13}+p_{9}+p_{4}+p_{14}+p_{15}-p_{8}-p_{5}-p_{11}-2p_{16}, \nonumber \\
  M_{1253} &=& k^{5}_{23}-k^{2}_{35}+k^{1}_{34}-k^{4}_{13}+k^{3}_{24}-k^{2}_{34}+p_{8}+p_{5}+p_{12}+p_{16}+p_{2}+p_{17}-p_{9}-p_{4}-p_{10}-2p_{15}. \nonumber 
\eea          
where $k^{i}_{jk}$ are parameterized in terms of $S_i$ and $T_i$ as in (\ref{int1}) and (\ref{int2}) by the relations $2j_{ij} = \sum_{k} k^{j}_{ik}$ and $2j_{jk} = \sum_{i} k^{j}_{ik}$.


\end{document}